\newcommand\cf{\text{cf}}
\newcommand{\At}{\widetilde{A}}
\newcommand{\Bt}{\widetilde{B}}
\newcommand\Qt{\widetilde Q}
\newcommand\wt[1]{\text{weight}(#1)}
\newcommand{\bea}{\begin{equation}}
\newcommand{\eea}{\end{equation}}
\renewcommand{\ss}{\scriptstyle}
\let\Im\relax
\DeclareMathOperator\Im{Im}
\DeclareMathOperator\Ker{Ker}
\DeclareMathOperator\Hom{Hom}
\DeclareMathOperator\End{End}
\DeclareMathOperator\tr{tr}
\DeclareMathOperator\rank{rank}
\DeclareMathOperator\Spec{Spec}
\newtheorem{lem}{Lemma}
\newtheorem*{lem*}{Lemma}
\newtheorem*{rmk}{Remark}
\newtheorem{prop}{Proposition}
\newtheorem{conj}{Conjecture}
\newtheorem{cor}{Corollary}
\title[Higher spin $R$-matrix from equivariant (co)homology]{Higher spin $\mathfrak{sl}_2$ $R$-matrix from equivariant (co)homology}
\author{Dmitri Bykov}
\address{Dmitri Bykov\newline
Max-Planck-Intitut f\"ur Physik, F\"ohringer Ring 6, D-80805 Munich, Germany\newline
Arnold Sommerfeld Center for Theoretical Physics,  Theresienstrasse 37, D-80333 Munich, Germany\newline
Steklov Mathematical Institute of Russ.~Acad.~Sci., Gubkina str. 8, 119991 Moscow, Russia}
\email{bykov@mpp.mpg.de, bykov@mi-ras.ru}
\author{Paul Zinn-Justin}
\address{Paul Zinn-Justin\newline
School of Mathematics and Statistics, The University of Melbourne, 
Victoria 3010, Australia}
\email{pzinn@unimelb.edu.au}
\thanks{PZJ was supported by ARC grant FT150100232. DB wishes to thank D.~L\"ust and A.~A.~Slavnov for support. 
PZJ~wishes to thank A.~Knutson, Y.~Yang, G.~Zhao for valuable discussions.
Computerized checks of the results of this paper were performed
with the help of Macaulay2~\cite{M2}.
}
\newcommand\rem[2][]{}
\newcommand\M{\mathfrak M}
\renewcommand\P{\mathfrak P}
\renewcommand\S{\mathfrak S}
\newcommand\Z{\mathfrak Z}
\newcommand\CC{\mathbb C}
\newcommand\RR{\mathbb R}
\newcommand\PP{\mathbb P}
\newcommand\ZZ{\mathbb Z}
\begin{document}

\maketitle

\begin{abstract}
We compute the rational $\mathfrak{sl}_2$ $R$-matrix acting in the product of two spin-$\ell\over 2$ (${\ell \in \mathbb{N}}$) representations, using a method analogous to the one of Maulik and Okounkov, i.e., by studying the equivariant (co)homology of certain algebraic varieties. These varieties, first considered by Nekrasov and Shatashvili, are typically singular.  They may be thought of as the higher spin generalizations of $A_1$ Nakajima quiver varieties (i.e., cotangent bundles
of Grassmannians), the latter corresponding to $\ell=1$.
\end{abstract}


\section{Introduction}
In \cite{Nakaj-quiv3, Varagnolo} the cohomology of quiver varieties introduced in~\cite{Nakaj-quiv1, Nakaj-quiv2} was endowed with the action of a Yangian. More precisely, one obtains arbitrary tensor products of fundamental representations this way.
In \cite{MO-qg}, Maulik and Okounkov reinterpreted this construction in the formalism of quantum integrable systems:
via the {\em stable envelope}, they defined the {\em $R$-matrix}\/ which in turn allows to define the Yangian action
thanks to the $RTT$ approach, as first advocated in \cite{KR} (see also~\cite{Drinfeld}).

It is known that any finite-dimensional irreducible representation
of the Yangian of a simple Lie algebra
can be obtained as a subquotient of tensor products of fundamental representations \cite[Cor.~12.1.13]{CP-book}. One could argue that defining geometrically
such tensor products of fundamental representations is all that is therefore needed. We believe that this point of view is
unsatisfactory: one should be able to obtain directly such general finite-dimensional representations as the cohomology of
some appropriate varieties generalizing Nakajima quiver varieties. The relation to tensor products of fundamental representations
(usually called {\em fusion procedure}\/ in the language of quantum integrable systems) should come about as a {\em Lagrangian
correspondence}, which itself should be a degenerate limit of the stable envelope construction (see also \cite{artic64} where fusion between different Nakajima quiver varieties is considered).

This paper is concerned with the case of the simplest quiver -- $A_1$. It should be considered as a preliminary study, 
since we do not try to formulate a general formalism \`a la Maulik--Okounkov (in particular, it does not directly
fit in the framework of the latter since our varieties are singular). 
We go back to the source of (or at least one of the motivations for) the correspondence between quantum integrable systems
and cohomology theories, namely the work of Nekrasov and Shatashvili \cite{NS-short}. They consider a supersymmetric gauge theory in two-dimensional spacetime, whose space of classical groundstates is an algebraic variety, naturally defined in terms of a certain K\"ahler quotient. At the quantum level, and in the presence of the so-called twisted masses, the theory has a finite number of supersymmetric ground states. These ground states may be distinguished by the expectation values of certain scalar operators  (which are the superpartners of the gauge fields) that satisfy a set of algebraic equations, which miraculously may be identified with the Bethe equations. It was conjectured in~\cite{NS-short} that geometrically these Bethe equations may be interpreted as the relations in the quantum equivariant cohomology ring of the original algebraic variety. In this setup the twisted masses play the role of equivariant parameters. The simplest example of this correspondence (the $A_1$-quiver) is given by the cotangent bundle
of Grassmannians, in which case the Bethe equations are those of the homogeneous Heisenberg spin-$1\over 2$ chain. In~\cite{NS-short} one also finds a proposal for the case of arbitrary spin. 
As we shall see, this case is substantially more complicated, and the algebraic varieties that arise are singular.
We study in this setting the concept of stable envelope, 
which is central to \cite{MO-qg}, 
and compute the $R$-matrix explicitly,
based on the stable envelope construction in size $n=2$. Note that the whole integrable structure follows from the $R$-matrix:
in particular, the Yangian action itself can be defined from it in the $RTT$ approach, 
cf \cite[Sect.~5]{MO-qg}, and we shall not repeat here those general arguments.

For simplicity, we only consider here the case of ``pure'' spins (each site of the chain has the same spin). Presumably,
the same approach works for mixed spins.

The structure of the paper is as follows. In section~\ref{sec:NS} we recall the setup of Nekrasov and Shatashvili~\cite{NS-short}, who derived the Bethe equations of the XXX spin chain for arbitrary spin by analyzing the set of vacua of a certain supersymmetric theory. Particularly important is the formula~(\ref{1.5}) for the superpotential, since it is used in section~\ref{sec:AG} as a starting point to define the algebraic varieties studied in the present paper. These varieties come equipped with a toric action, defined in secs.~\ref{sec:pretor} and~\ref{sec:tor}, which can be used to define their equivariant (co)homology. Equivariant localization leads us to consider the fixed points of the torus, which are described in Prop.~\ref{prop:fp}. In sec.~\ref{sec:attsch} we define the attractors of the fixed points under a~$\CC^\times$-action, where $\CC^\times$ sits inside the torus. In sec.~\ref{sec:stabenv}, we propose a characterization of the stable basis, extending in a special case the concept of stable envelope of~\cite{MO-qg} to our setting. In general the existence of the stable basis is taken as a conjecture (Conjecture 2); however in Prop.~\ref{prop:stabcl} we do prove it in the case $n=2$ (two sites of the corresponding spin chain), which is relevant for the calculation of the $R$-matrix. We start sec.~\ref{sec:R} by defining the $R$-matrix through the action of the Weyl group $\mathcal S_n$ of $GL(n)$ on the stable classes. In sec.~\ref{sec:l2ex} we calculate the $R$-matrix for the case of spin $1$ ($\ell=2$), which provides the first non-trivial example beyond quiver varieties. We then analyze the general case in sec.~\ref{sec:general}, using the fact that all relevant varieties are in fact isomorphic to the ones at `half-filling' ($k=\ell$). 
The latter fact turns out to be a non-trivial lemma, proven in Appendix~\ref{app:isom}.

\section{Nekrasov--Shatashvili approach}\label{sec:NS}
\subsection{Supersymmetry in $\mathbb{R}^{1,1}$}
In the paper \cite{NS-short} the authors consider $\mathcal{N}=(2,2)$ supersymmetric (SUSY) gauge theories with matter fields in 2D Minkowski space $\mathbb{R}^{1,1}$. This theory has four real supercharges, and the corresponding coordinates in superspace will be denoted $\theta^+, \theta^-, \bar{\theta}^+, \bar{\theta}^-$. This type of SUSY is very much like $\mathcal{N}=1$ SUSY in 4D. In the latter case it is well-known that the integrals of $D$-terms of general superfields and the integrals of $F$-terms of chiral superfields are invariant under SUSY transformations. A chiral superfield $\psi$ in 4D satisfies $\bar{D}_a\psi=0$, where $a=1,2$ is a spinor ($SL(2,\CC)$) index. In 2D this condition corresponds to $\bar{D}_+\psi=\bar{D}_-\psi=0$, where $\pm$ are the light-cone indices. The two conditions transform independently under the Lorentz group in 2D, which is the multiplicative $\mathbb{R}_+$ diagonally embedded in $SL(2,\CC)$. Therefore in 2D, as compared to 4D, it is possible to introduce a different type of supersymmetric multiplet (field) $\Sigma$, called twisted chiral, characterized by the condition $D_+\Sigma=\bar{D}_-\Sigma=0$. This multiplet is particularly important, as the gauge field strength is part of such multiplet. Picking out $D$- and $F$-terms requires (Berezin)-integrating over the corresponding superspace coordinates, so the most general SUSY-invariant action has the form
\begin{eqnarray}\label{genaction}
\mathcal{S}=\int d^2x\!\!\!\!\!\!&&\left(\int\,d\theta^+ d\theta^- d\bar{\theta}^+ d\bar{\theta}^-\,\mathcal{K}(\psi, \bar{\psi}, \Sigma, \bar{\Sigma}, V)+\right.\\ \nonumber &&\left.+\int\,d\theta^+ d\theta^-\,\mathcal{F}(\psi)+\int\,d\bar{\theta}^+ d\theta^-\,\tilde{\mathcal{F}}(\Sigma)+c.c.\right)
\end{eqnarray}
Here $V$ is the gauge superfield, which encorporates the gauge connection $A_\mu$. $\Sigma$ is related to $V$ via $\Sigma=D_+\bar{D}_-V$. In \cite{NS-short} $V$ and $\Sigma$ are related to gauge fields, whereas $\psi, \bar{\psi}$ are the `matter fields', furnishing some representations of the gauge group. In~(\ref{genaction}) $\mathcal{K}$ is the generalized K\"ahler potential, and the function $\mathcal{F}$ is called the superpotential.

\subsection{K\"ahler quotients via supersymmetry}
$\mathcal{N}=(2,2)$ supersymmetry is very natural from the point of view of K\"ahler geometry. In this context the function $\mathcal{K}$ in the above action may be thought of as the K\"ahler potential on the target space of a $\sigma$-model. If the `matter' fields $\psi$ are coordinates on some K\"ahler manifold $M$, and the gauge fields $V$ have no kinetic terms (and if we set $\mathcal{F}=\tilde{\mathcal{F}}=0$), then `integrating out' these auxiliary gauge fields will lead to the target space $\M=M/\!/G$ -- the K\"ahler quotient of the original space w.r.t.\ $G$. Since the K\"ahler quotient formulations are known for flag manifolds, in this way one can write down SUSY actions for $\sigma$-models with such target spaces~\cite{Donagi}. For example, in the case of a Grassmannian $Gr(k, n)$ we may set $M=\CC^{kn}$ and $G=U(k)$.

\subsection{The $D-$ and $\mathcal{F}-$term conditions}
In general, if one has no kinetic terms for the gauge fields, and possibly a non-zero superpotential $\mathcal{F}$, the  action contains the following terms:
\begin{equation}\label{1}
\int d^2 x\,\left(\mathrm{Tr}(D \mu_\RR)+\|\nabla \mathcal{F}\|^2\right)\,.
\end{equation}
Here $\mu_\RR$ is a function of the fields, which is nothing but the (real) moment map for the action of $G$ on $M$. $D$ is an auxiliary field, which in this case serves as a Lagrange multiplier imposing the constraint
\begin{equation}\label{mueq1}
\mu_\RR=0\,.
\end{equation}
The second term in~(\ref{1}) is the ordinary potential energy of the chiral fields $\psi$. Restricting to the minimum of this potential, we obtain the \emph{complex} equations
\begin{equation}\label{feq1}
\nabla \mathcal{F}=0\,.
\end{equation}
\subsection{Twisted masses.}
In \cite{NS-short} the authors consider the low-energy effective theory corresponding to the models just introduced. As it turns out, this theory is closely related to the geometry of the algebraic variety defined by~(\ref{mueq1})-(\ref{feq1}). One of the requirements in~\cite{NS-short} is that the low-energy theory is an abelian gauge theory, with gauge group $U(1)^k$, the corresponding `field strength' twisted chiral fields being $\Sigma_1 \ldots \Sigma_k$. In order to eliminate all other fields, one needs to make sure they are massive. This is done by introducing the so-called `twisted masses'~\cite{AlvarezGaume} for the matter fields. The twisted masses are given by a matrix $m\in \mathfrak{h}_{\CC}$, where $\mathfrak{h}\subset \mathcal{G}$ is the Cartan subalgebra of the global symmetry algebra $\mathcal{G}$. For example, in the case of the Grassmannian $Gr(k, n)$ $\sigma$-model the global symmetry is $U(n)$ so there are $n$ complex twisted mass parameters. These will appear as equivariant parameters in the equivariant cohomology $H^\ast_{T^n}(Gr(k, n))$ (or in $H^\ast_{T^n}(\M)$ in the general case).

In principle, one could also introduce masses for $\psi$ by adding quadratic terms in the superpotential, $\mathcal{F}\sim m\,\psi^2$ (this generates mass terms in the Lagrangian via (\ref{1})). In general, however, it is not possible to do this in a gauge-invariant way. For instance, in the case of $Gr(k, n)$ one would need to write a holomorphic $U(k)$-invariant function $\mathcal{F}$ on $\CC^{kn}$, which does not exist. 

\subsection{The hyper-K\"ahler quotient.}
In certain situations, the equations~(\ref{mueq1}) and~(\ref{feq1}) may be rewritten as the conditions of vanishing of three moment maps, $\vec{\mu}=0$. This requires a non-zero, and very particular, superpotential $\mathcal{F}$ and gives rise to a hyper-K\"ahler quotient rather than a K\"ahler quotient. From the point of view of the $\sigma$-model this involves an enhancement of SUSY from $\mathcal{N}=(2,2)$ to $\mathcal{N}=(4,4)$. One prototypical situation where this happens is when we deal with the target space $\M=T^\ast(Gr(k,n))$, which can be rewritten as a hyper-K\"ahler quotient of flat space:
\begin{equation}
T^\ast(Gr(k,n))=\CC^{2kn}/\!/_{\text{HK}}\,U(k)
\end{equation}
Defining the action of $U(k)$ on $\CC^k$ in the standard way, we will parametrize the space $\CC^{2kn}$ by two matrices $Q\in \Hom(\CC^n, \CC^k), \tilde{Q}\in \Hom(\CC^k, \CC^n)$. The corresponding superpotential is
\begin{equation}
\mathcal{F}=\mathrm{Tr}(Q\tilde{Q}\Phi),
\end{equation}
where $\Phi\in \End(\CC^k)$ is a $k\times k$-matrix -- an additional field.  Assuming the K\"ahler potential $\mathcal{K}$ entering~(\ref{genaction}) is that of flat space, the real moment map featuring in~(\ref{1}) is
\begin{equation}\label{realmommap1}
\mu_\RR=QQ^\dagger-\Qt^\dagger\Qt+[\Phi, \Phi^\dagger]+r\,\mathbf{1}_k\,=0,
\end{equation}
$r$ being the Fayet--Iliopoulos parameter. Throughout this paper we will assume $r>0$. The equations $\nabla \mathcal F=0$ are now
\begin{equation}
\Qt\Phi=0,\quad\quad \Phi Q=0,\quad\quad \mu_{\CC}:=Q\Qt=0\,.
\end{equation}
As the notation suggests, the last equation is the complex moment map relevant for the hyper-K\"ahler quotient. It may be shown that the equations imply $\Phi=0$, so the matrix field $\Phi$ effectively drops out (a more general case is discussed below in Sec.~\ref{higherspingen}).

As a side comment, we note that one could get rid of the fields $Q$ and $\Phi$ by introducing twisted masses for these fields (called $\varepsilon$ and $\varphi$ below) and then sending these masses to infinity. From the point of view of equivariant cohomology, this corresponds to the reduction from $H^\ast_{T^n}(T^\ast Gr(k, n))$ to $H^\ast_{T^n}(Gr(k, n))$, and at the level of the $\sigma$-model one obtains an asymptotically free theory from a conformal one. This is quite parallel to the constructions in four-dimensional gauge theory~\cite{GMN}, where conformal models are the basic ones, from which asymptotically free theories may be obtained in analogous limits.

\subsection{The effective action and Bethe equations.}
Let us now return to the construction of the low-energy theory, once the twisted masses are turned on. Integrating out massive fields produces an effective action for $\Sigma$, which is of the form~(\ref{genaction}) with $\mathcal{F}_{\mathrm{eff}}=0$ and $\mathcal{K}_{\mathrm{eff}}=\mathcal{K}_{\mathrm{eff}}(\Sigma, \bar{\Sigma})$.
If one eliminates the supercoordinates (by performing Berezin integration), one will obtain a term in the effective action of the form $\int\,d^2x\,\|\nabla \tilde{\mathcal{F}}_{\mathrm{eff}}\|^2$, so in the discussion of the low-energy theory one may, once again, restrict to
\begin{equation}
\nabla \tilde{\mathcal{F}}_{\mathrm{eff}}=0\,.
\end{equation}
Note that, due to $\mathcal{N}=({2,2})$ SUSY, there is only a one-loop contribution to $\tilde{\mathcal{F}}_{\mathrm{eff}}$, which can be calculated exactly. The claim of \cite{NS-short} is that $\tilde{\mathcal{F}}_{\mathrm{eff}}$ is the Yang--Yang functional corresponding to the $\mathit{XXX}$ spin-${1\over 2}$ chain of length $n$ with $k$ excitations, and $\nabla \tilde{\mathcal{F}}_{\mathrm{eff}}=0$ are the corresponding Bethe equations. The interpretation is that the solutions should be in one-to-one correspondence with the supersymmetric vacua of the original system.

\subsection{Higher spin generalizations.}\label{higherspingen}
One may now invert the argument and ask, what field theoretic (geometric) situation would lead to an $\mathit{XXX}$ spin-$s$ chain. The answer given in~\cite{NS-short} is that one should take a different superpotential:
\begin{equation}\label{1.5}
\mathcal{F}=\mathrm{Tr}(Q\Qt\Phi^{\ell})\,.
\end{equation}
where $\ell$ is related to the spin as $\ell=2s$. This leads to the equations
\begin{align}\label{2}
\Qt\Phi^{\ell}&=0,
\\\label{2.5}
\Phi^{\ell} Q&=0,
\\
\label{3} \sum\limits_{i=0}^{\ell-1}\,\Phi^i Q\Qt \Phi^{\ell-i-1}&=0\,.
\end{align}
One still has the same real moment map equation~(\ref{realmommap1}). One should think of this generalized situation as follows: one has a complex algebraic set $M$ defined by the equations \eqref{2}--\eqref{3} inside $\CC^{2kn}\oplus \CC^{k^2}$ (the latter summand being related to the matrix $\Phi$), and we are taking its K\"ahler quotient w.r.t.\ $U(k)$. From this perspective the real moment map equation determines the `stable set' inside $M$.

We derive an important consequence of the above equations. The equations (\ref{2}--\ref{2.5}) and \eqref{realmommap1} above imply
\bea\label{nilpotent}\Phi^\ell=0
\eea
for $r\neq 0$. 
To see this, first we multiply (\ref{realmommap1}) by $[\Phi^\ell, (\Phi^\dagger)^\ell ]$ and take the trace. Using (\ref{2}), as a result we get
\bea
\tr((\Phi^\dagger)^\ell Q Q^\dagger \Phi^\ell)+\tr(\Phi^\ell \Qt^\dagger \Qt (\Phi^\dagger)^\ell)+\tr([\Phi^\ell, (\Phi^\dagger)^\ell ][\Phi, \Phi^\dagger])=0\,.
\eea
Swapping the commutator twice in the last term, we can show that $$\tr([\Phi^\ell, (\Phi^\dagger)^\ell ][\Phi, \Phi^\dagger])=\tr([\Phi, (\Phi^\dagger)^\ell][\Phi^\ell, \Phi^\dagger]).$$ As a result, the above equation takes the form
\bea
\|A\|^2+\|B\|^2+\|C\|^2=0\,,
\eea
where
\bea\label{ABCzero}
A=Q^\dagger \Phi^\ell,\quad\quad B=\Phi^\ell \Qt^\dagger,\quad\quad C=[\Phi^\ell, \Phi^\dagger]\,.
\eea
This shows that $A=B=C=0$. We now multiply (\ref{realmommap1}) by $\Phi^\ell (\Phi^\dagger)^\ell$ and take the trace. Using (\ref{2}) and (\ref{ABCzero}), we arrive at $r\,\tr(\Phi^\ell (\Phi^\dagger)^\ell)=0$, implying $\Phi^\ell=0$ for $r\neq 0$.

\subsection{The torus of global symmetries.}\label{sec:pretor}
We now answer the following question: what is the most general torus leaving the algebraic set defined by (\ref{2})-(\ref{3})  invariant? Clearly, one still has the Cartan torus $(\CC^\times)^{n}$ of $GL(n, \CC)$. However, one has two additional $\CC^\times$-actions:
\bea
(\CC^\times)_1:\quad Q\to e^\varepsilon\,Q,\quad\quad (\CC^\times)_2:\quad \Phi\to e^\varphi\,\Phi\,.
\eea
(If one takes into account the real moment map equation as well, one should restrict to the unitary subgroups.) 
Notice, however, that this transformation does not, in general, preserve the superpotential $\mathcal{F}$, as under this transformation $\mathcal{F}\to e^{\varepsilon+ \ell \,\varphi} \mathcal{F}$, so in order for the action~(\ref{genaction}) to be invariant, one would need to rescale the supercoordinates $\theta$, i.e., perform a compensating $R$-symmetry transformation. This leads to an obstacle in introducing a twisted mass for the transformation, under which the superpotential is non-trivially scaled. To see this, let us recall that twisted masses may be introduced by dimensionally reducing a 4D theory of chiral superfields on $S^1\times S^1$ equivariantly w.r.t.\ some abelian global symmetry. In the simplest example, imagine one has, in a 4D theory, a single complex scalar field $\phi(x_0, x_1, x_2, x_3)$ and a Weyl fermion $\chi(x_0, x_1, x_2, x_3)$, and the theory itself has charge symmetry $\phi\to e^{i q_1 \alpha} \phi$, $\chi \to e^{i q_2 \alpha} \chi$ ($q_1, q_2 \in \mathbb{Z}$ being the charges). The equivariant dimensional reduction in question along the directions $x_2, x_3$ amounts to assuming the following dependence of $\phi, \chi$ on these coordinates:
\bea\label{equivdimred}
\phi=e^{i q_1 (m_1 x_2+m_2 x_3)} \tilde{\phi}(x_0, x_1),\quad\quad \chi=e^{i q_2 (m_1 x_2+m_2 x_3)} \tilde{\chi}(x_0, x_1)
\eea
In this case $m:=m_1+i m_2\in \CC$ is the twisted mass for the respective $U(1)$ symmetry. In fact, the applicability of~(\ref{equivdimred}) makes no use of the fact that the original 4D theory was supersymmetric. Let us now assume that $\phi$ and $\chi$ are the two components of a chiral superfield $\boldsymbol{\phi}$ in a SUSY 4D theory. In other words,
\bea\label{superfield}
\boldsymbol{\phi}=\phi+\theta^a\,\chi_a+\ldots\,,
\eea
where $\ldots$ denotes higher terms in the $\theta$-expansion. The R-symmetry, by definition, acts non-trivially on the superspace coordinates, $\theta^a \to e^{i\beta}\,\theta^a$. As a result (since the superfield $\boldsymbol{\phi}$ transforms homogeneously) the $R$-charges of $\phi$ and $\chi$ are unequal ($q_1\neq q_2$). In this case, however, the Ansatz (\ref{equivdimred}) is not compatible with the superfield structure (\ref{superfield}). One can rephrase this by recalling that the superfield structure leads to the SUSY transformation $\delta \phi\sim \chi$, which again is not compatible with (\ref{equivdimred}) for $q_1\neq q_2$.


The conclusion is that the twisted masses may be introduced for the torus
\bea\label{eq:torus}
T:=T_0\times \CC^\times\,,
\eea
where $T_0\simeq (\CC^\times)^n$ is the Cartan torus of $GL(n, \CC)$, and the second factor is the subgroup $\CC^\times\subset (\CC^\times)_1\times (\CC^\times)_2$ specified by the choice
\bea
\varepsilon+ \ell \,\varphi=0\,.
\eea

\section{Algebro-geometric construction}\label{sec:AG}
We now reformulate the ideas of the previous section in a (complex) algebro-geometric setting.

\subsection{The varieties}\label{sec:var}
As before, we fix three integers $n,k\ge 0$, $\ell\ge1$.
We work over the field of complex numbers. Let $V$ (resp.\ $W$) be a vector space
of dimension $k$ (resp.\ $n$).

We consider the function $\mathcal F$ on $\Hom(W,V)\oplus
\Hom(V,W)\oplus\End(V)$ given by
\begin{equation}\label{eq:defF}
\mathcal F(Q,\Qt,\Phi)=\tr(Q \Qt \Phi^\ell)
\end{equation}
and its critical locus
\begin{equation}\label{eq:defpreM}
M(k,n,\ell)=\left\{(Q,\Qt,\Phi): d\mathcal{F}=0\right\}
\end{equation}
The explicit equations for the critical locus were already given above: \eqref{2}--\eqref{3}.

$GL(V)$ acts on $M$ in the natural way, so that we may define the {\em GIT quotient}
\begin{equation}\label{eq:defM}
\M(k,n,\ell)=M(k,n,\ell)/\!/GL(V)
\end{equation}
Note that we have substituted the nonholomorphic equation \eqref{realmommap1} with taking the quotient
by the complexified gauge group $GL(V)\cong GL(k)$. The equivalence of these two operations is known in the setting
of smooth, projective varieties (see \cite{Kirwan-book}); here, we just take \eqref{eq:defM} as our
definition of $\M(k,n,\ell)$.

When there is no risk of confusion, we shall suppress the dependence on $k,n,\ell$ and simply
write $\M=\M(k,n,\ell)$, etc.

The semi-stable set is given by the condition
\begin{equation}\label{eq:stab}
\rank
\begin{pmatrix}\Qt\\\Qt\Phi\\\vdots\\\Qt\Phi^{\ell-1}
\end{pmatrix}
=k
\end{equation}
Given $(Q,\Qt,\Phi)$ in the stable set, its stabilizer in $GL(V)$
can be computed by noting that the columns of $\Phi^i Q$, $i=0,\ldots,\ell-1$ span $\CC^k$, so that
any element of $GL(V)$ acts by left multiplication (if it stabilizes $\Phi$) and leaves them invariant (if it stabilizes
$\Qt$) must be equal to the identity.
Furthermore, by the same token, an orbit is uniquely characterized by the $k\times k$ minors of the matrix in \eqref{eq:stab}
(which fixes $\Phi$ and $\Qt$ up to $GL(V)$ action) and by the $\Qt \Phi^i Q$ (which fixes $Q$). More details
about this parameterization will be given in Sect.~\ref{sec:param}. In any case, we conclude that all orbits
are closed and so the stable set equals the semi-stable set. We denote it with the superscript $s$.
\rem[gray]{that has implications for singularities (only orbifold?). what about (co)homology?
rationally smooth?}

The same argument also implies that the stability condition \eqref{eq:stab} combined with \eqref{2} implies,
just like the real moment map equation \eqref{realmommap1}, that $\Phi^\ell=0$, so that we may equivalently
write
\begin{equation}\label{eq:defM2}
\M=\left\{(Q,\Qt,\Phi):
\Phi^\ell=0,\ 
\sum_{i=0}^{\ell-1}\Phi^i Q\Qt \Phi^{\ell-1-i}=0
\right\}/\!/ GL(V)
\end{equation}
$\M$ 
is in general neither reduced nor irreducible. Both problems can be cured
by observing the following:
the two equations defining $\M$ can be recombined as
\[
(\Phi+\epsilon Q\Qt)^\ell=0\pmod {\epsilon^2}
\]
so that if we define
\begin{equation}
\mathcal{O}=\{\Phi\in \End V: \Phi^\ell=0\}
\end{equation}
then
\begin{equation}
\M=
\left\{(Q,\Qt,\Phi):
\Phi\in \mathcal{O},\ Q\Qt \in T_\Phi \mathcal{O}
\right\}/\!/ GL(V)
\end{equation}
Set-theoretically,
$\mathcal{O}$ is a union of $GL(V)$ orbits inside $\End V$. Each orbit is given by fixing the Jordan
form of $\Phi$, which can be described by a
partition $\lambda=(\lambda_1\ge\lambda_2\ge\cdots)$, where the $\lambda_i$ are the sizes of the 
Jordan blocks of $\Phi$
(satisfying $1\le\lambda_i\le \ell$ and $\sum_i \lambda_i=k$).
The open orbit, denoted $\mathcal{O}_1$, 
corresponds to the largest partition in dominance order, namely
$\lambda=(\ell^q r)$,
writing $k=q\ell+r$, $0\le r<\ell$.
Consider $\mathcal{O}_1$ with its reduced structure, so that in particular it is smooth,
and then define
\begin{equation}\label{eq:defM1o}
\M^o_1=
\left\{(Q,\Qt,\Phi):
\Phi\in \mathcal{O}_1,\ Q\Qt \in T_\Phi \mathcal{O}_1
\right\}^s/ GL(V)
\end{equation}
and
\begin{equation}
\M_1=\overline{\M^o_1}\subset \M
\end{equation}
$\M_1$ (with its reduced structure) is a subvariety of $\M$, as will be proved in the next section.

\rem[gray]{the problem with this restrictive def of $\M_1$ is, it's not obvious that the attracting scheme
sits inside it. seems true on all ex though including the nasty $k=n=3$, $\ell=2$}

\begin{rmk}
There is a natural map from $\M(k,n,\ell)$ to $\M(k,\ell n,1)\cong T^*Gr(k,\ell n)$ 
that associates to $(Q,\Qt,\Phi)$
\[
Q^{(1)}=\begin{pmatrix}\Phi^{\ell-1}Q&\ldots& Q
\end{pmatrix}
\qquad
\Qt^{(1)}=\begin{pmatrix}\Qt\\\Qt\Phi\\\vdots\\\Qt\Phi^{\ell-1}
\end{pmatrix}
\]
($\Phi=0$ at $\ell=1$). This is a first step towards a geometric interpretation of the fusion procedure
mentioned in the introduction.
\end{rmk}

\subsection{Poisson / symplectic structure}\label{sec:poisson}
We want to define a Poisson structure on $\M_1$.
We start with the following lemma:
\begin{lem}\label{lem:stab}
Given $\Phi\in \mathcal O_1$, the commutant $\mathrm{stab}(\Phi):=\Ker \mathrm{ad}_{\Phi}$ is
generated by $\sum\limits_{i+j=\ell-1} \Phi^i X \Phi^j$, $X\in \End V$, and $\Phi^i$, $i=0,\ldots,\ell-2$.
\end{lem}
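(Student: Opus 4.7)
I would prove the lemma by reducing to a block-wise analysis along the Jordan decomposition of $\Phi$. Since $\Phi\in\mathcal O_1$, a basis of $V$ adapted to the Jordan type $\lambda=(\ell^q r)$ gives a splitting $V=V_0\oplus V_1\oplus\cdots\oplus V_q$ with $\dim V_0 = r$, $\dim V_a=\ell$ for $a\ge 1$, and $\Phi|_{V_a}$ a single Jordan block on each piece. Both $\mathrm{stab}(\Phi)$ and the map $\Psi(X):=\sum_{i+j=\ell-1}\Phi^i X \Phi^j$ preserve the resulting block decomposition $\End V=\bigoplus_{a,b}\Hom(V_b,V_a)$, so the claim can be analyzed block by block.

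For each block I would use the identification
\[
\Hom(V_b,V_a)\;\cong\;\CC[x,y]/(x^{\lambda_a},y^{\lambda_b}),
\]
(with $\lambda_0=r$ and $\lambda_a=\ell$ for $a\ge 1$), under which left multiplication by $\Phi$ corresponds to multiplication by $x$ and right multiplication by $\Phi$ to multiplication by $y$. Then $\Psi_{ab}$ becomes multiplication by $P:=\sum_{i+j=\ell-1}x^iy^j$, the commutant $\mathrm{stab}(\Phi)_{ab}$ becomes $\Ker(\cdot(x-y))$, and the identity $(x-y)P=x^\ell-y^\ell$ immediately gives $\Im(\cdot P)\subseteq\Ker(\cdot(x-y))$ in every quotient. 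The heart of the proof is the reverse inclusion in every block except $(0,0)$: given $f\in\CC[x,y]$ with $(x-y)f=x^{\lambda_a}A+y^{\lambda_b}B$, I would set $y=x$ to derive a compatibility relation on $A$ and $B$ modulo $(x-y)$, then use the identity $x^m-y^m=(x-y)\sum_{i+j=m-1}x^iy^j$ (with $m=\lambda_a$ or $\lambda_b$) to cancel the factor $(x-y)$ in $\CC[x,y]$ and exhibit $f$ as $Pg$ modulo the ideal $(x^{\lambda_a},y^{\lambda_b})$.

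For the remaining block $(0,0)$ both exponents equal $r<\ell$, and $\Im\Psi_{00}$ is in general a proper subspace of $\mathrm{stab}(\Phi)_{00}$. However $\mathrm{stab}(\Phi)_{00}$ is just the centralizer of the regular nilpotent $\Phi|_{V_0}$, which coincides with $\mathrm{span}\{I,\Phi|_{V_0},\ldots,(\Phi|_{V_0})^{r-1}\}$. Each $\Phi^k\in\End V$ restricts to $(\Phi|_{V_0})^k$ on $V_0$ and to $(\Phi|_{V_a})^k$ on every $V_a$ with $a\ge 1$; by the previous step the latter components already lie in $\Im\Psi_{aa}$, so modulo $\Im\Psi$ the class of $\Phi^k$ is precisely $(\Phi|_{V_0})^k$ in the $(0,0)$-block. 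Consequently $\{\Phi^k\}_{0\le k\le r-1}$ fills the missing dimensions, while the generators $\Phi^k$ for $r\le k\le\ell-2$ happen to lie already in $\Im\Psi$ (explaining, but not contradicting, the stated upper bound $\ell-2$).

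The main obstacle is the syzygy step in the second paragraph, particularly in the asymmetric cases $(a,0)$ and $(0,b)$ where one exponent is $\ell$ and the other is $r<\ell$: one must pass cleanly from $(x-y)f\in(x^\ell,y^r)$ to a factorization $f\equiv Pg\pmod{(x^\ell,y^r)}$, which requires combining the two relations $x^\ell-y^\ell=(x-y)P$ and $x^r-y^r=(x-y)\sum_{i+j=r-1}x^iy^j$ in the polynomial ring in a coordinated way.
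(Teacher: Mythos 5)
Your proposal is correct, and it takes a genuinely different route from the paper. The paper decomposes $V$ by \emph{depth} in the Jordan towers ($V=\bigoplus_k V_k$ with $V_k$ spanned by the $k$-th vectors of all towers), characterizes $\Im(\kappa)$ by the criterion $B(v_0)\in\Im(\Phi^{\ell-r})$, and produces a preimage by factoring each component $B_k$ through $\Phi^{\ell-1-k}$; you instead decompose $\End V$ into the blocks $\Hom(V_b,V_a)$ between Jordan blocks and translate everything into bivariate polynomial algebra. Your version is more structural and, once set up, essentially self-verifying. In particular, the step you flag as ``the main obstacle'' dissolves: in $R=\CC[x,y]/(x^{\lambda_a},y^{\lambda_b})$ one computes directly that $P\,x^sy^t=h_{\ell-1+s+t}$, where $h_d:=\sum_{i+j=d}x^iy^j$, so $\Im(\cdot P)=\mathrm{span}\{h_d:\ \ell-1\le d\le \lambda_a+\lambda_b-2\}$, while $\Ker(\cdot(x-y))=\mathrm{span}\{h_d:\ \max(\lambda_a,\lambda_b)-1\le d\le \lambda_a+\lambda_b-2\}$ (these $h_d$ are nonzero, independent by degree, and their number $\min(\lambda_a,\lambda_b)$ matches $\dim\Hom_{\CC[\Phi]}(V_b,V_a)$). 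Hence the two coincide exactly when $\max(\lambda_a,\lambda_b)=\ell$, i.e.\ in every block except $(0,0)$ — no syzygy chase or separate treatment of the asymmetric blocks $(a,0)$, $(0,b)$ is needed, and the deficiency $\mathrm{span}\{h_{r-1},\dots,h_{\ell-2}\}$ in the $(0,0)$-block is visibly filled by $I,\Phi,\dots,\Phi^{r-1}$ as you say. With that one computation supplied, your argument is complete and arguably cleaner than the paper's; what the paper's approach buys instead is an explicit description of $\Im(\kappa)$ inside $\mathrm{stab}(\Phi)$ without introducing the polynomial model.
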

\begin{proof}
We define the operator $\kappa(X):=\sum\limits_{i+j=\ell-1} \Phi^i X \Phi^j$. It is easy to see that $\kappa(X)\in \mathrm{stab}(\Phi)$. To characterize $\mathrm{stab}(\Phi)\setminus \mathrm{Im}(\kappa)$, we recall a more direct description of $\mathrm{stab}(\Phi)$. Since $\Phi\in \mathcal O_1$, the vector space $V$ may be decomposed into $q$ Jordan towers of length $\ell$, with generating vectors $v_1, \ldots, v_q$ and a remaining smaller tower of length $r<\ell$, with generating vector~$v_0$. We set $W_0=\mathrm{Span}(v_1, \ldots v_q)$,  $V_0=W_0\oplus \CC v_0$ and accordingly $W_k=\Phi^k(W_0)$, $V_k=W_k\oplus \CC \Phi^k(v_0)$ for $k=1, \ldots, \ell-1$. Note that $W_k\simeq W_0$ for all $k$, and $V_k\simeq V_0$ for $k<r$.
Clearly, $V=\oplus_{k=0}^{\ell-1} V_k$, and we define the projection operator $\pi_k: V\to V_k$.

An operator $B\in \mathrm{stab}(\Phi)$ is defined by its action on $V_0$. More exactly, in general $B$ sends $W_0\to V$, and $v_0$ to $\oplus_{k=0}^{r-1} \Phi^k (v_0) \oplus_{l\geq \ell-r} W_l$. Let us show that $B\in \mathrm{Im}(\kappa)$ if and only if $B(v_0)\in \oplus_{l\geq \ell-r} V_l=\mathrm{Im}(\Phi^{\ell-r})$. First, from the definition of $\kappa$ and from the fact that $\Phi^l (v_0)=0$ for $l\geq r$ it is clear that $\kappa(X)(v_0)\in \mathrm{Im}(\Phi^{\ell-r})$.

Now assume the converse, i.e., that $B(v_0)\in \oplus_{l\geq \ell-r} V_l$. Let us restrict $B$ to $V_0$ and consider $\pi_k\circ B\big|_{V_0}=\Phi^k B_k$, where $B_k\in \mathrm{End}(V_0)$. We will now consider an operator $X$ that maps $V_k\to V_0$ for all $k$. In this case $\pi_k\circ \kappa(X)=\Phi^k X \Phi^{\ell-1-k}$. If the operator $X$ additionally satisfies $B_k=X \Phi^{\ell-1-k}$ for all $k$, then $B=\kappa(X)$, as claimed. Therefore the question is whether $B_k\in \mathrm{End}(V_0)$ may be factored through the map $\Phi^{\ell-1-k}\big|_{V_0}$. As remarked earlier, for $\ell-1-k<r$ this map is an isomorphism $V_0\simeq  V_{\ell-1-k}$, so there is no obstruction.
On the other hand, for $\ell-k> r$, $\Phi^{\ell-1-k}$ annihilates $v_0$, and is an isomorphism between $W_0\simeq  W_{\ell-1-k}$. However, by the assumption $B(v_0)\in \oplus_{l\geq \ell-r} V_l$, so that we may set $B_k(v_0)=0$ for $k<\ell-r$. Therefore again we can find $X$ satisfying $B_k=X \Phi^{\ell-1-k}$. Therefore in this case $B=\kappa(X)$.

In general, $B(v_0)=\sum\limits_{k=0}^{r-1} \,c_k\, \Phi^k (v_0)\quad(\mathrm{mod}\quad \mathrm{Im}(\Phi^{\ell-r}))$, so that there is always a combination $\widetilde{B}=B-\sum\limits_{k=0}^{r-1} \,c_k\, \Phi^k$ such that $\widetilde{B}(v_0)\in \mathrm{Im}(\Phi^{\ell-r})$ and hence $\widetilde{B}\in \mathrm{Im}(\kappa)$. Since $r<\ell$, this completes the proof.
\end{proof}

We now claim that $\M_1$ can be obtained by symplectic reduction as follows.
Consider the extended space $T^* \Hom(V,W)\oplus T^* \End(V)$, or
\[
\left\{ (Q,\Qt,\Phi,\alpha):\ Q\in \Hom(W,V),\ \Qt\in\Hom(V,W),\ \Phi\in \End(V),\ \alpha\in \End(V)
\right\}
\]
with its symplectic form
\[
\Omega=\tr(dQ\wedge d\Qt)+\tr(d\Phi\wedge d\alpha)
\]
Then the natural action of $GL(V)$ on this space is given by the moment map
\begin{equation}\label{eq:newmm}
\mu_{\mathfrak{gl}(V)}=Q\Qt+[\Phi,\alpha]
\end{equation}
We also consider the family of Poisson--commuting Hamiltonians given by the entries of $\Phi^\ell$, and
by $\tr \Phi^i$, $i=1,\ldots,\ell-1$. We consider the symplectic reduction w.r.t.\ both, i.e., by
$\End(V) \rtimes GL(V)$. For the $GL(V)$ part we use the same GIT quotient as for $\M$.
We call the result of this symplectic reduction $\M'$. ($\M'$ will not be needed outside of this section.)

We first take the quotient by the additive group $\End(V)$. The effect is to impose the equations
$\Phi^\ell=0$, $\tr \Phi^i=0$, $i=1,\ldots,\ell-1$, and to take the quotient by the action by translation
on $\alpha$ given by differentiating these equations. In the first step, we restrict our space
to $\Hom(W,V)\times \Hom(V,W)\times \mathcal O \times \End(V)$; since $\mathcal O$ is normal
we can further restrict to $\mathcal O_1$ \cite[Sect.~8]{Jantzen}. Because
restriction to $\mathcal O_1$ and taking the $\End(V)$-invariant space of functions obviously commute,
we can now perform the second step,
and recognize exactly the generators of $\text{stab}(\Phi)$
given in Lem.~\ref{lem:stab}, so that the quotient amounts to having $\alpha$ live in $\End(V)/\text{stab}(\Phi)$.

At this stage we can take the quotient by the $GL(V)$ action. We continue the analysis of $\Phi\in \mathcal O_1$,
noting that the equations that follow do depend on $\Phi$, so contrary to the reasoning of the previous paragraph
this is a nontrivial restriction (we shall go back to the general case below).
Imposing $\mu_{\mathfrak{gl}(V)}=0$ in
\eqref{eq:newmm}, i.e.,
\begin{equation}\label{eq:mm}
Q\Qt = [\alpha,\Phi]
\end{equation}
we find that $\alpha\in \End(V)/\text{stab}(\Phi)$ can be eliminated altogether.
Now \eqref{eq:mm} is equivalent to
$Q\Qt \in T_{\Phi}\mathcal O_1$. In fact, we can pick a representative $\Phi_0\in \mathcal O_1$ and simply
write
\begin{equation}
\M^o_1=
\left\{(Q,\Qt):
Q\Qt \in T_{\Phi_0} \mathcal{O}_1
\right\}^s/\, \text{Stab}(\Phi_0)
\end{equation}
where $\text{Stab}(\Phi)=\Ker \text{Ad}_{\Phi}$.

By the same argument that shows that the stabilizer of any element in $M^s$ has trivial stabilizer in $GL(V)$,
we find here that any $(Q,\Qt)$ has trivial stabilizer in $\text{Stab}(\Phi_0)$,
i.e., the whole of the stable set is regular for the action of $\text{Stab}(\Phi_0)$.

We conclude that $\M_1^o$ is a {\em smooth}, {\em symplectic}\/ open subvariety of $\M'$ 
(in analytic terms, it is endowed with a holomorphic symplectic form); 
so that its closure $\M_1$ is also irreducible. 

We also derive a dimension formula:
\begin{lem}
Recalling $k=q\ell+r$, $0\le r<\ell$,
\begin{equation}\label{eq:dim}
\dim \M_1 = 2kn - 2 \dim \mathrm{Stab}(\Phi)=2(kn-\ell q^2-r(2q+1))\,.
\end{equation}
\end{lem}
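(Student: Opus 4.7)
The plan is twofold: read off the first equality $\dim\M_1 = 2kn - 2\dim\mathrm{Stab}(\Phi)$ from the symplectic-reduction description of $\M_1^o$ just established, and evaluate $\dim\mathrm{Stab}(\Phi)$ via the classical centralizer formula for a nilpotent element of prescribed Jordan type.

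For the first equality, I would start from the identification $\M_1^o = \{(Q,\Qt): Q\Qt\in T_{\Phi_0}\mathcal{O}_1\}^s/\mathrm{Stab}(\Phi_0)$ derived above. The ambient symplectic vector space $\Hom(W,V)\oplus\Hom(V,W)$ has complex dimension $2kn$ and carries a linear Hamiltonian action of $\mathrm{Stab}(\Phi_0)$. Since $\mathrm{ad}_{\Phi_0}$ is skew-adjoint for the trace form on $\End(V)$, the subspaces $T_{\Phi_0}\mathcal{O}_1 = \Im(\mathrm{ad}_{\Phi_0})$ and $\mathrm{stab}(\Phi_0) = \Ker(\mathrm{ad}_{\Phi_0})$ are mutual annihilators; hence the constraint $Q\Qt\in T_{\Phi_0}\mathcal{O}_1$ is exactly the vanishing of the $\mathrm{Stab}(\Phi_0)$-moment map obtained by restricting $\mu_{\mathfrak{gl}(V)}=Q\Qt$ to the subgroup. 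The $\mathrm{Stab}(\Phi_0)$-action is free on the stable locus (by the same trivial-stabilizer argument used in Sect.~\ref{sec:var}), which for a linear Hamiltonian action on a symplectic vector space forces $0$ to be a regular value. The standard symplectic-reduction dimension formula then yields $\dim\M_1^o = 2kn - 2\dim\mathrm{Stab}(\Phi_0)$, and irreducibility of $\M_1=\overline{\M_1^o}$ promotes this to $\dim\M_1$.

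For the second equality, apply the classical identity
\[
\dim\mathrm{Stab}(\Phi) = \sum_{i,j}\min(\lambda_i,\lambda_j) = \sum_i (\lambda'_i)^2
\]
for $\Phi$ nilpotent of Jordan type $\lambda$. For $\lambda = (\ell^q, r)$ the conjugate partition has $\lambda'_i = q+1$ for $1\le i\le r$ and $\lambda'_i = q$ for $r < i \le \ell$, so
\[
\dim\mathrm{Stab}(\Phi) = r(q+1)^2 + (\ell-r)q^2 = \ell q^2 + r(2q+1),
\]
which is precisely the claimed expression. (As a sanity check, one can instead decompose $\mathrm{stab}(\Phi)$ using the generators $\kappa(X)$ and $\Phi^0,\ldots,\Phi^{\ell-2}$ furnished by Lemma~\ref{lem:stab} and count the ranks directly.)

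The only mild subtlety lies in the regularity step underpinning the symplectic reduction: one needs $0$ to be a regular value of the $\mathrm{Stab}(\Phi_0)$-moment map. But for a linear Hamiltonian action on a symplectic vector space this is automatic once the action is free, since the differential of the moment map at a point surjects onto $\mathrm{stab}(\Phi_0)^*$ precisely when the stabilizer at that point is trivial — exactly what was proven earlier. No new calculation is required.
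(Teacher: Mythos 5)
Your proof is correct and follows essentially the same route as the paper's, but you fill in a step the paper leaves implicit. The paper's proof text only addresses the second equality, quoting the centralizer formula $\dim\mathrm{Stab}(\Phi)=\sum_i(2i-1)\lambda_i$ (from Humphreys/Jantzen) and plugging in $\lambda=(\ell^q,r)$; the first equality $\dim\M_1 = 2kn - 2\dim\mathrm{Stab}(\Phi)$ is taken as understood from the preceding symplectic-reduction description of $\M_1^o$. You make that step explicit — correctly observing that $\mathrm{ad}_{\Phi_0}$ is skew-adjoint for the trace form, so $T_{\Phi_0}\mathcal{O}_1=\Im(\mathrm{ad}_{\Phi_0})$ and $\mathrm{stab}(\Phi_0)=\Ker(\mathrm{ad}_{\Phi_0})$ are mutual annihilators, which identifies the constraint with the vanishing of the $\mathrm{Stab}(\Phi_0)$-moment map, and then invoking freeness (already proven) to get regularity and the reduction dimension count. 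For the centralizer dimension you use the equivalent classical identity $\sum_i(\lambda'_i)^2$ rather than $\sum_i(2i-1)\lambda_i$; both specialize to $\ell q^2 + r(2q+1)$. The paper additionally includes a non-negativity sanity check, which you omit, but that is not needed for the statement itself. No gap.
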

\begin{proof}
One can deduce the dimension of the stabilizer from a general formula (see \cite[Sec.~1.3]{Humph-conjclass} or \cite[Sec.~3.1]{Jantzen}). If the Jordan form of $\Phi$ is given by a partition of type $\lambda_1\geq \ldots \geq \lambda_p\geq0$, then
\bea
\mathrm{dim} \,\mathrm{Stab}(\Phi)=\sum\limits_{i=1}^p\,(2i-1)\,\lambda_i\,.
\eea
In our case $p=q+1$, $\lambda_i=\ell$ for $1\leq i \leq q$ and $\lambda_{q+1}=r$, giving $\mathrm{dim} \,\mathrm{Stab}(\Phi)=\ell q^2+r(2q+1)$.

As a consistency check let us show that the dimension~(\ref{eq:dim}) is non-negative, if the stable set is non-empty. The stability condition (\ref{eq:stab}) implies that the number of Jordan blocks in $\Phi$ (i.e. the number of non-zero $\lambda_i$'s) does not exceed $n$. Indeed, this is a simple consequence of the fact that the matrix in~(\ref{eq:stab}) may be viewed as a collection of  $n$ Jordan towers of vectors in $\CC^k$, and by the stability requirement the vectors in these towers span $\CC^k$. Since some of these vectors may be linearly dependent, the actual number of Jordan blocks may be smaller. On the other hand, in the case, which is considered in the lemma, the number of Jordan blocks is $q$ for $r=0$ and $q+1$ for $\ell>r>0$. Therefore from (\ref{eq:dim}) we have $\dim \M_1\geq 0$ for $r=0$ and $\dim \M_1\geq 2((q \ell+r)(q+1)-\ell q^2-r(2q+1))=2q(\ell-r)>0$ for $r\neq 0$.

\end{proof}

We now go back to $\M'$, i.e., without restriction on
the Jordan form of $\Phi$. The same dimension calculation can be easily performed
and results in a dimension that is less or equal to that of $\M_1$;
so that we conclude that $\M_1$ is an irreducible component of $\M'$.
(for $n>2$ it is not true in general that $\M'$ is irreducible and equal to $\M_1$.)
In particular, $\M_1$, just like $\M'$, is Poisson.
$\M_1$ typically has singularities outside $\M_1^o$. It is an interesting question
whether it admits a symplectic resolution.

\subsection{Parametrization}\label{sec:param}
In what follows, explicit computations will require introducing coordinates on $\M$.
Pick bases of $V$ and $W$, so that $Q$, $\Qt$, $\Phi$
are identified with their matrices.
Elementary invariant theory implies that the subring of $SL(V)$ invariant functions on $M$ 
with nonnegative $GL(1)\subset GL(V)$ weight (which we choose as our projective degree) has two types of generators:
\begin{itemize}
\item Degree $0$ generators which are the entries $u_{i,a,b}$ of the matrices $u_i=\Qt \Phi^i Q$, $i=0,\ldots,k-1$,
$a=1,\ldots,n$, $b=1,\ldots,n$;
\item Degree $1$, Pl\"ucker-like generators $x_s$, where
each $s$ is a subset of $\{0,\ldots,k-1\}\times\{1,\ldots,n\}$ of cardinality $k$;
$x_s$ is given by the determinant of the columns $s_i=\{j: (i,j)\in s \}$ of $\Qt\Phi^i$, $i=0,\ldots,k-1$.
When we write these coordinates more explicitly for small values of $n$, we shall use the notation $x_s=x_{s_0\ldots s_{k-1}}$ where the entries of
each $s_i$ are stacked vertically, e.g., $x_{\{(0,1),(1,1),(1,2)\}}=x_{\begin{smallmatrix}1&1\\&2  \end{smallmatrix}}$.
\end{itemize}
In other words,
the $x_s$ form all the $k\times k$ minors of the matrix
\[
\At=\begin{pmatrix}\Qt\\\Qt\Phi\\\vdots\\\Qt\Phi^{\ell-1}
\end{pmatrix}
\]
and the semi-stability condition \eqref{eq:stab}, namely, $\rank \At=k$, 
simply requires that the $x_s$ should not all vanish simultaneously.

We shall not need to write in full the equations satisfied by the $u_i$ and $x_s$ in $\M$ (or $\M_1$). We mention
some of the equations of $\M$ and $\M_1$: \rem[gray]{bit of a mess: there's $\M$, its radical, $\M'$
obtained by imposing $\tr u_i=0$, and $\M_1$... all of them are different}
\begin{itemize}
\item The $x_s$ satisfy linear and (Pl\"ucker-like) quadratic equations.
\item The $u_i$ and $x_s$ satisfy together bilinear equations, some of
which we shall need.
Fix $i\in \{1,\ldots,n\}$. Consider the matrix $\At$ above 
and add to it one more column, which is a linear combination of the first $k$
with coefficients $Q_{r,i}$, $r=1,\ldots,k$, $i$ fixed. Expressing that the minors of size $k+1$ of this matrix are
zero, we obtain
\begin{equation}\label{eq:bil}
\sum_{r=1}^{k+1} (-1)^r x_{s\backslash s_r} u_{s_{r,1},s_{r,2},i}=0
\end{equation}
The identity holds for all $(k+1)$-subsets $s$ of $\{0,\ldots,\ell-1\}\times \{1,\ldots,n\}$ and all $i=1,\ldots,n$.
\item Finally, there are equations involving only the $u_i$:
quadratic equations which can be easily deduced from the equations of $M$:
\begin{equation}\label{eq:aff}
\sum_{i=1}^{j} u_{\ell-i} u_{\ell-(j+1-i)} = 0 \qquad j=1,\ldots,\ell
\end{equation}
as well as some rank equations that we shall not need to write.
These form the equations of the affinization $\Spec \Gamma(\mathcal O_\M)$ of $\M$.
\end{itemize}
It is not clear in general whether these equations generate the (saturated) ideal of $\M$, 
or if a similar description can be made for
$\M_1$, and we do not pursue this further.
The reader may want to check that this description {\em is}\/ complete 
in the case $\ell=1$, when $\M(k,n,1)=\M_1(k,n,1)$ is simply $T^*Gr(k,n)$.

$\M$ or $\M_1$ are projective over their affinization, but only quasi-projective (over $\CC$).
In what follows, we shall several times encounter the {\em projective}\/ variety
\begin{align}\label{eq:defP}
\P(k,n,\ell)&=\M(k,n,\ell) \cap \{ u_i=0,\ i=0,\ldots\}
\\\notag
&= \{ (\Qt,\Phi): \tr \Phi^i=0,\ i=1,\ldots,\ell-1,\ \Phi^\ell=0 \} /\!/ GL(V)
\end{align}
(In the case $\ell=1$, we have $\P(k,n,1)\cong Gr(k,n)$.)

Note that for $\P$, there is no distinction analogous to $\M$ vs $\M_1$:
first $\P$ is by definition reduced irreducible; secondly,
since in the definition \eqref{eq:defP} $\Qt$ does not satisfy
any equations (and the stability condition is an open condition), 
the set of orbits $(\Qt,\Phi)$ where $\Phi\in \mathcal O_1$
is dense in $\P$. In particular,
\begin{equation}\label{eq:PM1}
\P \subset \M_1
\end{equation}
There is also a projection $p: \M \to \P$ that consists in forgetting $Q$, or after the quotient by $GL(V)$,
in forgetting the $u_i$ variables (thus retaining only the $x_s$).
\rem[gray]{does this imply a relation between $H(\M)$ and $H(\P)$?
ask AK about co vs homology in noncompact case}

\subsection{Torus action, equivariant homology}\label{sec:tor}
$GL(W)$ naturally acts on $M$, and it descends to an action on $\M$ and restricts to one on $\M_1$ (the action factors through
$PGL(W)$, but it is convenient to include the trivial $GL(1)$ action).
Similarly, $\CC^\times\times \CC^\times$ acts by scaling $Q$ and $\Phi$ on $M$,
and so on $\M$ or $\M_1$.

We now fix a basis of $W\cong \CC^n$, and define the torus
\begin{equation}\label{eq:torus2}
\hat T=T_0 \times \CC^{\times} \times \CC^{\times}
\end{equation}
where the first factor is the Cartan torus $T_0\cong (\CC^\times)^n$ of $GL(W)\cong GL(n)$, 
and the two $\CC^\times\times \CC^\times$ are as above.
As already discussed in \ref{sec:pretor},
we are really interested in a subtorus $T$ of codimension $1$
inside $\hat T$, cf \eqref{eq:torus}. Define 
the obvious coordinates (weights) $z_1,\ldots,z_n,\varphi,\varepsilon$ on the Lie algebra of the torus $\hat T$ \eqref{eq:torus2}.
Then the Lie algebra of $T$, as a Lie subalgebra of that of $\hat T$, is given 
by setting
\begin{equation}\label{eq:subtorus}
\varepsilon=-\ell\varphi
\end{equation}
$\hat T$ remains natural in the sense that all our constructions are $\hat T$-equivariant.
(Also, note that the definition of $\hat T$ does not depend on $\ell$, whereas that of $T$
does. This will be discussed again in Sect.~\ref{sec:RR})
\rem[gray]{note that the weight of the symplectic form, which [MO] call $\hbar$, is $-\varepsilon=\varphi/\ell$.
$\varphi$ would be the $\hbar$ in the standard convention of integrable models, which disagrees with [MO] for $\ell>1$.
also note the $\varphi$ action {\em is}\/ symplectic.}

\begin{rmk}
The weight of the symplectic form, denoted $\hbar$ in \cite{MO-qg}, is here $-\varepsilon$.
Only at $\ell=1$ does it coincide with $\varphi$, the natural homogeneity parameter for spectral parameters
(in the language of integrable systems).
\end{rmk}

Because $\M_1$ is singular, the correct way to study its cohomology in the
context of quantum integrability is probably to consider ``vanishing cycles'' as in \cite{YZ-vanish}.
However, for the purposes of this paper we need very little structure and so we use instead
the following low-brow approach. We consider the {\em Borel--Moore equivariant homology of $\M_1$}, denoted
$H_*^{\hat T}(\M_1)$. 
We consider it as a module over the equivariant cohomology ring of a point 
$H_{\hat T}(\cdot)\cong \ZZ[z_1,\ldots,z_n,\varphi,\varepsilon]$.
The map to $H_*^T(\M_1)$ is given once again by \eqref{eq:subtorus}.

Because $\M_1$ is singular, it is not known whether $H_*^{\hat T}(\M_1)$ is a {\em free}\/ module
over $H_{\hat T}(\cdot)$ (this is closely related to the question, which we shall not address here,
whether $\M_1$ is {\em equivariantly formal}\/);
however this will not concern us, because we shall resort to equivariant localization. Denoting $\M_1^{\hat T}$ the set of $\hat T$-fixed points of $\M_1$, recall
\begin{lem*}[see e.g.~{\cite[Lemma 1]{Brion-poincare}}]
After inverting finitely many weights in $H_{\hat T}(\cdot)$, the embedding map
$H^{\hat T}_*(\M_1^{\hat T})\to H^{\hat T}_*(\M_1)$ is an isomorphism.
\end{lem*}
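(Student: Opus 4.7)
The plan is to invoke the standard localization machinery for torus actions on equivariant Borel--Moore homology, combining a long exact sequence for the closed pair $(\M_1,\M_1^{\hat T})$ with a stratification of the complement by stabilizer type.

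First, I would set up the long exact sequence of equivariant Borel--Moore homology associated with the closed embedding $\M_1^{\hat T}\hookrightarrow \M_1$ and its open complement $U := \M_1\setminus \M_1^{\hat T}$:
\[
\cdots \to H^{\hat T}_*(\M_1^{\hat T})\to H^{\hat T}_*(\M_1)\to H^{\hat T}_*(U)\to \cdots
\]
After tensoring over $H_{\hat T}(\cdot)$ with the localization at any multiplicative set of weights, the functor is exact, so it suffices to exhibit a finite collection of weights whose inversion annihilates $H^{\hat T}_*(U)$ entirely.

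Next, I would stratify $U$ by $\hat T$-conjugacy type of the stabilizer: since $\M_1$ is a variety of finite type and $\hat T$ is a torus, standard arguments (e.g., slicing by fixed-point loci of one-parameter subgroups) yield a finite $\hat T$-invariant stratification $U=\bigsqcup_\alpha S_\alpha$ into locally closed subvarieties, each with constant stabilizer a proper subtorus $\hat T_\alpha\subsetneq \hat T$. For each such stratum, I would choose a character $\chi_\alpha$ of $\hat T$ that is trivial on $\hat T_\alpha$ but nontrivial on the complementary direction in $\hat T$. A Leray/spectral sequence argument, using that $\hat T/\hat T_\alpha$ acts locally freely on $S_\alpha$ and that therefore $H^{\hat T}_*(S_\alpha)$ factors through $H^{\hat T_\alpha}_*(S_\alpha/(\hat T/\hat T_\alpha))$, shows that multiplication by $\chi_\alpha$ acts nilpotently (or more precisely, $\chi_\alpha$ lies in the annihilator of $H^{\hat T}_*(S_\alpha)$ after passing to a suitable step of the spectral sequence). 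Hence inverting $\chi_\alpha$ kills the equivariant Borel--Moore homology of $S_\alpha$. Iterating the long exact sequences of the stratification and inverting the finite list $\{\chi_\alpha\}$ then forces $H^{\hat T}_*(U)$ to vanish after localization, which gives the claimed isomorphism.

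The main obstacle is that $\M_1$ is singular and possibly not equivariantly formal, so one must work with Borel--Moore homology throughout rather than reducing to cohomology via Poincar\'e duality. In particular, the stratification argument has to be carried out in a form valid for singular schemes of finite type, and the vanishing statement for each stratum depends on having enough exactness of Borel--Moore homology for locally closed embeddings. This is precisely the technical content encapsulated by the cited \cite{Brion-poincare}*{Lemma 1}; once one grants the finiteness of the stratification and the exactness of localized Borel--Moore homology along open--closed decompositions, the proof reduces to the character-inversion step described above.
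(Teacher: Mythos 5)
The paper does not prove this lemma; it simply cites \cite[Lemma 1]{Brion-poincare} and uses it as a black box, so there is no ``paper's proof'' to compare against. Your sketch reproduces the standard localization argument for equivariant Borel--Moore homology (essentially the content of Brion's lemma, in the spirit of Atiyah--Bott/Quillen localization), and the overall structure is correct: the long exact sequence for the closed pair $(\M_1^{\hat T},\M_1)$, reduction to showing that the localized homology of the complement vanishes, finite stratification by stabilizer type, and character inversion.

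Two small points worth tightening. First, on a stratum $S_\alpha$ on which $\hat T$ acts with \emph{constant} stabilizer $\hat T_\alpha\subsetneq\hat T$, the chosen character $\chi_\alpha$ (nontrivial on $\hat T$ but restricting to zero on $\hat T_\alpha$) does not merely act ``nilpotently'' on $H^{\hat T}_*(S_\alpha)$ --- it annihilates it outright, since the module structure factors through $H_{\hat T_\alpha}(\cdot)$, in whose kernel $\chi_\alpha$ lies; nilpotency would only be the relevant notion if you coarsened the stratification. Second, the existence of such a $\chi_\alpha$ uses that $\hat T_\alpha$ is a \emph{proper} subtorus, which holds precisely because $U=\M_1\setminus\M_1^{\hat T}$ contains no $\hat T$-fixed points; it is worth stating this explicitly since it is where the hypothesis enters. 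With these clarifications, the argument is complete modulo the standard facts that the stabilizer stratification of a finite-type $\hat T$-scheme is finite and that equivariant Borel--Moore homology has the open--closed long exact sequence, both of which you rightly flag as the technical substance of the cited lemma.
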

The same statement holds with $\hat T$ replaced with $T$.

We now proceed to compute the set of fixed points:
\begin{prop}\label{prop:fp}
The set of torus fixed points of $\M_1$ is indexed
by integer sequences in $
\{0,\ldots,\ell\}^n$ whose sum of entries is $k$:
\[
\M_1^T = \M_1^{\hat T}=\left\{ p_{k_1,\ldots,k_n},\ 
0\le k_i\le\ell,\ \sum_{i=1}^n k_i=k\right\}
\]
where the fixed point $p_{k_1,\ldots,k_n}$ associated to the sequence $(k_1,\ldots,k_n)$
can be described in the following two equivalent ways:
\begin{itemize}
\item A representative of $p_{k_1,\ldots,k_n}$ in $M^s$ is given by
\begin{align}
\Phi&=\text{Jordan normal form with blocks of size $k_i$ (ignoring zero size blocks), i.e.,}
\label{eq:repPhi}
\\\notag
&=\begin{tikzpicture}[
baseline=-\the\dimexpr\fontdimen22\textfont2\relax]
\matrix[matrix of math nodes,left delimiter=(,right delimiter=),ampersand replacement=\&] (m){
0\&1\\
 \&0\&1\\
 \& \&0\\
\&\&\&0\&1\\
\&\&\& \&0\\[-2mm]
\&\&\& \&\&\ddots\\
};
\node[rectangle,draw,inner sep=-1pt,fit=(m-1-1) (m-3-3)] (a) {};
\node[rectangle,draw,inner sep=-1pt,fit=(m-4-4) (m-5-5)] (b) {};
\draw[latex-latex] ([yshift=3pt]a.north west) -- node[above] {$\ss k_1$} ([yshift=3pt]a.north east);
\draw[latex-latex] ([yshift=3pt]b.north west) -- node[above] {$\ss k_2$} ([yshift=3pt]b.north east);
\end{tikzpicture}
\\
\Qt_{ij}&=\begin{cases}
1& \text{if }k_i>0\text{ and } j=k_0+\cdots+k_{i-1}+1
\\
0&\text{else}
\end{cases}, \text{ i.e.},
\\\notag
\Qt
&=\begin{tikzpicture}[
baseline=-\the\dimexpr\fontdimen22\textfont2\relax]
\matrix[matrix of math nodes,left delimiter=(,right delimiter=),ampersand replacement=\&] (m){
1\&0\&0\\
\&\&\&1\&0\\[-2mm]
\&\&\&\&\&\ddots\\
};
\node[rectangle,draw,inner sep=-1pt,fit=(m-1-1) (m-1-3)] (a) {};
\node[rectangle,draw,inner sep=-1pt,fit=(m-2-4) (m-2-5)] (b) {};
\draw[latex-latex] ([yshift=3pt]a.north west) -- node[above] {$\ss k_1$} ([yshift=3pt]a.north east);
\draw[latex-latex] ([yshift=3pt]b.north west) -- node[above] {$\ss k_2$} ([yshift=3pt]b.north east);
\end{tikzpicture}
\\
Q&=0
\label{eq:repQ}
\end{align}
\item All coordinates $u_i$ are zero, and all Pl\"ucker coordinates $x_s$ are zero except for 
$s=\{(i,j): 0\le i< k_j,\ j=1,\ldots,n\}$.
\end{itemize}
\end{prop}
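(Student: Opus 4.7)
The plan is to analyze torus fixed points via the weight decomposition of $V$. Since any stable orbit has trivial $GL(V)$-stabilizer (Sec.~\ref{sec:var}), at a $\hat T$-fixed point $[(Q,\Qt,\Phi)]$ the lift $g_t\in GL(V)$ intertwining the torus action with the $GL(V)$-action is unique for each $t\in \hat T$; then $t\mapsto g_t$ is a homomorphism, endowing $V$ with a $\hat T$-representation structure. Decomposing $V=\bigoplus_\chi V_\chi$ into $\hat T$-weight spaces, the equivariance conditions translate to: $\Phi:V_\chi\to V_{\chi+\varphi}$; each column $Q_i\in V$ lies in $V_{\varepsilon-z_i}$; and each row $\Qt_i\in V^*$ has weight $z_i$, so that $\Qt_i\Phi^j$ vanishes off the weight space $V_{-z_i-j\varphi}$ of $V$.

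Next I would use the stability condition \eqref{eq:stab}---that the covectors $\Qt_i\Phi^j$ ($1\le i\le n$, $0\le j\le \ell-1$) span $V^*$---to constrain $V$. Weight matching forces every weight of $V$ to be of the form $-z_i-j\varphi$ with $0\le j\le \ell-1$, and for generic parameters $z,\varphi$ each such weight space is at most one-dimensional. Genericity also yields $Q=0$, since the weight $\varepsilon-z_i$ of $Q_i$ is never among the allowed weights. The same argument applies verbatim to a $T$-fixed point (where the $T$-weight of $Q_i$ becomes $-z_i-\ell\varphi$, still distinct from $-z_{i'}-j\varphi$ with $0\le j\le \ell-1$ for generic parameters), establishing the identification $\M_1^T=\M_1^{\hat T}$.

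The structural heart of the proof is a ``no-gaps'' argument. For each $i$, I claim that the set $\{j: V_{-z_i-j\varphi}\ne 0\}$ is an initial segment $\{0,1,\ldots,k_i-1\}$ of $\{0,\ldots,\ell-1\}$. Indeed, if $V_{-z_i-j_0\varphi}=0$ but $V_{-z_i-j'\varphi}\ne 0$ for some $j'>j_0$, then for any nonzero $v\in V_{-z_i-j'\varphi}$ we have $\Phi^{j'-j_0}v\in V_{-z_i-j_0\varphi}=0$, hence $\Qt_i\Phi^{j'}v=0$; by weight uniqueness no other $\Qt_{i'}\Phi^{j''}$ detects $v$ either, contradicting stability. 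This produces the bijection between fixed points and tuples $(k_1,\ldots,k_n)$ with $0\le k_i\le\ell$ and $\sum_i k_i=k$. This no-gaps step is the main---though elementary---subtlety; everything else is routine linear algebra.

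Finally I would verify that the explicit representative in \eqref{eq:repPhi}--\eqref{eq:repQ} defines a point of $\M_1$. With $Q=0$, all defining equations of $M$ reduce to $\Phi^\ell=0$, which holds because every Jordan block has size $\le\ell$; stability holds because $\Qt_i\Phi^j$ hits distinct basis positions along the $i$-th Jordan block, giving rank $k$. Since $Q=0$ forces $u_i=\Qt\Phi^iQ=0$, the fixed point lies in $\P\subseteq\M_1$ by \eqref{eq:PM1}. Distinct tuples yield distinct patterns of nonvanishing Pl\"ucker coordinates, which both certifies distinctness of the fixed points and, by direct computation from the matrix form, produces the equivalent description in the second bullet.
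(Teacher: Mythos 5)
Your proof is correct and follows essentially the same route as the paper's: both lift the torus action to $GL(V)$ (uniquely, by triviality of stabilizers on the stable locus), decompose $V$ into weight spaces, use the stability condition to force each weight space $V_{-z_i-j\varphi}$ to be at most one-dimensional and to rule out a nonzero $Q$, and then read off the Jordan-tower structure of $\Phi$ and $\Qt$. The paper merely organizes this as a two-step analysis (first the scaling $\CC^\times$, then the Cartan torus $T_0$) and leaves your ``no-gaps'' point implicit in its basis-ordering argument, so the difference is presentational; note also that no genericity of the parameters is actually needed, since the weights $-z_i-j\varphi$ (for $0\le j\le\ell-1$) and $\varepsilon-z_i$ are pairwise distinct as characters of $T$.
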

\begin{proof}
We work inside $\M$ first, and consider $T=T_0\times \CC^\times$.
We investigate separately the invariance under the Cartan torus $T_0$ and under $\CC^\times$.

We first consider the latter. A fixed point under this $\CC^\times$ is a $GL(V)$ orbit of $(Q,\Qt,\Phi)$
inside $M^s$ such that for all $t\in\CC^\times$, there exists $A\in GL(V)$ with
\begin{align}
Q&=t^{-\ell} A Q\label{eq:Qfp}
\\
\Qt &= \Qt A^{-1}\label{eq:Qtfp}
\\
\Phi &= t\, A \Phi A^{-1}\label{eq:Phifp}
\end{align}
Denote $V_i=\Im((\Qt \Phi^i)^*)\subset V^*$, $i=1,\ldots,\min(k,\ell)-1$, i.e., the span
of the row vectors $\Qt_j\Phi^i$ of $\Qt\Phi^i$, $j=1,\ldots,n$ (recall that a basis of $W\cong\CC^n$ is fixed).
The stability condition \eqref{eq:stab} is equivalent to the fact that $\sum_{i=0}^{\min(k,\ell)-1} V_i = V^*$.
Deriving from \eqref{eq:Qtfp}--\eqref{eq:Phifp}
\[
\Qt \Phi^iA=t^i \Qt \Phi^i
\]
we conclude that the $V_i$ are the left eigenspaces of $A$, with eigenvalue $t^i$ (or zero). In particular they
are in direct sum: $\bigoplus_{i=0}^{\min(k,\ell)-1} V_i = V^*$. Furthermore, \eqref{eq:Qfp} implies that
$Q=0$ because $A$ has no eigenvalue $t^\ell$.

Next we consider the invariance under the Cartan torus. This implies that the subspaces $W_i=\Im(\Qt\Phi^i)\subset
W\cong \CC^n$ are coordinate subspaces. We can now pick a basis (equivalently, use the remaining gauge
freedom) inside each $V_i$ such that the row vectors $\Qt_j\Phi^i$ are either $0$ or basis vectors
$(0,\ldots,0,1,0,\ldots,0)$. Ordering them as $(\Qt_1,\Qt_1 \Phi,\ldots,\Qt_2,\Qt_2\Phi,\ldots,\Qt_n,\Qt_n\Phi,\ldots)$,
where we only keep the nonzero vectors in this list, 
and recalling that these vectors must span $V^*$ and therefore form a basis,
we find that $\Phi$ is in its Jordan normal form in the dual basis
and are led to the following structure for $\Qt$.
Each $\Qt_i$ is either zero or the topmost basis vector in a Jordan block of $\Phi$ (so that it is
cyclic in that block). Denoting $k_i$ the size of that block or $0$ if $\Qt_i=0$,
$i=1,\ldots,n$, 
we have by definition $\sum_{i=1}^n k_i=k$ and $0\le k_i\le\ell$.

Inversely, given such a sequence $(k_1,\ldots,k_n)$, we define the triplet
as in \eqref{eq:repPhi}--\eqref{eq:repQ}.
One can then reverse the reasoning and show that
it satisfies the invariance properties \eqref{eq:Qfp}--\eqref{eq:Phifp}
and is in the stable set.

Concerning the second description, note that $Q=0$ implies $u_i=0$ for all $i$.
As to the matrix
$\begin{pmatrix}\Qt\\\Qt\Phi\\\vdots\\\Qt\Phi^{k-1}
\end{pmatrix}$, it has exactly one $1$ per column, so that the only nonzero $k\times k$ minor
is obtained by keeping exactly the rows where these $1$s sit. This gives the statement of the proposition concerning
the $x_s$.

We have thus computed $\M^{T}$.
Finally, we show that $\M^{\hat T}=\M^T=\M_1^{\hat T}=\M_1^T$. The fixed points described above are manifestly invariant under
the whole of $\hat T$. Furthermore, because $Q=0$, we have $\M^T \subset \P\subset \M_1$ by \eqref{eq:PM1}.
\end{proof}
We comment on the importance of this proposition. It implies that after localizing, $H_*^{T}(\M_1(k,n,\ell))$ is a free module of
rank equal to the number of fixed points, i.e., the cardinality of 
$\left\{ (k_1,\ldots,k_n):\ 
0\le k_i\le\ell,\ \sum_{i=1}^n k_i=k\right\}$. Now it is easy to see that this is also the dimension
of the $k^{\rm th}$ weight space (counted from the highest weight) of the tensor product $(\CC^{\ell+1})^{\otimes n}$
(e.g., for dimension-counting purposes, $k_i$ can be identified with the number of times one acts with the lowering operator of $\mathfrak{sl}(2)$
on the $i^{\rm th}$ factor of the tensor product).
This is a necessary condition for the identification of $\bigoplus_k H_*^{T}(\M_1(k,n,\ell))$ with the Yangian representation
$\bigotimes_{i=1}^n \CC^{\ell+1}(z_i)$, where each factor is an evaluation representation of $Y(\mathfrak{sl}(2))$
(and $\varphi$ plays the role of homogeneity parameter).

\subsection{The attracting scheme}\label{sec:attsch}
Consider a ``generic'' cocharacter of the Cartan torus $T_0\cong (\CC^\times)^n$.
By generic we mean that it does not lie in any root hyperplane of $GL(n)$.

We are interested in the attracting set
\[
\Z = \bigsqcup_{p\in \M_1^T} \Z_p,
\qquad
\Z_p=\{m\in\M: \lim_{t\to 0}t\cdot m = p\}
\]
\rem[gray]{the definition $\Z=\left\{m\in \M: \lim_{t\to 0}t\cdot m\text{ exists}\right\}$ would be wrong because some points won't converge
to fixed points, cf at $k=\ell=n=2$,
$\Phi=\begin{pmatrix}0&1\\0&0
\end{pmatrix}$,
$Q=\begin{pmatrix}1&0\\0&1
\end{pmatrix}$,
$\Qt=\begin{pmatrix}1&0\\0&-1
\end{pmatrix}$. simpler version: examine the fiber of the singular fixed point,
there's a neutral direction}%
where $t\in\CC^\times$ acts via the cocharacter. $\Z$ only depends on the Weyl chamber of the cocharacter, and
in practice, we shall always make the following choice:
$t\mapsto (t^{a_1},\ldots, t^{a_n})$
where $a_1>a_2>\cdots>a_n$; all other choices are related by Weyl group action.

In the variables $u_i$ and $x_s$ defined in Sect.~\ref{sec:param},
it is easy to see that
\begin{equation}\label{eq:defZ}
\Z = \left\{ (u_i,x_s)\in \M: u_i \text{ strict upper triangular, }i=1,\ldots,k-1\right\}
\end{equation}
Indeed the action of $T_0$ on the $u_i$ is by conjugation, and so $\lim_{t\to 0} u_i=0$ iff $u_i$ is strict upper triangular.
The statement then follows
from the fact that $\P$ (corresponding to the $x_s$ variables) is projective.

We take \eqref{eq:defZ} to be the definition of $\Z$ as a scheme, and call it the {\em attracting scheme.}

Based on computer experiments for $n\le 3$ and various values of $k$ and $\ell$, we conjecture the following
\begin{conj}\label{conj:lagr}
$\Z$ is a Lagrangian subscheme of $\M_1$.
\end{conj}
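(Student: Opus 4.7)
The plan is to establish separately that $\Z$ is isotropic in $\M_1$ and that $\dim\Z = \tfrac12\dim\M_1$. Both parts rely on the symplectic reduction picture of Sect.~\ref{sec:poisson}.

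For isotropy, the cocharacter $\sigma\colon\CC^\times\to T_0$ defining $\Z$ acts symplectically on $\M_1^o$: $T_0\cong(\CC^\times)^n$ acts on $W$ with weights $z_i$ and trivially on $V$, so $Q,\Qt,\Phi,\alpha$ carry weights $-z_i,z_i,0,0$, and hence $\Omega=\tr(dQ\wedge d\Qt)+\tr(d\Phi\wedge d\alpha)$ is $T_0$-invariant; symplectic reduction preserves this. Genericity of $\sigma$ (its image lies in the interior of a Weyl chamber) forces the $\sigma$-fixed subspace of $T_p\M_1$ to vanish at every smooth fixed point $p\in\M_1^o$, so $T_p\M_1=T^+\oplus T^-$ with $\Omega$ giving a perfect pairing $T^+\otimes T^-\to\CC$ and vanishing on each summand. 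The tangent space to $\Z_p$ at such a $p$ is exactly $T^+$, hence isotropic; since $\Z_p$ is $\sigma$-stable and $\Omega$ is $\sigma$-invariant, the $\sigma$-flow propagates isotropy to every nearby point of $\Z_p\cap\M_1^o$, and passing to closures in $\M_1$ gives isotropy of $\Z$.

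For the dimension equality, at each fixed point $p$ lying in $\M_1^o$—equivalently, at each $p_{k_1,\ldots,k_n}$ whose nonzero $k_i$'s form the partition $(\ell^q r)$—a Bialynicki--Birula type argument identifies the attracting cell $\Z_p$ with an affine space of dimension $\dim T^+=\tfrac12\dim\M_1$. The remaining fixed points sit in the singular locus of $\M_1$; for them the upper bound $\dim\Z_p\le\tfrac12\dim\M_1$ follows from the isotropy on $\M_1^o$ together with the fact that an isotropic subscheme of the smooth locus has bounded closure-dimension, while the lower bound needs explicit deformations. My plan for the lower bound is to work in the chart around $p$ built from the coordinates $(u_i,x_s)$ of Sect.~\ref{sec:param}: the strict-upper-triangularity conditions on the $u_i$ form a family of linear constraints whose independence modulo the local ideal of $\M_1$ at $p$ I would verify using the bilinear relations \eqref{eq:bil} to trade $u$-generators against $x$-generators, and the affinization relations \eqref{eq:aff} to handle the diagonal entries of the $u_i$'s.

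The main obstacle is this lower-bound step at singular fixed points: the local structure of $\M_1$ at $p_{k_1,\ldots,k_n}$ with non-maximal Jordan type is genuinely complicated, several irreducible branches of the ambient scheme $\M$ typically meet at $p$, and one must rule out extra unwanted components of the attractor. A more robust alternative, which I would pursue in parallel, is to exploit the map $\M_1(k,n,\ell)\to\M(k,\ell n,1)\cong T^*Gr(k,\ell n)$ from the remark at the end of Sect.~\ref{sec:var}: if $\Z$ can be identified with the (strict transform of the) pullback of the standard Maulik--Okounkov attracting Lagrangian in $T^*Gr(k,\ell n)$, then the Lagrangian property is inherited from the Nakajima case. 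Prop.~\ref{prop:fp} makes the match on fixed points immediate, and the remaining task is to verify that the map is generically finite onto its image, reducing the conjecture to the well-understood quiver-variety setting.
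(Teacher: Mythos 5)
The statement you are trying to prove is, in the paper, a genuine \emph{conjecture}: it is only established for $n=2$ (see the paragraph following Prop.~\ref{prop:dec}), and the authors explicitly note that the smooth-symplectic argument you are leaning on does not apply because $\M_1$ is singular. You should therefore not expect a clean general proof; the paper's own argument is both specialized to $n=2$ and more explicit than what you propose.

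Your isotropy argument is fine on $\M_1^o$, but you should be careful: the Bialynicki--Birula identification $T_p\Z_p=T^+$ with $T^+$ isotropic only works at fixed points $p$ that lie in $\M_1^o$, and most fixed points $p_{k_1,\ldots,k_n}$ do not (only those whose nonzero $k_i$ sort into the maximal partition $(\ell^q r)$ give $\Phi\in\mathcal O_1$). One still gets isotropy of $\Z\cap\M_1^o$ by the standard fact that attractors of a Hamiltonian $\CC^\times$-flow are isotropic, and isotropy is closed, so this half survives. The serious gap, which you identify yourself, is the dimension. Your upper bound ``isotropic $\Rightarrow$ $\dim\le\tfrac12\dim\M_1$'' is not automatic for components of $\Z$ whose generic point is in the singular locus of $\M_1$; the paper already exhibits an example at $n=k=3$, $\ell=2$ where the attractor of $p_{1,1,1}$ has three irreducible components, so the local geometry at non-maximal Jordan type is genuinely delicate and a verification ``by hand'' via \eqref{eq:bil}--\eqref{eq:aff} would have to confront this. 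Your suggested reduction to $T^*Gr(k,\ell n)$ is also too optimistic as stated: by the dimension formula \eqref{eq:dim}, $\dim\M_1(k,n,\ell)$ and $\dim T^*Gr(k,\ell n)$ differ for $\ell>1$, so ``generically finite onto its image'' is not given, and the image lands in a proper (and itself not obviously Lagrangian-friendly) subvariety that one would still have to analyze.

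The paper's $n=2$ proof takes a different and more concrete route. It first reduces to $k=\ell$ using two isomorphisms ($\M_1(k,2,\ell)\cong\M_1(k,2,\ell')$ for $\ell,\ell'\ge k$, and the $\ZZ_2$-duality $\M_1(k,2,\ell)\cong\M_1(2\ell-k,2,\ell)$ from App.~\ref{app:isom}). It then computes the decomposition of $\Z(k,2,k)$ into irreducible components by an inductive nesting argument (Prop.~\ref{prop:dec}): each $\overline{\Z_j}\cong\P(j,2,j)\times\CC^{k-j}$ has dimension exactly $k=\tfrac12\dim\M_1(k,2,k)$. The Lagrangian property then follows from the dimension count plus the general isotropy of attractors, with no need for a BB analysis at singular fixed points. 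So the key idea you are missing is this explicit component-by-component identification, which is what makes the dimension computation tractable, and it is only available at $n=2$.
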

This conjecture would follow from general arguments if $\M_1$ were smooth symplectic;
but $\M_1$ is not smooth in general. 
\rem[gray]{would a symplectic resolution help? not obviously}

Furthermore, with the same hypotheses as in the previous paragraph, each $\Z_p$ would be an irreducible open subvariety of $\Z$,
and its closure would be an irreducible component of $\Z$. This does not hold in general;
in fact, the smallest counterexample can be found at $n=k=3$, $\ell=2$, where
the attracting set of the fixed point given by the sequence $(1,1,1)$ has three irreducible components.

\subsection{The stable envelope}\label{sec:stabenv}
We consider the stable envelope construction \cite[Sect.~3]{MO-qg} in the case that $\M^T$ is finite, and reformulate
it in a homological language. The analogue of Nakajima quiver variety here is $\bigsqcup_{k=0}^{\ell n} \M_1(k,n,\ell)$, though
for simplicity we work at fixed $k$ for now.

Equivariant localization allows us to expand any class
in $H_*^T(\M_1)$ into the classes of fixed points, allowing denominators. Denote by $\cf_p(\gamma)$ the coefficient of the class
$\gamma$ at the fixed point $p$.

\begin{conj}\label{conj:stab}
There exist unique classes $(S_p)_{p\in\M_1^T}$ in $H_{\dim\Z}^T(\M_1)$ satisfying for all $p\in \M_1^T$:
\begin{enumerate}[(i)]
\item The support of $S_p$ lies inside $\bigsqcup_{q\le p} \Z_q$ (i.e., there exists a class on the latter which is pushed forward to $S_p$).
\item $\cf_p(S_p)=\cf_p(\overline{\Z_p})$.
\item 
$\cf_q(S_p)|_{\varphi=0}=0$
for $q<p$.
\end{enumerate}
\end{conj}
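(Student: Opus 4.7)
Suppose $S_p$ and $S'_p$ both satisfy (i)--(iii), and set $D = S_p - S'_p$. Condition (ii) gives $\cf_p(D) = 0$, and together with (i) this places $D$ in the image of the pushforward from the smaller attracting locus $\bigsqcup_{q<p}\Z_q$. I proceed by downward induction on the partial order. Let $q_0<p$ be maximal such that $\cf_{q_0}(D)\ne 0$. By Atiyah--Bott localization and the support condition, $\cf_{q_0}(D)$ is proportional (over $H^*_{\hat T}(\mathrm{pt})$) to $\cf_{q_0}(\overline{\Z_{q_0}})$, which is the Euler class of the repelling directions at $q_0$---a polynomial in the weights $z_i-z_j$ and $\varphi$ that, by inspection of the tangent weights at the fixed points in Prop.~\ref{prop:fp}, is coprime to $\varphi$. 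On the other hand, $\cf_{q_0}(D)|_{\varphi=0}=0$ by (iii). These two facts force the proportionality constant to vanish, contradicting maximality. Hence $D=0$.

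\textbf{Existence by induction (Approach A).} Assuming Conjecture~\ref{conj:lagr}, order the fixed points compatibly with $\le$ and construct $S_p$ recursively. Take as first approximation $S_p^{(0)}=[\overline{\Z_p}]$, which lies in the correct homological degree and satisfies (i) and (ii). For $q<p$ let $r_q:=\cf_q(S_p^{(0)})|_{\varphi=0}$, and set
\[
S_p = S_p^{(0)} - \sum_{q<p} \lambda_q\, S_q,
\]
with $\lambda_q\in H^*_{\hat T}(\mathrm{pt})$ and the $S_q$ the stable classes already produced by the induction. One kills the $r_q$ in decreasing order of $q$: at each step $\lambda_q$ is uniquely determined by requiring that $\cf_q$ of the new candidate vanish at $\varphi=0$. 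Its existence as a polynomial (not merely as a rational function with a pole at $\varphi=0$) follows because $\cf_q(S_q)$ is coprime to $\varphi$ while the error is divisible by $\varphi$. The support and leading-term conditions are preserved automatically.

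\textbf{Existence by factorization (Approach B).} Alternatively, one may bootstrap from the proved $n=2$ case (Prop.~\ref{prop:stabcl}). In the Maulik--Okounkov formalism, the full stable envelope factorizes, along any reduced decomposition of the longest element of $\mathcal S_n$, as a composition of ``wall'' stable envelopes, each an $n=2$ stable envelope attached to a simple transposition. In our setting, one would construct a chain of projections or Lagrangian correspondences relating $\M_1(k,n,\ell)$ to geometric objects whose fibers look like products $\M_1(k',2,\ell)\times\M_1(k'',n-1,\ell)$; the $n=2$ stable classes then produce $S_p$ by iterated pullback--pushforward. As a bonus, this would directly yield the $RTT$/Yangian structure alluded to in the introduction.

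\textbf{Main obstacle.} The serious difficulty in both approaches is the singularity of $\M_1$. Approach~A is contingent on Conjecture~\ref{conj:lagr}, which controls both the degree of $[\overline{\Z_q}]$ and the spanning property needed for the recursion; it also requires enough homological control---equivariant formality of $\M_1$, or a replacement of Borel--Moore homology by vanishing-cycle homology as in \cite{YZ-vanish}---to ensure that the recursively built classes actually lie in $H_{\dim\Z}^{\hat T}(\M_1)$ rather than merely in its localization. Approach~B further requires constructing the intermediate fibrations in the singular setting, likely demanding either a symplectic resolution of $\M_1$ (whose existence is itself an open question raised in Sect.~\ref{sec:poisson}) or a categorified framework. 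A hybrid strategy---proving existence on a resolution and descending using the Lagrangian property---may be the cleanest path.
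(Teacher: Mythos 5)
First, a framing point: the paper does not prove this conjecture in general either --- it is established only for $n=2$ (Prop.~\ref{prop:stabcl}) --- and your two approaches to general existence are, as you concede, contingent on open problems, so the real comparison is with the paper's $n=2$ proof. Your uniqueness induction has a concrete gap even as a conditional argument: the step ``$\cf_{q_0}(D)$ is proportional to $\cf_{q_0}(\overline{\Z_{q_0}})$'' presupposes that each attracting set $\Z_q$ contributes a single irreducible component of $\Z$, so that maximality of $q_0$ leaves only $\overline{\Z_{q_0}}$ contributing at $q_0$. The paper points out that this fails already at $n=k=3$, $\ell=2$, where the attracting set of the fixed point $(1,1,1)$ has three irreducible components; a difference class $D$ supported on $\bigsqcup_{q\le p}\Z_q$ can then be a nontrivial combination of several components all attracted to the same fixed point, and your proportionality (hence the contradiction) breaks down. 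At $n=2$ the argument does work, and it is exactly the paper's: all attracting sets are irreducible, the order is total, and the restriction matrix is upper triangular with nonzero diagonal.

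For existence, your Approach~A is the natural Gram--Schmidt recursion, but the one step that carries all the content is asserted rather than proved: you need each $\lambda_q$ to be a genuine element of $H_{\hat T}(\cdot)$ (not merely of its fraction field) so that $S_p^{(0)}-\sum_q\lambda_q S_q$ is an honest class in $H^T_{\dim\Z}(\M_1)$, supported on $\Z$ and of the correct degree. Your justification (``the error is divisible by $\varphi$'') is precisely what must be shown, and no mechanism is offered; moreover condition (iii) only constrains values at $\varphi=0$, so even granting divisibility one must check that the chosen $\lambda_q$ does not spoil (iii) at lower fixed points or reintroduce denominators. The paper avoids the recursion entirely: it exhibits explicit (non-reduced) subschemes $\S_j=\Z\cap p^{-1}\P(j,2,j)$ as in \eqref{eq:defS}, proves they are complete intersections in each fixed-point patch, and reads off both the localization formula \eqref{eq:res3} --- from which (ii) and (iii) are immediate --- and the integrality statement $[\S_j]=\sum_i\binom{j}{i}[\overline{\Z_i}]$ of \eqref{eq:linrel}, which is exactly what your $\lambda_q$'s would have to be. That explicit construction is also what feeds the $R$-matrix computation; an abstract existence argument of the kind you sketch would not provide it.
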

Note that because of the the support from (i) and the degree assumption,
each $S_p$ must be an integer linear combination of classes of the 
irreducible components of $\Z$ (but not necessarily of the $\overline{\Z_q}$).

\begin{rmk}
Condition (ii) is written differently than in \cite[Sect.~3.3.4]{MO-qg}, because it is not obvious how to extend
the notion of polarization to singular fixed points (in the smooth case, our condition (ii) would correspond to the choice of a ``trivial'' polarization,
which is in general not the most ``sign-saving'', cf \cite[Ex.~3.3.2]{MO-qg}).
\end{rmk}

\section{Calculation of the $R$-matrix}\label{sec:R}
\subsection{Definition}
The Weyl group $\mathcal S_n$ of $GL(n)$ naturally acts on $H_*^T(\bigsqcup_k \M_1(k,n,\ell))
=\bigoplus_k H_*^T(\M_1(k,n,\ell))$. We can in particular
consider its action on the stable basis $(S_p)_{p\in \bigsqcup_k\M_1^T(k,n,\ell)}$, resulting in
$\text{Frac}(H_T(\cdot))$-valued matrices denoted $\check R_w$:
\begin{equation}\label{eq:defR}
w\, S_p = \sum_{q\in \bigsqcup_k \M_1^T(k,n,\ell)} (\check R_w)_{q,p} S_q
\end{equation}
and called {\em $R$-matrices} (see \cite{moscowlectures} for a similar definition at $\ell=1$). Taking the sum over $k$ is necessary to realize the tensor
product structure that was already observed at the end of Sect.~\ref{sec:tor}
and which is one of the key features of stable bases. 
In practice however, each value of $k$ provides a diagonal block
of $\check R_w$, and we shall go back to fixing $k$ in what follows.
Also, we shall work inside $H_*^{\hat T}(\M_1(k,n,\ell))$, and only
impose \eqref{eq:subtorus} at the end to restrict to $T$.

A nice feature of $R$-matrices is that they only act
``locally'' in tensor products (really, this is a property of the stable envelope, cf \cite[Cor.~4.1.2]{MO-qg}).
This means that to compute $R$-matrices, we only need to consider the case $n=2$,
and the unique nontrivial element of the Weyl group. We simply denote the corresponding
$R$-matrix $\check R$ (without index).

In the remainder of this section, we thus set $n=2$.
In the language of Sect.~\ref{sec:poisson}, one has $\M_1=\M'$,
so we can write explicitly
\begin{multline*}
\M_1(k,2,\ell) = \Bigg\{
(Q,\Qt,\Phi): \Phi^\ell=0,\ \sum_{i=0}^{\ell-1}\Phi^i Q\Qt \Phi^{\ell-1-i}=0,
\\[-5mm]
\ \tr \Phi^i=0\ (i>0),\ \tr (\Phi^i Q\Qt)=0\ (i\ge0)
\Bigg\}/\!/GL(V)
\end{multline*}
though we shall not need this fact in what follows.
\rem[gray]{seems true in known examples. proof?}
Also note that the dimension formula \eqref{eq:dim} becomes
\begin{equation}\label{eq:dim2}
\dim \M_1(k,2,\ell) = 2 \begin{cases}
k & 0\le k\le\ell
\\
2\ell-k & \ell\le k\le 2\ell
\end{cases}
\end{equation}
$\M_1$ being empty for $k>2\ell$.

We first briefly describe the example of $\ell=2$, and then consider the case $\ell=k$ (which
as we shall argue, is actually the general case).

In particular, we shall prove (for $n=2$) Conj.~\ref{conj:lagr} and Conj.~\ref{conj:stab}.

\subsection{The example of $\texorpdfstring{\ell}{l}=2$}\label{sec:l2ex}
In this subsection, we fix $\ell=2$, so that
$0\le k\le 4$.

$\M_1$ is a point for $k=0,4$ so 
\[
\check R=(1)
\qquad k=0,4
\]

For $k=1,3$, $\M_1 \cong T^*\PP^1$, as expected from the general facts of the previous section.
The calculation of the $R$-matrix
can be found e.g.\ in \cite[Sect.~4.1.2]{MO-qg}; with our conventions, the result reads
\[
\check R = 
\begin{pmatrix}
\frac{\varepsilon}{\varepsilon-z}
&
\frac{z}{\varepsilon-z}
\\
\frac{z}{\varepsilon-z}
&
\frac{\varepsilon}{\varepsilon-z}
\end{pmatrix}
\qquad k=1,3
\]
where we recall that $\varepsilon=-2\varphi$.

\def\y{x_{\begin{smallmatrix}1\\2\end{smallmatrix}}}
\def\x#1#2{x_{\begin{smallmatrix}#1&#2\end{smallmatrix}}}
The only new case is $k=2$. In the parametrization of Sect.~\ref{sec:param}, the variables
are the entries of $u_0$ and $u_1$, and 
$\y$,
$\x11$,
$\x12=\x21$,
$\x22$.
The ideal of equations of $\M_1(2,2,2)$ is generated by
\begin{align*}
&\x21^{2}{-\x11\x22}
&&{u}_{1,1,1}+{u}_{1,2,2}&&{u}_{0,1,1}+{u}_{0,2,2}\\
&{u}_{1,2,2}\x21{-{u}_{1,1,2}\x22}&&{u}_{1,2,1}\x21+{u}_{1,2,2}\x22&&{u}_{1,2,2}\x11{-{u}_{1,1,2}\x21}\\
&{u}_{1,2,1}\x11+{u}_{1,1,2}\x22&&{u}_{0,2,1}\x11+2\,{u}_{0,2,2}\x21{-{u}_{0,1,2}\x22}&&{u}_{1,2,2}\y{-{u}_{0,2,2}\x21}+{u}_{0,1,2}\x22\\
&{u}_{1,2,1}\y{-{u}_{0,2,1}\x21}{-{u}_{0,2,2}\x22}&&{u}_{1,1,2}\y{-{u}_{0,2,2}\x11}+{u}_{0,1,2}\x21\end{align*}
where the first equation is that of $\P(2,2,2)$, the next two are simply the vanishing of the traces of $u_0$, $u_1$,
and the remaining ones are bilinear equations of the form \eqref{eq:bil}.

This variety is singular at the point $u_0=u_1=0$, $\x11=\x12=\x22=0$, which is nothing but the ``middle'' fixed point
$p_{1,1}$. If we remove its fiber under $\M\to \P$, 
we obtain $\M_1^o \cong O(2)\oplus O(-2)\oplus O(-2)$ over $\PP^1$.

The attracting scheme $\Z$ is given in terms of the variables $u_{0,1,2}$, $u_{1,1,2}$ and the same $x_s$ by the 
vanishing of
\begin{align*}
&{u}_{1,1,2}\x22&&{u}_{0,1,2}\x22&&\x21^{2}{-\x11\x22}&&{u}_{1,1,2}\x21&&{u}_{1,1,2}\y+{u}_{0,1,2}\x21
\end{align*}
The irreducible components are
\begin{align*}
\overline{\Z_0}&=\{\x21=\x22=\y=0\}\\
\overline{\Z_1}&=\{u_{1,1,2}=0,\ \x21=\x22=0\}\\
\overline{\Z_2}&=\{u_{0,1,2}=u_{1,1,2}=0,\ \x21^2-\x11\x22=0\}=\P(2,2,2)
\end{align*}
corresponding to the attracting sets of $p_{2,0},p_{1,1},p_{0,2}$ respectively.
$\overline{\Z_0}\cong \CC^2$, $\overline{\Z_1}\cong \CC\times \PP^1$, but $\overline{\Z_2}$ is singular at the fixed point $p_{1,1}$.

By equivariant localization one can expand the $H_*^{\hat T}$
classes of the $\overline{\Z_i}$ in the fixed point basis; the result is
\[
([\overline{\Z_0}],[\overline{\Z_1}],[\overline{\Z_2}])
=
([p_{2,0}],[p_{1,1}],[p_{0,2}])
\begin{pmatrix}
\frac{1}{\left(\varepsilon+z\right)\left(\varphi+\varepsilon+z\right)}&\frac{{-1}}{\left(\varepsilon+z\right)\left(\varphi+z\right)}&\frac{1}{z\left(\varphi+z\right)}\\
0&\frac{1}{\left(\varepsilon+z\right)\left(\varphi+z\right)}&\frac{2}{\left(\varphi{-z}\right)\left(\varphi+z\right)}\\
0&0&\frac{{-1}}{z\left(\varphi{-z}\right)}\end{pmatrix}
\]
where $z=z_1-z_2$. For example, let us derive
the coefficient of $p_{1,1}$ in $\overline{\Z_2}$ (for a fuller justification,
see Sect.~\ref{sec:res} below). In the patch $\y\ne0$ around $p_{1,1}$, $\Z_2$ is the affine scheme
$\{b^2-ac=0\}$ where $a=\frac{\x11}{\y}$, $b=\frac{\x12}{\y}$, $c=\frac{\x22}{\y}$. The coefficient
is simply the weight of the equation, which is twice that of $b$, divided by the products of weights of all
variables, where the weight of $a$ is $\varphi+z$ and the weight of $c$ is $\varphi-z$.
In particular the $2$ of the numerator can be traced back to the fact that $\Z_2$ is singular
at $p_{1,1}$.

We now want linear combinations of these classes whose 
off-diagonal entries vanish at $\varphi=\varepsilon=0$.
This is a simple linear problem, whose solution is unique:
\[
(S_0,S_1,S_2)=
([\overline{\Z_0}],[\overline{\Z_1}],[\overline{\Z_2}])
\begin{pmatrix}
1&1&1
\\
0&1&2
\\
0&0&1
\end{pmatrix}
\]
resulting in
\[
(S_0,S_1,S_2)=
([p_{2,0}],[p_{1,1}],[p_{0,2}])
  \begin{pmatrix}
\frac{1}{\left(\varepsilon+z\right)\left(\varphi+\varepsilon+z\right)}&\frac{{-\varepsilon}}{\left(\varepsilon+z\right)\left(\varphi+z\right)\left(\varphi+\varepsilon+z\right)}&\frac{\varepsilon\,\left(\varphi+\varepsilon\right)}{z\left(\varepsilon+z\right)\left(\varphi+z\right)\left(\varphi+\varepsilon+z\right)}\\
0&\frac{1}{\left(\varepsilon+z\right)\left(\varphi+z\right)}&\frac{2\,\left(\varphi+\varepsilon\right)}{\left(\varepsilon+z\right)\left(\varphi-z\right)\left(\varphi+z\right)}\\
0&0&\frac{{-1}}{z\left(\varphi-z\right)}\end{pmatrix}
\]

The nontrivial element of the Weyl group $\ZZ_2$ reverses the order of the fixed points, and sends
$z$ to $-z$; from this we easily deduce the $k=2$ $R$-matrix:
\[
\check R=
\begin{pmatrix}
\frac{\varepsilon\,\left(\varphi+\varepsilon\right)}{\left(\varepsilon{-z}\right)\left(\varphi+\varepsilon{-z}\right)}&\frac{z\varepsilon}{\left(\varepsilon{-z}\right)\left(\varphi+\varepsilon{-z}\right)}&\frac{{-z\left(\varphi{-z}\right)}}{\left(\varepsilon{-z}\right)\left(\varphi+\varepsilon{-z}\right)}\\
\frac{2\,z\left(\varphi+\varepsilon\right)}{\left(\varepsilon{-z}\right)\left(\varphi+\varepsilon{-z}\right)}&\frac{\varphi\,\varepsilon+\varphi\,z+\varepsilon^{2}+z^{2}}{\left(\varepsilon{-z}\right)\left(\varphi+\varepsilon{-z}\right)}&\frac{2\,z\left(\varphi+\varepsilon\right)}{\left(\varepsilon{-z}\right)\left(\varphi+\varepsilon{-z}\right)}\\
\frac{{-z\left(\varphi{-z}\right)}}{\left(\varepsilon{-z}\right)\left(\varphi+\varepsilon{-z}\right)}&\frac{z\varepsilon}{\left(\varepsilon{-z}\right)\left(\varphi+\varepsilon{-z}\right)}&\frac{\varepsilon\,\left(\varphi+\varepsilon\right)}{\left(\varepsilon{-z}\right)\left(\varphi+\varepsilon{-z}\right)}
\end{pmatrix}
\qquad k=2
\]

Finally, setting $\varepsilon=-2\varphi$ and then redefining $z\mapsto
\varphi z$ for compactness (effectively setting $\varphi=1$), we obtain the
full $R$-matrix:
\[
\check R=\begin{pmatrix}
1\\
&\frac{2}{z+2}&&\frac{-z}{z+2}\\
&&
\frac{2}{\left(z+1\right)\left(z+2\right)}&&\frac{{-2z}}{\left(z+1\right)\left(z+2\right)}&&\frac{z\left(z{-1}\right)}{\left(z+1\right)\left(z+2\right)}\\
&\frac{-z}{z+2}&&\frac{2}{z+2}\\
&&
\frac{{-2z}}{\left(z+1\right)\left(z+2\right)}&&\frac{z^{2}+z+2}{\left(z+1\right)\left(z+2\right)}&&\frac{{-2z}}{\left(z+1\right)\left(z+2\right)}\\
&&&&&\frac{2}{z+2}&&\frac{-z}{z+2}\\
&&
\frac{z\left(z{-1}\right)}{\left(z+1\right)\left(z+2\right)}&&\frac{{-2z}}{\left(z+1\right)\left(z+2\right)}&&\frac{2}{\left(z+1\right)\left(z+2\right)}\\
&&&&&\frac{-z}{z+2}&&\frac{2}{z+2}\\
&&&&&&&&1
\end{pmatrix}
\]
This coincides with the known expression for the spin $1$ $R$-matrix~\cite{ZamFat, KulSkl} (up to signs related to the normalization of basis vectors).
In particular, it satisfies the Yang--Baxter equation:
\[
(\check R(z_2-z_3)\otimes 1)(1\otimes \check R(z_1-z_3))(\check R(z_1-z_2)\otimes 1)
=
(1\otimes \check R(z_1-z_2))(\check R(z_1-z_3)\otimes 1)(1\otimes \check R(z_2-z_3))
\]
where $z$ should be substituted with the argument of each $\check R$. In principle,
this relation should be obtained geometrically from the $n=3$ varieties thanks to the general
properties of the stable envelope; since our varieties are singular, we cannot directly
apply the results of \cite{MO-qg}.

\begin{rmk}Even though $\P(2,2,2)$ is a toric variety, $\M_1(2,2,2)$ is {\em not} a hypertoric variety.
\end{rmk}

\subsection{The general case: $\texorpdfstring{\ell}{l}=k$}\label{sec:general}
Now consider general $\ell$ and $k$. 
The dimension formula \eqref{eq:dim2}
illustrates two simple facts about the dependence of $\M_1(k,2,\ell)$ 
on $k$, $\ell$:
\begin{itemize}
\item As long as $\ell\ge k$, $\M_1(k,2,\ell)$ is actually independent of $\ell$.
Indeed, the definition of $\M_1$ above implies that $\M_1(k,2,\ell)\subset \M_1(k,2,\ell')$
if $\ell\le\ell'$. But since they're irreducible and of the same dimension as long as $\ell,\ell'\ge k$,
they're equal.
\item $\M_1(k,2,\ell)\cong \M_1(2\ell-k,2,\ell)$.
In fact this is a special case of a more general statement for any $n$,
which is provided in App.~\ref{app:isom}, and which is similar to the
Weyl group invariance of Nakajima quiver varieties \cite{Naka-Weyl}.
\end{itemize}

The result is that if $k<\ell$, the parameter $\ell$ is unnecessary and
can be set equal to $k$, and if $k>\ell$, we can set $k'=2\ell-k$ and
we are reduced to the previous case. We shall therefore set $\ell=k$ in what
follows, without loss of generality.

If $n=2$, $k=\ell$, $T$-fixed points of $\M_1$ can be parametrized as follows.
Recall that $p_{k-j,j}$ is the fixed point given by $k_1=k-j$, $k_2=j$, 
i.e., such that the row vectors $\Qt_i$, $i=1,2$,
are cyclic vectors for Jordan blocks of size $k_i$ of $\Phi$. Then $\M_1^T(k,2,k)=\{p_{k-j,j},\ j=0,\ldots,k\}$.

The rest of this section is basically one long calculation, which we summarize now.

\subsection{Nesting of $\texorpdfstring{\P}{P}(k,2,k)$}\label{sec:nest}
There is an embedding $\P(k-1,2,k-1)\hookrightarrow \P(k,2,k)$ which can be described as follows.
\rem[gray]{Maybe replace with something like: Write $V^*=\CC\oplus V'^*$, and denote $v_0$ a non-zero element in $V'^*/ V^\ast$. The stability condition implies that $V^\ast\simeq (\Qt_1, \Qt_2)\CC[\Phi]$. If we define the action of $\Phi$ on $v_0$ as $v_0 \Phi=\Qt_1$, we may write $V'^*\simeq (v_0, \Qt_2)\CC[\Phi]$. PZJ: not convinced it's simpler} 
Write $V^*=\CC\oplus V'^*$,
where $\dim V'=k-1$. Then given an orbit $(\Qt',\Phi')$ in $\P(k-1,2,k-1)$, we associate  to it
\begin{equation}\label{eq:prenest}
\Qt =  \begin{pmatrix}
1&0
\\
0&\Qt'_2
\end{pmatrix}
\qquad
\Phi = \begin{pmatrix}
0&\Qt'_1
\\
0&\Phi'
\end{pmatrix}
\end{equation}
By comparing
the matrices $\begin{pmatrix}\Qt\\\Qt\Phi\\\vdots\\\Qt\Phi^{k-2}\end{pmatrix}$
and
 $\begin{pmatrix}\Qt'\\\Qt'\Phi'\\\vdots\\\Qt'\Phi'^{k-1}\end{pmatrix}$, 
it is easy to see that the rank of the former is $1$ plus
the rank of the latter, so the stability condition is satisfied.
Because $GL(V')\subset GL(V)$, the result sits in a single $GL(V)$ orbit and therefore the map is well-defined.
It is also obviously injective.

In terms of coordinates $x_s$, we can write:
\begin{equation}\label{eq:nestset}
\P(k-1,2,k-1)\overset{\text{set}}{\cong} \P(k,2,k)\cap \{x_{2\ldots2}=0\}
\end{equation}
Indeed, $x_{2\ldots2}=0$ means that $\Qt_2$ is {\em not}\/ a cyclic vector for $\Phi$ acting on the left, 
so that stability requires that $\Qt_1$ is not in $\Qt_2 \CC[\Phi]$. Decomposing $V^*=\CC \Qt_1
\oplus V'^*$ where $V'^*$ contains $\Qt_2\CC[\Phi]$, we find outselves in the setup of the embedding \eqref{eq:prenest}.

In order to turn this into a scheme-theoretic statement, note that
the first column of the matrix $\begin{pmatrix}\Qt'\\\Qt'\Phi'\\\vdots\\\Qt'\Phi'^{k-1}\end{pmatrix}$ has a single $1$ in the first row, which means that all the coordinates $x_s$ such that $(0,1)\not\in s$
are zero. We conclude that
\[
\P(k-1,2,k-1)\subset \P(k,2,k)\cap \{x_s=0\text{ for all } s\text{ such that }(0,1)\not\in s\}
\]
In fact, this is an equality, and by iterating, we obtain
\begin{prop}\label{prop:nest}
There is a scheme isomorphism
\begin{multline}\label{eq:nest}
\P(j,2,j)\cong \P(k,2,k)\cap \{x_s=0\text{ for all } s\text{ such that }(a,1)\not\in s\text{ for some }a< k-j \},
\\ 0\le j\le k
\end{multline}
which to a coordinate $x_s$ on $\P(j,2,j)$ associates the coordinate $x_{s'}$ on $\P(k,2,k)$ where $s'=\{(0,1),\ldots,(k-j-1,1),
\ (k-j+a,1),\ (a,1)\in s,\ (a,2),\ (a,2)\in s\}$.
\end{prop}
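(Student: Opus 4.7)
The plan is to reduce to the case $j = k-1$ by iteration, exhibiting explicit mutually inverse morphisms between $\P(k-1,2,k-1)$ and $Y := \P(k,2,k) \cap \{x_s = 0 : (0,1) \notin s\}$, and then composing the resulting embeddings $k-j$ times. The accumulation of vanishing conditions and the Pl\"ucker coordinate bijection under iteration will produce precisely the formula stated in the proposition.

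For the forward direction at $j=k-1$, the map $f : \P(k-1,2,k-1) \to \P(k,2,k)$ is given on orbit representatives by \eqref{eq:prenest}. Computing $\At$ for such a representative, one finds that its first column is $(1,0,\ldots,0)^T$, and that the submatrix obtained by removing the first row and column essentially reproduces $\At'$ of the smaller problem (with a row relabeling matching rows $(a,1)$ of the large matrix to $(a-1,1)$ of the small matrix for $a\ge1$, and $(a,2)$ to $(a,2)$, plus one extra zero row $\Qt'_2\Phi'^{k-1}$). Expanding any minor $x_s$ of $\At$ along the first column shows that $x_s$ vanishes unless $(0,1)\in s$, and when the condition is met reduces (up to sign) to $x_{s'}$ of $\P(k-1,2,k-1)$, matching the stated pullback formula.

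For the inverse, the idea is to canonically extract $(\Qt',\Phi')$ from $(\Qt,\Phi)\in Y$. The vanishing of all $k\times k$ minors of $\At$ not using the row $\Qt_1$, combined with the stability rank $\rank \At = k$, forces the subspace $V'^* := \mathrm{span}\{\Qt_i\Phi^a : (a,i)\ne(0,1)\}\subset V^*$ to have codimension $1$ and to be $\Phi$-invariant (the invariance uses $\Phi^k=0$). Its annihilator $V''\subset V$ is then a $\Phi$-invariant line contained in $\ker\Phi$, since any $1$-dimensional invariant subspace of a nilpotent operator is killed. A gauge choice brings $(\Qt,\Phi)$ into the block form \eqref{eq:prenest}, from which $(\Qt',\Phi')\in \P(k-1,2,k-1)$ is read off; the conditions $\Phi'^{k-1}=0$ and $\tr \Phi'^i=0$ follow because $\Phi'$ is a nilpotent $(k-1)\times(k-1)$ matrix and $\tr\Phi'^i=\tr\Phi^i$. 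Since this construction pulls back each $x_{s'}$ of the smaller variety to the corresponding $x_s$ of $Y$ by a polynomial operation on Pl\"ucker coordinates, it defines a morphism of schemes $g : Y\to \P(k-1,2,k-1)$, independent of the gauge choice.

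The main obstacle is upgrading this to a scheme-theoretic isomorphism, i.e., verifying that $Y$ is reduced and that $f$ and $g$ are mutually inverse as morphisms of schemes (not merely of sets). I would handle this by checking mutual inverseness directly on graded coordinate rings: the generators of the coordinate ring of $\P(k-1,2,k-1)$ pull back under $f^\sharp g^\sharp$ to themselves by the matching formulas above, and likewise on the other side $Y$ modulo the vanishing relations. This gives the isomorphism for $j=k-1$; the general case $j<k-1$ follows by iteration, the $(k-j)$-fold composition producing the index shift $(a,1)\mapsto(k-j+a,1)$, the fixed second-index part $(a,2)\mapsto(a,2)$, and the mandatory rows $(0,1),\ldots,(k-j-1,1)$ in $s'$ as stated.
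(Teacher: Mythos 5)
Your set-theoretic analysis is essentially correct, and the overall strategy of building the isomorphism by iterating the $j=k-1$ step matches the paper's. However, the proof as written has a genuine gap exactly at what you yourself identify as ``the main obstacle'': you never actually prove that the right-hand side $Y=\P(k,2,k)\cap\{x_s=0:(0,1)\not\in s\}$ is reduced, or equivalently, that your candidate inverse $g^\sharp$ is a well-defined map of graded coordinate rings. The phrase ``checking mutual inverseness directly on graded coordinate rings'' is not a proof: the defining ideal of $Y$ is $I(\P(k,2,k))+\langle x_s:(0,1)\not\in s\rangle$, which a priori need not be radical, and to show that $g^\sharp(x_{s'})=x_s$ extends to a ring homomorphism one must verify that every relation in $I(\P(j,2,j))$ is sent into this possibly-non-radical ideal. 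That verification is the whole content of the proposition; asserting that generators ``pull back to themselves'' only shows that the maps would be inverse \emph{if} both were already known to be well defined. Once $Y$ is known reduced the scheme isomorphism follows at once from the set-theoretic bijection, so the circularity is real.

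The paper closes this gap by an entirely local argument that you have not reproduced. It works inside the affine patches $A_j$ of Prop.~\ref{prop:res1}, where $\P(k,2,k)\cap A_j$ has explicit coordinates $a_i,b_i,c_i,d_i$ and defining equations $\tr\Phi^i=0$, $i=1,\dots,k$. In that chart the vanishing conditions defining $Y$ become simply $a_0=\dots=a_{k-j'-1}=c_0=\dots=c_{k-j'-1}=0$; one then observes that the characteristic polynomial of $\Phi$ acquires $k-j'$ automatic zeroes, so only $j'$ trace equations remain. Counting $2j'$ remaining variables, $j'$ equations, and $\dim Y=j'$ shows $Y$ is a local complete intersection, hence Cohen--Macaulay with no embedded components; it remains to check generic reducedness, which follows by inspecting the chart $A_0$ around the manifestly smooth fixed point $p_{k,0}$. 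You should either reproduce this local argument or give a genuinely different proof that the ideal $I(\P(k,2,k))+\langle x_s:(0,1)\not\in s\rangle$ is radical; without it, your step from set-theoretic to scheme-theoretic isomorphism is unjustified.

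Two smaller points. First, it would be cleaner to note explicitly that $\Phi$-invariance of the codimension-one subspace $V'^*$ uses $\Phi^k=0$ to handle the top degrees $\Qt_i\Phi^{k-1}$. Second, when you iterate the $j=k-1$ step $k-j$ times you should check that the accumulated vanishing conditions on $\P(k,2,k)$ are exactly $\{x_s=0:(a,1)\not\in s\text{ for some }a<k-j\}$ rather than something weaker (e.g.\ obtained by nesting inside the intermediate $\P(k',2,k')$); this is true, but since you are aiming at a scheme-theoretic identity of closed subschemes of $\P(k,2,k)$, it needs to be said.
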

We mention this result without proof for now in order to motivate a definition that follows.
The proof will be given after Prop.~\ref{prop:res1}.

In the alternate notation $x_s=x_{s_0\ldots s_{k-1}}$, the effect of the embedding is to add $k-j$ $1$s to the left,
pushing the existing $1$s $k-j$ steps to the right.

\subsection{Attracting sets}
We note that because $\Z(k,2,k)=\Z(k,2,\ell)$ for all $\ell\ge k$, by letting $\ell$ be sufficiently large
in \eqref{eq:defM2}, we can simplify the definition \eqref{eq:defZ} to
\[
\Z(k,2,k)=\{(Q,\tilde Q,\Phi): \Phi^k=0,\ (\Qt\Phi^iQ)_{1,1}=(\Qt\Phi^iQ)_{2,1}=(\Qt\Phi^iQ)_{2,2}=0
\}/\!/GL(k)
\]
We can now determine the structure of $\Z$; to simplify the notation,
we identify the set of fixed points with $\{0,\ldots,k\}$ and thus
write $\Z_j=\Z_{p_{k-j,j}}$.
\begin{prop}\label{prop:dec}
The decomposition of $\Z(k,2,k)$ into (geometric) irreducible components is
\[
\Z(k,2,k) \overset{\text{set}}{=} \bigcup_{j=0}^k \overline{\Z_{j}(k,2,k)},
\qquad \overline{\Z_{j}(k,2,k)} \cong \P(j,2,j) \times \CC^{k-j}
\]
where $\P(j,2,j)$ is embedded in $\P(k,2,k)$ as in Prop.~\ref{prop:nest},
and $\CC^{k-j}=\text{Spec}(\CC[u_{0,1,2},\ldots,u_{k-j-1,1,2}])$ is the fiber of the map $p: \M(k,2,k)\to \P(k,2,k)$ restricted to $\overline{\Z_j(k,2,k)}$.
\end{prop}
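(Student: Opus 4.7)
The plan is to construct, for each $j\in\{0,\ldots,k\}$, an explicit closed embedding
$\phi_j\colon \P(j,2,j)\times \CC^{k-j}\hookrightarrow \M(k,2,k)$,
and show that its image $Y_j$ coincides with $\overline{\Z_{p_{k-j,j}}}$. These $Y_j$ will then turn out to be the irreducible components.

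First I would define $\phi_j$ by iterating the embedding of Prop.~\ref{prop:nest} $k-j$ times to send $(\Qt',\Phi')$ to a canonical representative $(\Qt,\Phi)$ in $\P(k,2,k)$: in a splitting $V=\CC^{k-j}\oplus V'$, this representative has $\Qt_1=e_1^*$ on $\CC^{k-j}$, $\Qt_2$ equal to $\Qt'_2$ supported on $V'$, and $\Phi$ acting as the shift $e_{i+1}\mapsto e_i$ on the $\CC^{k-j}$ block with $\Qt'_1$ providing the feed-in from $V'$ into $e_{k-j}$. For the affine factor, set $Q_1=0$ and $Q_2=\sum_{i=0}^{k-j-1} t_i\,e_{i+1}$, living entirely in $\CC^{k-j}$. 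A direct calculation then yields: (i) the complex moment map $\sum_i \Phi^i Q\Qt\Phi^{k-1-i}=0$ holds, since every term either factors through the vanishing of $Q_2\big|_{V'}$ or through $\Phi'^m$ with $m\ge j$; (ii) $u_{m,a,b}=(\Qt\Phi^m Q)_{a,b}$ vanishes except $u_{m,1,2}=t_m$ for $0\le m<k-j$; (iii) hence $\phi_j$ lands in $\Z$. Injectivity modulo $GL(V)$-equivalence follows because the normal form is rigid up to the residual $GL(V')$, which is exactly what is quotiented in $\P(j,2,j)$.

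To identify $Y_j$ with $\overline{\Z_{p_{k-j,j}}}$, I would analyze the $\CC^\times$-cocharacter: each $t_i$ scales with positive weight $z_1-z_2$ so $t_i\to 0$, while on the $\P(j,2,j)$ factor the action restricts to the analogous cocharacter whose generic point flows to the ``top'' fixed point $p_{0,j}$, which under iteration of Prop.~\ref{prop:nest} maps to $p_{k-j,j}\in \P(k,2,k)$. Thus $\phi_j^{-1}(\Z_{p_{k-j,j}})$ is open dense in $\P(j,2,j)\times\CC^{k-j}$, giving $Y_j\subseteq \overline{\Z_{p_{k-j,j}}}$. For the reverse inclusion, the limit condition $\lim_{t\to 0} t\cdot m=p_{k-j,j}$, combined with the equations of $\M$ (notably the relations among Pl\"ucker coordinates forced by $\Phi^k=0$), compels every such $m$ to satisfy the scheme-theoretic conditions defining the nested $\P(j,2,j)\subset \P(k,2,k)$, hence $m\in Y_j$. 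Set-theoretically $\Z=\bigcup_j Y_j$ because every point of $\Z$ has a $\CC^\times$-limit (properness of $\P$ gives convergence of the Pl\"ucker coordinates, and positivity of weights gives $u_{i,1,2}\to 0$). Each $Y_j$ is irreducible of dimension $j+(k-j)=k$, and the $Y_j$ are pairwise incomparable of equal dimension (distinguished by which $p_{k-j,j}$ occurs as a generic attractor within each component and by which $u_{i,1,2}$'s survive), so the $Y_j$ are precisely the irreducible components of $\Z$.

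The hardest step is the reverse inclusion $\overline{\Z_{p_{k-j,j}}}\subseteq Y_j$: the Bialynicki--Birula argument only forces $x_s=0$ for $k_1(s)\le k-j$ (with the fine condition at equality), whereas the nested $\P(j,2,j)$-structure additionally demands vanishing of certain $x_s$ with $k_1(s)>k-j$ that do not contain $\{(0,1),\ldots,(k-j-1,1)\}$. Extracting these further vanishings from the algebraic relations inside $\P(k,2,k)$ coming from $\Phi^k=0$ is the technical heart of the argument. A viable backup is to work locally on the affine chart $x_{s^*}\ne 0$ around $p_{k-j,j}$, construct explicit coordinates on $\Z_{p_{k-j,j}}$ by $GL(V)$-normalization, and match them directly with those pulled back from $\phi_j$.
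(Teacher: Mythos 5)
Your construction of the embeddings $\phi_j\colon\P(j,2,j)\times\CC^{k-j}\hookrightarrow\Z(k,2,k)$ is correct and in substance matches the map the paper uses, and the inclusion $Y_j\subseteq\overline{\Z_{p_{k-j,j}}}$ is fine. But the crucial reverse inclusion $\overline{\Z_{p_{k-j,j}}}\subseteq Y_j$ — which you yourself flag as ``the technical heart'' and then defer to a vague backup plan — is a genuine gap, not a detail. Without it you have neither the irreducible-component statement (the $Y_j$ could each be a proper subvariety of $\overline{\Z_j}$) nor, a priori, the set-theoretic equality $\Z=\bigcup_j Y_j$ (which follows from $\Z=\bigcup_j\overline{\Z_j}$ only once $\overline{\Z_j}=Y_j$ is known). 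The Bia{\l}ynicki--Birula limit condition by itself does not produce the full set of Pl\"ucker vanishings cutting out the nested $\P(j,2,j)$, and extracting them from $\Phi^k=0$ for every $j$ is exactly the work your proposal does not carry out.

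The paper circumvents this difficulty entirely by arguing inductively. It first proves the single-step decomposition $\Z(k,2,k)=\P(k,2,k)\cup\Z'(k,2,k)$ by exhibiting an \emph{equation} of $\Z$, namely $u_{a,1,2}\,x_{2\ldots2}=0$ (a specialization of \eqref{eq:bil}); so $\Z$ is the union of the locus $\{u_{a,1,2}=0\ \forall a\}=\P(k,2,k)$ and $\{x_{2\ldots2}=0\}=p^{-1}\P(k-1,2,k-1)$. It then identifies $\Z'(k,2,k)\cong\Z(k-1,2,k-1)\times\CC$ by extending the nesting map of Sect.~\ref{sec:nest} to include $Q$, and checks this is an isomorphism. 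Finally, only the top component needs to be handled directly: because $p_{k,0}$ lies in the smooth symplectic locus $\M_1^o$, and limits can only degenerate the Jordan form, $\Z_k\subset\M_1^o$; on a smooth symplectic variety attracting sets of a Hamiltonian $\CC^\times$ are Lagrangian, irreducible, of half dimension, so $\overline{\Z_k}=\P(k,2,k)$ by dimension count. Induction then carries the result down to all $j$. In short, the paper needs a direct attractor-vs-closure identification only for $j=k$ (where smoothness is available), whereas your plan attempts it for every $j$ — precisely on the singular locus where that identification is hard. If you want to salvage your approach, the fix is to insert the equation $u_{a,1,2}\,x_{2\ldots2}=0$ and the $\Z'\cong\Z(k-1)\times\CC$ isomorphism and run the same induction; that is, to adopt the paper's strategy.
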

\begin{proof}
The proof is inductive. Among the equations of $\Z(k,2,k)$ we have, starting from \eqref{eq:bil},
setting $i=2$, and choosing $s=\{0,\ldots,k-1\}\times \{2\} \cup \{(a,1)\}$ for some $a\in \{0,\ldots,k-1\}$,
\[
u_{a,1,2}x_{2\ldots2}=0,\qquad a=0,\ldots,k-1
\]

This implies the decomposition
\[
\Z(k,2,k)=\P(k,2,k)\cup \Z'(k,2,k)
\]
where according to \eqref{eq:nestset}, $\Z'(k,2,k)\overset{\text{set}}{=}\Z(k,2,k)\cap p^{-1}\P(k-1,2,k-1)$.

Next, we wish to extend the nesting of Sect.~\ref{sec:nest} to include the variables $u_i$, or equivalently before the quotient by
$GL(V)$, the variables $Q$.

We define a map from $\Z(k-1,2,k-1)\times\CC$ to $\Z'(k,2,k)$ as follows. Given an orbit $(Q,\Qt,\Phi)$ and a scalar $\lambda\in\CC$,
$\Qt'$ and $\Phi'$ are defined as in \eqref{eq:prenest}. As to $Q'$, we expand it as
\[
Q'=\begin{pmatrix}
\kappa&\lambda
\\
Q_1&Q_2
\end{pmatrix}
\]
and then set $\kappa=0$;
and need to show that the linear equations satisfied by $Q'$ in $\Z'$ are equivalent to the ones satisfied by $Q=(Q_1,Q_2)$ in $\Z$,
plus $\kappa=0$.

These equations are the strict upper triangularity of $u'_i=\Qt'\Phi'^iQ'$. At $i=0$, it is equivalent to
$\kappa=0$ and $\Qt_2Q_1=\Qt_2Q_2=0$ (in matrix notations,
i.e., pairing between $V^*$ and $V$ implied), and for $i>0$, to $\Qt_1\Phi^{i-1}Q_1=\Qt_2\Phi^i Q_1=\Qt_2\Phi^i Q_2=0$.
These coincide, as expected, with the strict upper triangularity of $u_i=\Qt\Phi^iQ'$, plus $\kappa=0$.

We conclude that the map is an isomorphism. Note that the inverse map is given in terms
of the variables $(x_s,u_i)$ of Sect.~\ref{sec:param} by
$
(x_s,u_i) 
\mapsto ((x_{s'},u_{i-1}),u_{0,1,2})
$
where $s'$ is as in Prop.~\ref{prop:nest}.

Since $p_{k,0}\in \P(k,2,k)$ but $p_{k,0}\not\in \Z'(k,2,k)$, $\P(k,2,k)$ is a geometric component
of $\Z(k,2,k)$. Furthermore, $p_{k,0}$ corresponds to $\Phi$ having a single Jordan block of size $k$, i.e.,
$p_{k,0}\in \M_1^o$ (in particular, it is a smooth point of $\M_1$). Since taking limits can
only decrease the Jordan form of $\Phi$ in dominance order, $\Z_k(k,2,k)\subset \M_1^o$.
On $\M_1^o$, the action of $T_0$ preserves the symplectic form, 
which implies that its attracting set $\Z_k(k,2,k)$ is Lagrangian, irreducible; 
since $\dim \P(k,2,k)=k=\frac{1}{2}\dim \M_1(k,2,k)$, 
we conclude by irreducibility and dimension that $\overline{\Z_k(k,2,k)}=\P(k,2,k)$.
The result follows by induction.
\end{proof}

In particular, $\Z(k,2,k)$ is equidimensional of the expected dimension $k$, 
i.e., the $\Z_j$, which are isotropic as attracting sets of a Hamiltonian flow, 
are of the correct dimension to be Lagrangian;
we conclude that Conjecture~\ref{conj:lagr} is true at $n=2$.

There is a (partial) order on fixed points given by the transitive closure of
\[
p\in \overline{\Z_q}
\ \Rightarrow\ p\le q
\]
(cf \cite[Sec.~3.2.3]{MO-qg}). Here we find that it coincides with the natural (total) order
on $\{0,\ldots,k\}$.

\subsection{Stable subschemes}
The observations above motivate the following definition.
The {\em stable subscheme}\/ $\S_{j}$ associated to the fixed point $p_{k-j,j}$ is
\begin{align}\label{eq:defS}
\S_{j}(k,2,k) &= \Z(k,2,k)\cap p^{-1} \P(j,2,j),\qquad 0\le j\le k
\\\notag
&=
\Z(k,2,k) \cap \{x_s=0\text{ for all } s\text{ such that }(a,1)\not\in s\text{ for some }a\le k-j \}
\end{align}
where in the first line, $\P(j,2,j)$ is embedded in $\P(k,2,k)$ using \eqref{eq:nest}.
Note that this subscheme is {\em not}\/ reduced; set-theretically, it is just
$\S_{j}\overset{\text{set}}{=}\bigsqcup_{0\le i\le j}\Z_i$. For example, $\S_k$ is simply the attracting scheme
$\Z$ itself.

\subsection{Restriction to fixed points}\label{sec:res}
To each fixed point $p_{k-j,j}$ is naturally associated an affine patch $A_j=\{x_s\ne 0\}$
(where $s$ is given as in Prop.~\ref{prop:fp}),
which contains it (and no other). Explicitly, $s=\{(0,1),\ldots,(k-j-1,1),(0,2),\ldots,(j-1,2)\}$.

We now compute the expansion of the classes of attracting varieties $\Z_j(k,2,k)$
and stable subschemes $\S_j(k,2,k)$ into classes of fixed points in localized equivariant homology.
We do so by restricting them to each patch $A_j$.
As warming up, we first consider the analogous calculation for $\P(k,2,k)$.
Denote $z=z_1-z_2$ (the root of $PGL(2)$).
\begin{prop}\label{prop:res1}
For any $0\le j\le k$,
$\P(k,2,k)\cap A_j$ is a (scheme-theoretic) complete intersection; its variables can be chosen to be
$x_{(k-j,1) \cup s\backslash (p,q)}/x_s$ and $x_{(j,2) \cup s\backslash (p,q)}/x_s$ where $(p,q)$ runs over $s$;
and have $\hat T$-weights
\begin{align*}
\wt{x_{(k-j,1) \cup s\backslash (i,1)}/x_s}&=\varphi(-i+k-j), && i=0,\ldots,k-j-1
\\
\wt{x_{(k-j,1) \cup s\backslash (i,2)}/x_s}&=z+\varphi(-i+k-j),&& i=0,\ldots,j-1
\\
\wt{x_{(j,2) \cup s\backslash (i,1)}/x_s}&=-z+\varphi(-i+j),&& i=0,\ldots,k-j-1
\\
\wt{x_{(j,2) \cup s\backslash (i,2)}/x_s}&=\varphi(-i+j),&& i=0,\ldots,j-1
\end{align*}
whereas its equations have $\hat T$-weights $i\varphi$, $i=1,\ldots,k$.
\end{prop}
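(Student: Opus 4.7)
I would begin by exploiting the $GL(V)$ gauge freedom inside the affine patch $A_j=\{x_s\neq 0\}$, where $s=\{(0,1),\ldots,(k-j-1,1),(0,2),\ldots,(j-1,2)\}$. The condition $x_s\neq 0$ says that the $k\times k$ submatrix $\At_s$ of $\At$ with rows indexed by $s$ is invertible, so the $GL(V)$ action can be used to bring $\At_s$ uniquely to $I_k$; this gauge has trivial stabilizer. In this canonical form $\Qt_1=e_1^*$, $\Qt_2=e_{k-j+1}^*$, and the identities $\Qt_1\Phi^i=e_{i+1}^*$ for $0\le i\le k-j-1$, $\Qt_2\Phi^i=e_{k-j+i+1}^*$ for $0\le i\le j-1$, recursively force rows $1,\ldots,k-j-1$ and $k-j+1,\ldots,k-1$ of $\Phi$ to equal the corresponding shifted basis vectors; only rows $k-j$ and $k$ of $\Phi$ are free. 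Stability is automatic, since $\At_s=I_k$ already has rank $k$. Hence $A_j$ is identified, as a scheme, with $\CC^{2k}$, coordinatized by the $2k$ entries of the two free rows of $\Phi$.

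Next, I would verify that the $2k$ Pl\"ucker ratios in the statement give exactly these $2k$ coordinates. Expanding the minor $x_{(k-j,1)\cup s\setminus(p,q)}$ along its new row and using $\At_s=I_k$ shows that $x_{(k-j,1)\cup s\setminus(p,q)}/x_s$ equals, up to sign, the entry of row $(k-j,1)$ of $\At$ at the column of $V$ indexed by the position of $(p,q)$ in $s$. But in the chosen gauge, row $(k-j,1)=\Qt_1\Phi^{k-j}=e_{k-j}^*\Phi$ is row $k-j$ of $\Phi$, and symmetrically row $(j,2)=\Qt_2\Phi^j=e_k^*\Phi$ is row $k$ of $\Phi$. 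As $(p,q)$ varies over $s$, the two families of ratios parameterize, respectively, the $k$ entries of row $k-j$ and the $k$ entries of row $k$, giving a coordinate system on $A_j$. The weight formulas then follow immediately from the observation that row $(i,a)$ of $\At$ carries $\hat T$-weight $z_a+i\varphi$, since the weight of a ratio of minors differing in exactly one row equals the weight of the inserted row minus that of the removed row; specializing to the four combinations of $q\in\{1,2\}$ for the inserted and removed rows produces the four cases in the proposition.

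Third, I would show that the defining ideal of $\P\cap A_j$ is generated by $k$ elements with the announced weights. Let $J=(\tr\Phi,\tr\Phi^2,\ldots,\tr\Phi^{k-1})\subset\CC[A_j]$. By Newton's identities, the vanishing of the first $k-1$ power sums of the eigenvalues of $\Phi$ forces the vanishing of the first $k-1$ elementary symmetric functions, so modulo $J$ one has $\chi_\Phi(t)\equiv t^k+(-1)^k\det\Phi$, and Cayley--Hamilton then yields $\Phi^k\equiv(-1)^{k+1}(\det\Phi)\,I_k\pmod J$. Consequently, the ideal $I$ generated by $J$ together with all $k^2$ entries of $\Phi^k$ equals $J+(\det\Phi)=(\tr\Phi,\tr\Phi^2,\ldots,\tr\Phi^{k-1},\det\Phi)$, a $k$-generated ideal whose generators have $\hat T$-weights $\varphi,2\varphi,\ldots,k\varphi$. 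Since $\P(k,2,k)$ is irreducible of dimension $k$ (Prop.~\ref{prop:dec}) and $p_{k-j,j}\in\P\cap A_j$, the intersection is a nonempty open subscheme of $\P$ of Krull dimension $k$, hence of codimension $k$ in $A_j$. The polynomial ring $\CC[A_j]$ is Cohen--Macaulay, so by Krull's height theorem and the unmixedness theorem the $k$ generators of $I$ form a regular sequence, which gives the complete intersection structure.

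The most delicate point is the Cayley--Hamilton congruence in the third step: one must check that $\Phi^k\equiv(-1)^{k+1}(\det\Phi)I_k\pmod J$ holds scheme-theoretically, so that the $k^2$ entries of the matrix equation $\Phi^k=0$ genuinely collapse to the single scalar equation $\det\Phi=0$ inside $I$. Once this reduction is secured, the remaining assertions reduce to standard commutative algebra and the dimension count of $\P$ already established.
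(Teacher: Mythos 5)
Your proof is correct and takes essentially the same route as the paper's: fix the $GL(V)$ gauge in the patch $A_j$ so that the rows $\Qt_1,\ldots,\Qt_1\Phi^{k-j-1},\Qt_2,\ldots,\Qt_2\Phi^{j-1}$ are the standard basis of $V^*$, observe that rows $k-j$ and $k$ of $\Phi$ are then the $2k$ free coordinates coinciding with the stated Pl\"ucker ratios, and conclude complete intersection by counting $2k$ variables against $k$ equations and the already-known $\dim\P(k,2,k)=k$. The one place where you supply a nontrivial argument that the paper passes over in silence is the reduction of the defining equations: the paper simply asserts that ``the equations defining $\P(k,2,k)$ as a scheme are simply $\tr\Phi^i=0$, $i=1,\ldots,k$,'' whereas by the definition \eqref{eq:defP} the scheme is cut out by $\tr\Phi^i=0$ ($i<k$) together with all $k^2$ entries of $\Phi^k=0$. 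Your Newton's-identities-plus-Cayley--Hamilton argument showing that $\Phi^k$ lies in the ideal generated by $\tr\Phi,\ldots,\tr\Phi^{k-1},\det\Phi$ (and that $\det\Phi$ may be traded for $\tr\Phi^k$ modulo the lower traces) is exactly the justification this step needs, and you are right to single it out as the delicate point; it is a genuine gap in the paper's exposition that you have filled in. The rest is identical to the paper's argument, including the appeal to the Cohen--Macaulay property of $\CC[A_j]$ and the codimension count to get a regular sequence.
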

\begin{proof}
By homogeneity we can assume $x_s=1$.
The idea is that we go back to variables $(\Qt,\Phi)$ and then fix the $GL(V)$ gauge freedom by picking
as basis vectors of $V^*$ the row vectors whose exterior product $x_s$ is computing, namely,
$\Qt_1,\ldots,\Qt_1\Phi^{k-j-1},\Qt_2,\ldots,\Qt_2\Phi^{j-1}$.

We first take care of the two trivial cases $j=0$, $j=k$ (these correspond to the two smooth
fixed points). For say $j=0$,
$\Phi$ takes its Jordan normal form, whereas $\Qt_1$ is fixed to be the first basis vector, and $\Qt_2$
is free; its entries are precisely the numerators of the 
variables in the third row of the statement of the Proposition.
As to entries in the first row, they are all zero, and we can redundantly introduce them, and check that the
equations of their vanishing have the correct $\hat T$-weight. The case $j=k$ is treated similarly.

Now assume $0<j<k$. Then both $\Qt_1$ and $\Qt_2$ are fixed:
\[
\Qt=\begin{tikzpicture}[
baseline=-\the\dimexpr\fontdimen22\textfont2\relax]
\matrix[matrix of math nodes,left delimiter=(,right delimiter=),ampersand replacement=\&] (m){
1\&0\&\ldots\&0\\
\&\&\&\&1\&0\&\ldots\&0\\
};
\node[rectangle,draw,inner sep=-1pt,fit=(m-1-1) (m-1-4)] (a) {};
\node[rectangle,draw,inner sep=-1pt,fit=(m-2-5) (m-2-8)] (b) {};
\draw[latex-latex] ([yshift=3pt]a.north west) -- node[above] {$\ss k-j$} ([yshift=3pt]a.north east);
\draw[latex-latex] ([yshift=3pt]b.north west) -- node[above] {$\ss j$} ([yshift=3pt]b.north east);
\end{tikzpicture}
\]
As to $\Phi$, all but two rows are fixed:
\[
\Phi=
\begin{tikzpicture}[
baseline=-\the\dimexpr\fontdimen22\textfont2\relax]
\matrix[matrix of math nodes,left delimiter=(,right delimiter=),ampersand replacement=\&,nodes in empty cells] (m){
0\&1\\
 \&\ddots\&\ddots\\[1mm]
\& \&0\&1\\
a_0\&\&\node[left]{\ldots};\&a_{k-j-1}\&b_0\&\&\node[left]{\ldots};\& b_{j-1}\\
\&\& \& \&0\&1\\
\&\& \& \&\&\ddots\&\ddots\\
\&\& \& \&\&\&0\&1\\
c_0\&\&\node[left]{\ldots};\&c_{k-j-1}\&d_0\&\&\node[left]{\ldots};\& d_{j-1}\\
};
\node[rectangle,draw,inner sep=-1pt,fit=(m-1-1) (m-3-4)] (a) {};
\node[rectangle,draw,inner sep=-1pt,fit=(m-5-5) (m-7-8)] (b) {};
\draw[latex-latex] ([xshift=3pt]a.north east) -- node[right] {$\ss k-j-1$} ([xshift=3pt]a.south east);
\draw[latex-latex] ([xshift=-3pt]b.north west) -- node[left] {$\ss j-1$} ([xshift=-3pt]b.south west);
\end{tikzpicture}
\]
One easily checks that the $a_i,b_i,c_i,d_i$ are exactly the numerators of the variables in the statement of the Proposition,
in the same order.

The equations defining $\P(k,2,k)$ as a scheme are simply $\tr \Phi^i=0$, $i=1,\ldots,k$. 
\rem[gray]{we're using implicitly the fact that eqs are invariant already}
Since there are $2k$ variables and $k$ equations, and $\dim \P(k,2,k)=k$ (as was already observed
in the proof of Prop.~\ref{prop:dec}),
$\P(k,2,k)\cap A_j$ is a complete intersection. Moreover, the equation $\tr \Phi^i=0$ has $\hat T$-weight
$i\varphi$.

Finally, one checks by direct computation that the variables have the weights as stated in the Proposition.
\end{proof}
\begin{proof}[Proof of Prop.~\ref{prop:nest}]
Recall that in Sect.~\ref{sec:nest}, we stated without proof Prop.~\ref{prop:nest}.
Since the set-theoretic statement follows from applying \eqref{eq:nestset} inductively,
we only need to show that the r.h.s.\ of \eqref{eq:nest}, namely
\[
X=\P(k,2,k)\cap \{x_s=0\text{ for all } s\text{ such that }(a,1)\not\in s\text{ for some }a< k-j' \}
\]
(where we renamed $j$ to $j'$ for convenience)
is reduced in order to conclude that it is isomorphic to the l.h.s.\ $\P(j',2,j')$. We are now in a position to do so.

In the patch $A_j$, $j\le j'$, and with the same parametrization above, the vanishing
of the $x_s$ for which there exists $a<k-j'$ such that $(a,1)\not\in s$ is equivalent to
\[
a_0=\cdots=a_{k-j'-1}=c_0=\cdots=c_{k-j'-1}=0
\]
Due to these zeroes, the characteristic polynomial of $\Phi$ has $k-j'$ trivial zeroes,
and so it is sufficient to impose the $j'$ equations $\tr \Phi^i=0$, $i=1,\ldots,j'$ in order to ensure that $\Phi$
is nilpotent. We have $2j'$ variables, $j'$ equations, $\dim X=j'$, so $X$ is a local complete intersection.

Now specialize to $j=0$. In the patch $A_0$,
it is obvious that the fixed point $p_{k,0}$ is a smooth point of $X$.
Therefore $X$ is generically reduced. Because it is also a local complete intersection, it is reduced.
\end{proof}

\begin{prop}\label{prop:res2}
For any $0\le j\le j'\le k$,
$\overline{\Z_{j'}}\cap A_j$ is a (scheme-theoretic) complete intersection; 
its variables and their $\hat T$-weights are:
\begin{align*}
\wt{x_{(k-j,1) \cup s\backslash (i,1)}/x_s}&=(-i+k-j)\varphi, && i=k-j',\ldots,k-j-1
\\
\wt{x_{(k-j,1) \cup s\backslash (i,2)}/x_s}&=(-i+k-j)\varphi+z,&& i=0,\ldots,j-1
\\
\wt{x_{(j,2) \cup s\backslash (i,1)}/x_s}&=(-i+j)\varphi-z,&& i=k-j',\ldots,k-j-1
\\
\wt{x_{(j,2) \cup s\backslash (i,2)}/x_s}&=(-i+j)\varphi,&& i=0,\ldots,j-1
\\
\wt{u_{i,1,2}}&=i\varphi+\varepsilon+z,&&i=0,\ldots,k-j'-1
\end{align*}
whereas its equations have $\hat T$-weights $i\varphi$, $i=1,\ldots,j'$.

Consequently, one has the following expansion in localized $H_*^{\hat T}(\M_1)$:
\begin{equation}\label{eq:res2}
[\overline{\Z_{j'}}]_{\hat T}=\sum_{j=0}^{j'} \frac{{j'\choose j}}{
\prod_{i=0}^{k-j'-1} (i\varphi+\varepsilon+z)
\prod_{i=k-2j+1}^{k-j}(i\varphi+z)
\prod_{i=2j-k+1}^{j'-k+j}(i\varphi-z)
}
 [p_{k-j,j}]_{\hat T}
\end{equation}
\end{prop}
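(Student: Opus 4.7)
The plan is to combine Propositions~\ref{prop:dec} and \ref{prop:res1} with standard equivariant localization. By Prop.~\ref{prop:dec}, $\overline{\Z_{j'}}\cong\P(j',2,j')\times\CC^{k-j'}$, and the fixed point $p_{k-j,j}$ lies in this variety exactly when $j\le j'$ (otherwise the coefficient vanishes by support), corresponding to the pair $(p_{j'-j,j},0)$. Under the embedding of Prop.~\ref{prop:nest}, the patch $A_j$ pulls back to the analogous patch $A_j^{(j')}$ of $\P(j',2,j')$ (the Pl\"ucker coordinate $x_s$ is preserved up to the relabeling of its entries described there).

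Next, I would transport the proof of Prop.~\ref{prop:res1} to $\P(j',2,j')$ at $p_{j'-j,j}$: this produces a complete intersection with $2j'$ free entries of $\Phi$ as coordinates and the $j'$ trace equations $\tr\Phi^i=0$, $i=1,\ldots,j'$, as defining equations. Multiplying by the smooth factor $\CC^{k-j'}$ (coordinatized by $u_{0,1,2},\ldots,u_{k-j'-1,1,2}$) then supplies exactly $k-j'$ further variables and no new equations, yielding the advertised $(j'+k)$-variable, $j'$-equation complete intersection. The weight of each $x$-variable follows from $\wt{\Qt_a\Phi^i}=z_a+i\varphi$ (with $z:=z_1-z_2$) by taking the signed sum of row weights in $\At$; the index range $i\in\{k-j',\ldots,k-j-1\}$ appearing in the first and third families reflects the index shift $i\mapsto i+(k-j')$ built into the embedding of Prop.~\ref{prop:nest}. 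The weight of $u_{i,1,2}=\Qt_1\Phi^iQ_2$ is read off immediately as $i\varphi+\varepsilon+z$, and the equation $\tr\Phi^i$ has pure $\varphi$-weight $i\varphi$.

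Formula \eqref{eq:res2} then follows from the equivariant multiplicity formula for a local complete intersection at a torus fixed point in an affine slice, namely $\cf_p=\prod\wt{f_i}/\prod\wt{x_a}$. The numerator $j'!\,\varphi^{j'}$ partially cancels against the $\varphi$-only denominator $(j'-j)!\,j!\,\varphi^{j'}$ (coming from the first and fourth families of $x$-variables) to produce the binomial $\binom{j'}{j}$, while the remaining $z$- and $\varepsilon$-containing weights give the three denominator products displayed. The main step requiring care is the scheme-theoretic complete intersection claim itself; but this is a direct adaptation of the last paragraph of the proof of Prop.~\ref{prop:nest} to $\P(j',2,j')$, so no new obstacle arises beyond careful bookkeeping of weights under the embedding.
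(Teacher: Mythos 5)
Your proposal is correct and follows essentially the same route as the paper, which likewise obtains the complete-intersection structure by combining Propositions~\ref{prop:nest}, \ref{prop:dec} and \ref{prop:res1} and then applies the weights-of-equations over weights-of-variables localization formula; your extra bookkeeping (computing the weights directly in the ambient $\P(k,2,k)$ rather than transporting the intrinsic weights of $\P(j',2,j')$, which would be off by multiples of $(k-j')\varphi$ in the mixed families) is exactly the right care to take.
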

\begin{proof}
The first statement is a combination of Prop.~\ref{prop:nest}, \ref{prop:dec} and \ref{prop:res1}.

In general, in order to compute the coefficients of the expansion into fixed points in localized
equivariant homology, 
one should take the Hilbert series of the coordinate ring
in each patch around each fixed point (which would be the coefficient in localized equivariant $K$-theory)
and then expand it at first order in the
weights. In the case of local complete intersections, the result is simply that
the coefficient is the ratio of the product of weights of the equations by the product of weights of the variables.
This is the content of the second part of the proposition, after some minor rewriting of the formula.
\end{proof}

\begin{prop}\label{ref:res3}
For any $0\le j\le j'\le k$,
$\S_{j'}\cap A_j$ is a (scheme-theoretic) complete intersection; its variables and their $\hat T$-weights are:
\begin{align*}
\wt{x_{(k-j,1) \cup s\backslash (i,1)}/x_s}&=(-i+k-j)\varphi, && i=k-j',\ldots,k-j-1
\\
\wt{x_{(k-j,1) \cup s\backslash (i,2)}/x_s}&=(-i+k-j)\varphi+z,&& i=0,\ldots,j-1
\\
\wt{x_{(j,2) \cup s\backslash (i,1)}/x_s}&=(-i+j)\varphi-z,&& i=k-j',\ldots,k-j-1
\\
\wt{x_{(j,2) \cup s\backslash (i,2)}/x_s}&=(-i+j)\varphi,&& i=0,\ldots,j-1
\\
\wt{u_{i,1,2}}&=i\varphi+\varepsilon+z,&&i=0,\ldots,k-j-1
\end{align*}
whereas its equations have $\hat T$-weights $i\varphi$, $i=1,\ldots,j'$,
and $i\varphi+\varepsilon$, $i=j,\ldots,j'-1$.

Consequently, one has the following expansion in localized $H_*^{\hat T}(\M_1)$:
\begin{equation}\label{eq:res3}
[\S_{j'}]_{\hat T}=\sum_{j=0}^{j'} \frac{{j'\choose j} \prod_{i=j}^{j'-1}(i\varphi+\varepsilon)}{
\prod_{i=0}^{k-j-1} (i\varphi+\varepsilon+z)
\prod_{i=k-2j+1}^{k-j}(i\varphi+z)
\prod_{i=2j-k+1}^{j'-k+j}(i\varphi-z)
} [p_{k-j,j}]_{\hat T}
\end{equation}
\end{prop}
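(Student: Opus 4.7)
The plan is to prove Prop.~\ref{ref:res3} in close parallel with Prop.~\ref{prop:res2}: realize $\S_{j'}\cap A_j$ as a local complete intersection of the expected dimension $k$ inside the affine patch $A_j$, with precisely the listed variables and equations, and then invoke the standard formula expressing the equivariant class of a local complete intersection as the ratio of the product of equation weights by the product of variable weights. The resulting coefficient at $p_{k-j,j}$ then reassembles into \eqref{eq:res3}, with the binomial $\binom{j'}{j}$ arising from $j'!/((j'-j)!\,j!)$ exactly as in Prop.~\ref{prop:res2}.

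First I would set up the patch using the gauge of the proof of Prop.~\ref{prop:res1}, in which $\Qt$ is rigidified and $\Phi$ carries free entries $a_l,b_l,c_l,d_l$ in two privileged rows. By Prop.~\ref{prop:nest}, the condition $(Q,\Qt,\Phi)\in p^{-1}\P(j',2,j')$ is equivalent to $a_0=\cdots=a_{k-j'-1}=c_0=\cdots=c_{k-j'-1}=0$, which eliminates $2(k-j')$ of the $x$-variables of Prop.~\ref{prop:res1} and renders the highest $k-j'$ of the trace equations automatic; what remains are exactly the $2j'$ surviving $x$-variables and the $j'$ equations $\tr\Phi^i=0$ for $i=1,\ldots,j'$ listed in the proposition. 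The $u_{i,1,2}$ for $i=0,\ldots,k-j-1$ appear as in Prop.~\ref{prop:res2}, arising from the entries of $Q_2$ not absorbed by the $\M_1$-equations.

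The more delicate step is to impose the $\Z$-conditions, namely the strict upper triangularity $u_{i,1,1}=u_{i,2,1}=u_{i,2,2}=0$ of each $u_i=\Qt\Phi^iQ$. Two reductions cut these down to the $j'-j$ listed new equations. (i) The $\M_1$-relation $\tr u_i=0$ identifies $u_{i,1,1}=0$ with $u_{i,2,2}=0$. (ii) After imposing $a_i=c_i=0$ for $i<k-j'$, the conditions $u_{i,2,1}=0$ (whose would-be weight $i\varphi+\varepsilon-z$ never appears in the final list) and the conditions $u_{i,1,1}=0$ for $i$ outside $[j,j'-1]$ become automatic, as a consequence of the bilinear relations \eqref{eq:bil} restricted to the $\P(j',2,j')$-locus. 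Rather than performing this redundancy check by hand, I would use a dimension-theoretic shortcut: the set-theoretic dimension $\dim \S_{j'}\cap A_j=\dim\bigsqcup_{j\le i\le j'}(\Z_i\cap A_j)=k$ matches $(2j'+k-j)-(2j'-j)$, so if one exhibits the listed equations as vanishing on $\S_{j'}\cap A_j$ (which is straightforward from the definitions), then the Cohen--Macaulayness of the ambient affine patch upgrades them to a scheme-theoretic complete intersection by codimension alone.

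Once the complete intersection structure is in hand, the weights are immediate: the $x$-weights are inherited from Prop.~\ref{prop:res1} restricted to the surviving index ranges; the $u_{i,1,2}$-weights $i\varphi+\varepsilon+z$ come from the $\hat T$-action on $\Qt_1\Phi^iQ_2$; the $\P$-equation weights are $i\varphi$; and the new $u_{i,1,1}=0$ equations contribute weight $i\varphi+\varepsilon$ for $i=j,\ldots,j'-1$. The ratio formula then yields the coefficient in \eqref{eq:res3} after rearranging the factorials into $\binom{j'}{j}$. The principal obstacle in the whole argument is the redundancy step (ii): even with the dimension-based shortcut, one still needs to verify that the claimed equations do vanish on $\S_{j'}\cap A_j$, which ultimately traces back to a careful unpacking of the bilinear equations \eqref{eq:bil} and of the nesting structure of Prop.~\ref{prop:nest}; this is where I expect the bulk of the calculational work to lie.
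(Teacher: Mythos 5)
Your overall strategy is the same as the paper's: fix the gauge as in Prop.~\ref{prop:res1}, kill $a_0,\ldots,a_{k-j'-1},c_0,\ldots,c_{k-j'-1}$ via the nesting, eliminate entries of $Q$ by the low-index upper-triangularity conditions, and leave the $j'$ trace equations together with $j'-j$ surviving bilinear relations. The gap is in the Cohen--Macaulay shortcut. Exhibiting that the listed $2j'-j$ equations vanish on $\S_{j'}\cap A_j$ and cut out a scheme $V$ of dimension $k$ in the ambient $\mathbb{A}^{2j'+k-j}$ does show, by Cohen--Macaulayness, that $V$ is a complete intersection; and it gives a closed immersion $\S_{j'}\cap A_j\hookrightarrow V$ of schemes of the same dimension. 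But this does \emph{not} force the two schemes to coincide: for instance $\Spec k[x]/(x)$ sits inside $\Spec k[x]/(x^2)$, both of dimension $0$. Since the paper emphasizes that $\S_{j'}$ is non-reduced, you cannot fall back on ``reduced with the same support.'' What you must actually prove is the \emph{reverse} inclusion of ideals---that all the other defining relations of $\S_{j'}\cap A_j$ (the remaining strict upper-triangularity conditions for the higher $u_{i,a,b}$, and the relations induced by $\P(j',2,j')$) already lie in the ideal generated by the listed equations.

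Your concluding paragraph thus points in the wrong direction: verifying that the claimed equations vanish on $\S_{j'}\cap A_j$ is the easy containment (they are special cases of \eqref{eq:bil} and of the trace relations of $\M_1$); the nontrivial work is the redundancy of the remaining constraints, which is precisely what you had hoped to sidestep. This is the content of the paper's (admittedly terse) computation, which records that after substituting the reduced form of $Q$---first column zero, second column with entries $u_{0,1,2},\ldots,u_{k-j-1,1,2}$ followed by zeros---the conditions $u_{i,2,2}=0$ collapse exactly to the $j'-j$ relations $\sum_{r=j}^{j'-1}(-1)^r x_{(i,2)\cup s\setminus(r,1)}u_{r,1,2}=0$, $i=j,\ldots,j'-1$, and nothing more. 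That elimination is unavoidable; once it is done, dimension counting then legitimately upgrades the description to a scheme-theoretic complete intersection.
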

\begin{proof}
The proof is very similar to that of Prop.~\ref{prop:res1} and \ref{prop:res2}, and we sketch it only. 
We use the same gauge fixing as in the proof of the former, and then impose \eqref{eq:defS}; we obtain,
as in the proof of Prop.~\ref{prop:nest}:
\[
a_0=\cdots=a_{k-j'-1}=c_0=\cdots=c_{k-j'-1}=0
\]
and so we only need to impose the $j'$ equations $\tr \Phi^i=0$, $i=1,\ldots,j'$.

Now we introduce the variables $Q$.
Imposing that the $u_i=\Qt\Phi^i Q$ are strict upper triangular results in the following structure:
\[
Q=\begin{pmatrix}
0&u_{0,1,2}
\\
\vdots&\vdots
\\
0&u_{k-j-1,1,2}
\\
0&0
\\
\vdots&\vdots
\\
0&0
\end{pmatrix}
\]
as well as the equations
\[
\sum_{r=j}^{j'-1} (-1)^r x_{(i,2)\cup s\backslash (r,1)} u_{r,1,2}=0,\qquad i=j,\ldots,j'-1
\]
(which we recognize as a special case of \eqref{eq:bil}).
Now we have in total $2j'+k-j$ variables and $2j'-j$ equations, which matches with the dimension $k$ of $\S_{j'}$,
hence the complete intersection property. The second part of the Proposition is justified in the same
way as that of Prop.~\ref{prop:res2}.
\end{proof}

Since $\S_j\overset{\text{set}}{=}\bigcup_{i=0}^j\overline{\Z_i}$, we know that its class in $H_*^{\hat T}(\M_1)$ is a linear combination
of the classes of the $\overline{\Z_i}$ (with positive integer coefficients, the so-called multiplicities). 
Thanks to the formulae above, we can be more specific:
\begin{cor}
One has the linear relations:
\begin{equation}\label{eq:linrel}
[\S_j]_{\hat T}=\sum_{i=0}^j {j\choose i} [\overline{\Z_i}]_{\hat T}
\end{equation}
\end{cor}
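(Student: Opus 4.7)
Both sides of the claimed identity lie in the localized $H_*^{\hat T}(\M_1(k,2,k))$, where by equivariant localization (after inverting finitely many weights in $H_{\hat T}(\cdot)$) the fixed-point classes $\{[p_{k-j,j}]_{\hat T}\}_{j=0}^k$ form a basis. It therefore suffices to check the equality coefficient-by-coefficient in this basis. Substituting \eqref{eq:res2} and \eqref{eq:res3} into the claim, using the standard identity $\binom{j'}{i}\binom{i}{j} = \binom{j'}{j}\binom{j'-j}{i-j}$ to pull out the factor $\binom{j'}{j}$, and cancelling the factor $\prod_{i=k-2j+1}^{k-j}(i\varphi+z)$ common to every term, the claim at the fixed point $p_{k-j,j}$ reduces to the polynomial identity
\[
\prod_{m=0}^{M-1}\bigl((j+m)\varphi+\varepsilon\bigr)
=\sum_{l=0}^{M}\binom{M}{l}\prod_{b=1}^{l}\bigl((k-j-b)\varphi+\varepsilon+z\bigr)\prod_{d=l+1}^{M}\bigl((2j-k+d)\varphi-z\bigr),
\]
where $M:=j'-j$.

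Introduce the variables $x,y$ by $\varphi x:=(k-j-1)\varphi+\varepsilon+z$ and $\varphi y:=(2j-k+1)\varphi-z$, so that $\varphi(x+y)=j\varphi+\varepsilon$ and
\[
(k-j-b)\varphi+\varepsilon+z=\varphi(x-b+1), \qquad (2j-k+d)\varphi-z=\varphi(y+d-1).
\]
Dividing through by $\varphi^{M}$, the polynomial identity above becomes the clean hypergeometric-type identity
\begin{equation}\label{eq:vand-plan}
(x+y)^{(M)}=\sum_{l=0}^{M}\binom{M}{l}(x)_{l}\,(y+l)^{(M-l)},
\end{equation}
with $(x)_{l}=x(x-1)\cdots(x-l+1)$ the falling factorial and $w^{(n)}=w(w+1)\cdots(w+n-1)$ the rising factorial. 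Everything up to this point is routine bookkeeping; the only non-trivial step is \eqref{eq:vand-plan}.

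To prove \eqref{eq:vand-plan} I compare exponential generating functions in a dummy variable $t$. The LHS contributes $\sum_{M\ge0}(x+y)^{(M)}t^{M}/M!=(1-t)^{-(x+y)}$. For the RHS, exchanging the two summations and setting $N=M-l$,
\[
\sum_{l\ge0}\frac{(x)_{l}}{l!}t^{l}\sum_{N\ge0}\frac{(y+l)^{(N)}}{N!}t^{N}
=(1-t)^{-y}\sum_{l\ge0}\frac{(x)_{l}}{l!}\Bigl(\frac{t}{1-t}\Bigr)^{l}
=(1-t)^{-y}\Bigl(1+\frac{t}{1-t}\Bigr)^{x}
=(1-t)^{-(x+y)},
\]
using the binomial series $\sum_{l\ge0}(x)_{l}s^{l}/l!=(1+s)^{x}$. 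The two series agree, which establishes \eqref{eq:vand-plan} and hence the corollary. The main obstacle is locating and proving \eqref{eq:vand-plan}; an alternative proof by induction on $M$ (combining $(x+y)^{(M)}=(x+y+M-1)(x+y)^{(M-1)}$ with Pascal's rule after reindexing) is available but the generating-function argument is shortest.
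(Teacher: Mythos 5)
Your proof is correct. You verified the index bookkeeping accurately: after expanding both sides in the fixed-point basis via \eqref{eq:res2} and \eqref{eq:res3}, the common factor $\prod_{i=k-2j+1}^{k-j}(i\varphi+z)$ cancels, the binomial identity ${j'\choose l}{l\choose j}={j'\choose j}{j'-j\choose l-j}$ factors out ${j'\choose j}$, and clearing the remaining denominators does reduce the claim to your identity \eqref{eq:vand-plan}, which (dividing by $M!$) is exactly the Chu--Vandermonde convolution $\binom{x+y+M-1}{M}=\sum_l\binom{x}{l}\binom{y+M-1}{M-l}$; your generating-function derivation of it is sound as a formal-power-series argument yielding a polynomial identity in $x,y$. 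The paper omits the proof entirely, indicating only that one should check that both sides, as rational functions of $z$, have the same residues --- the same technique it uses in Appendix~\ref{app:Sinv} to verify $S^{-1}S=\mathbf{1}$, where one locates the poles at $z=-n\varphi$ (and at $z=-(r\varphi+\varepsilon)$), shows each residue vanishes by a skew-symmetry of the summand, and compares behaviour at $z\to\infty$. Your route is genuinely different: instead of a pole-by-pole analysis you clear denominators to obtain a single polynomial identity and recognize it as Vandermonde. What this buys is a self-contained, obstacle-free argument that exposes the combinatorial origin of the binomial multiplicities ${j\choose i}$; the residue method is more uniform with the paper's other computation and avoids having to guess the right change of variables $(x,y)$, but requires separate checks at each pole and at infinity. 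Either way, your write-up fills a gap the paper leaves open.
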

The proof is a direct calculation (checking that l.h.s.\ and r.h.s.\ have
same residues as rational functions of $z$), and we shall omit it.

\begin{prop}\label{prop:stabcl}
Conj.~\ref{conj:stab} is true at $n=2$, with the choice of classes
$S_j = [\S_j]_T$.
\end{prop}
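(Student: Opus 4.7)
The plan is to verify the three conditions of Conjecture~\ref{conj:stab} for $S_j:=[\S_j]_T$ at the fixed point $p_{k-j,j}$, and then to establish uniqueness via a triangular argument. Condition (i) is immediate, since after \eqref{eq:defS} we noted $\S_j \overset{\text{set}}{=} \bigsqcup_{0\le i\le j}\Z_i$. Condition (ii) follows by comparing the $j'=j$, inner-summand-$i=j$ entries of \eqref{eq:res2} and \eqref{eq:res3}: the product $\prod_{a=j}^{j-1}(a\varphi+\varepsilon)$ is empty, $\binom{j}{j}=1$, and the denominators coincide, so the leading coefficients agree.

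For condition (iii), the key observation is that $\cf_i(S_j)$ carries in \eqref{eq:res3} the numerator factor $\prod_{a=i}^{j-1}(a\varphi+\varepsilon)$. Passing to the $T$-setting via $\varepsilon=-\ell\varphi$, setting $\varphi=0$ also forces $\varepsilon=0$, so this factor vanishes whenever $i<j$. The denominator under the same specialization becomes a product of integer multiples of $\pm z$, where $z:=z_1-z_2$, and inspection of the index ranges shows that none of these integers is zero, so the denominator is nonzero and the ratio is genuinely $0$.

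For uniqueness, let $S'$ be any class satisfying (i)--(iii) at $p_j:=p_{k-j,j}$. Together with Prop.~\ref{prop:dec} (identifying the top-dimensional irreducible components of $\Z$ with the $\overline{\Z_i}$) and the integer-combination constraint recorded after Conjecture~\ref{conj:stab}, condition (i) forces $S'=\sum_{i=0}^{j}c_i[\overline{\Z_i}]$ with $c_i\in\ZZ$. The summation range in \eqref{eq:res2} gives $\cf_r(\overline{\Z_a})=0$ for $a<r$, so the matrix $(\cf_r(\overline{\Z_a}))_{0\le r,a\le j}$ is upper triangular with diagonal entries that remain invertible under the specialization $\varphi=0$. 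Condition (ii) pins down $c_j=1$, and descending induction on $r$ from $j-1$ down to $0$, using condition (iii), determines each remaining $c_r$ uniquely. Since $S_j$ provides an explicit solution (with $c_i=\binom{j}{i}$, as per \eqref{eq:linrel}), this is the unique one.

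The main obstacle, although essentially a bookkeeping exercise, is the specialization step appearing in both (iii) and the uniqueness argument: one must confirm that no denominator factor in \eqref{eq:res2} or \eqref{eq:res3} vanishes after enforcing $\varepsilon=-\ell\varphi$ and then $\varphi=0$. This comes down to checking that each index appearing in the three product ranges contributes a factor of the form $m z$ with $m$ a nonzero integer, so that the numerator's vanishing is not masked by a pole and the recursive extraction of the $c_r$ is well-defined.
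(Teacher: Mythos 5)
Your argument is correct and follows essentially the same route as the paper: existence is verified by reading off the coefficients from \eqref{eq:res2} and \eqref{eq:res3}, condition (iii) holds because the numerator factor $\prod_{a=i}^{j-1}(a\varphi+\varepsilon)$ vanishes at $\varphi=\varepsilon=0$ while the denominator becomes $\pm z^{k+j'-j}\neq 0$, and uniqueness follows from the upper-triangularity of the restriction matrix $(\cf_r(\overline{\Z_a}))$ (which is $z^{-k}$ times a signed binomial matrix after specialization, hence has invertible diagonal). The only cosmetic slip is your phrasing that each surviving denominator factor has the form $mz$ with $m$ a nonzero integer; in fact each factor $r\varphi\pm z$ or $r\varphi+\varepsilon+z$ specializes to $\pm z$, so $m=\pm1$ — the conclusion is unaffected.
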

\begin{proof}
First, the uniqueness at $n=2$ follows from the fact that all attracting sets are irreducible
(and from the already noted fact that the $S_p$ must be linear combinations of classes of irreducible
components of $\Z$).
Indeed, the matrix of restrictions of (closures of) attracting sets at fixed points is a (square)
upper triangular matrix (explicitly it is $z^{-k}$ times a binomial matrix). Its entries on the diagonal
are nonzero, and therefore the solution to property (iii) of Conj.~\ref{conj:stab} is uniquely
determined by inverting that matrix.

Next, we prove that the $S_j = [\S_j]_T$ do satisfy all the properties of Conj.~\ref{conj:stab}.
Property (i) follows from the definition of the $\S_j$ (or from \eqref{eq:linrel}).
Property (ii) follows from the fact that the multiplicity of $\overline{\Z_j}$ in $\S_j$ is $1$
(\eqref{eq:linrel} again).
Finally, recalling that $\varepsilon=-\ell\varphi$,
when we set $\varepsilon=\varphi=0$ in \eqref{eq:res3},
the coefficient of the fixed point $j$ in $[\S_{j'}]_T$ vanishes as long as $j<j'$.
\end{proof}

\subsection{$R$-matrix}\label{sec:RR}
Finally, we derive the $R$-matrix from the calculations of the previous section.
Define the upper triangular matrix based on \eqref{eq:res3}:
\rem[gray]{here I follow [MO]'s convention, NOT my standard convention as in the m2 files
which is to transpose the $R$-matrix}
\begin{align}\label{S1}
S_{jj'}&=\cf_{p_{k-j,j}}([\S_{j'}]_{\hat T})
\\ \nonumber
&=\frac{{j'\choose j} \prod_{r=j}^{j'-1}(r\varphi+\varepsilon)}{
\prod_{r=0}^{k-j-1} (r\varphi+\varepsilon+z)
\prod_{r=k-2j+1}^{k-j}(r\varphi+z)
\prod_{r=2j-k+1}^{j'-k+j}(r\varphi-z)
}
\end{align}

Similarly, consider
\begin{align*}
\tilde S_{jj'}&=\cf_{p_{k-j,j}}([w \S_{j'}]_{\hat T})
\\
&=S_{k-j\,j'}|_{z\to -z}
\end{align*}
where in the second line we used the fact that $w$ (the nontrivial element of the Weyl group $\ZZ_2$) 
sends the $j^{\rm th}$ fixed point to the $(k-j)^{\rm th}$ one,
and sends $z$ to $-z$.

From the definition \eqref{eq:defR},
\[
\check R = S^{-1} \tilde S
\]
in which \eqref{eq:subtorus} must be imposed.
(More traditionally, one uses the $R$-matrix
$R= P \check R$ where here $P_{ij}=\delta_{i,k-j}$,
and then the expression above provides a lower/upper decomposition for $R$, namely
$R=LU$ with $L=PS^{-1}P$, $U=S|_{z\to -z}$.) 

Explicitly, the matrix $S^{-1}$, expressing
the expansion of fixed points into the stable basis, is given by 
\begin{equation}\label{Sinv1}
S^{-1}_{ij}=
    {j\choose i} \prod_{r=i}^{j-1}(r\varphi+\varepsilon)\prod_{r=0}^{k-j-1}(r\varphi +\varepsilon+z)
    \prod_{r=k+1-j-i}^{k-j}(r\varphi+z)
\end{equation}
\rem[gray]{(this formula can be proven by computing explicitly $S^{-1}S$
as a rational function of $z$ whose residues all vanish),}
(the proof is given in Appendix~\ref{app:Sinv}), so that together,
\begin{equation}\label{eq:finalR}
\check R_{ij'}=\!\!\!\!\!\!\sum_{j=\max(i,k-j')}^{k}\!\!\!\!\!\!
 \frac{   {j\choose i} \prod_{r=i}^{j-1}(r\varphi+\varepsilon)\prod_{r=0}^{k-j-1}(r\varphi +\varepsilon+z)
    \prod_{r=k+1-j-i}^{k-j}(r\varphi+z)
{j'\choose k-j} \prod_{r=k-j}^{j'-1}(r\varphi+\varepsilon)}{
\prod_{r=0}^{j-1} (r\varphi+\varepsilon-z)
\prod_{r=2j-k+1}^{j}(r\varphi-z)
\prod_{r=k-2j+1}^{j'-j}(r\varphi+z)
}
\end{equation}
In order to obtain the spin $\ell/2$ $R$-matrix, one should substitute $\varepsilon=-\ell\varphi$;
note however that the expression is valid as is for the $R$-matrix in a Verma module (``bosonic'') representation
of $Y(\mathfrak{sl}(2))$, when $\varepsilon/\varphi$ is generic (i.e., not integer).

It is interesting to compare our expression \eqref{eq:finalR} to existing ones. The general formula for the rational $\mathfrak{sl}_2$ $R$-matrix in the case of arbitrary spin, written as a function of the Casimir invariant, was obtained as early as in~\cite{KRS} (it is nicely reviewed in the lectures~\cite{FaddeevLect}). There is also a representation for the $R$-matrix in coherent state basis~\cite{Sklyanin}, where the $R$-matrix acts in a tensor product of two finite-dimensional representations, and another one, where the action is via a matrix-valued differential operator in the tensor product of a finite-dimensional representation and a generic Verma module~\cite{ChDer}. However these expressions are not explicit enough to be compared
with ours. The general formula
for the rank 1 {\em trigonometric}\/ $R$-matrix in arbitrary spin, written in the natural spin-up/spin-down basis, was first obtained
by Mangazeev in \cite{Manga-R}. One can
easily take the rational limit (corresponding to going from $U_q(\widehat{\mathfrak{sl}(2)})$ to
$Y(\mathfrak{sl}(2))$), resulting in an expression which is very similar, though not trivially equivalent,
to \eqref{eq:finalR}. More precisely, one can conjecture the following: first reindex
\[
R_{i,j}^{i',j'}=\check R_{i,j'}\qquad i+j=i'+j'=k
\]
Then, setting
$I=J=-\varepsilon$ in \cite[Eq.~(1.1)]{Manga-R}, as well as $\lambda=z$, and then taking the limit $q\to1$,
one has up to overall normalization
\[
R^{\text{Mangazeev}}{}_{i,j}^{i',j'}
=
(-1)^{i+i'}
\frac{\Gamma(\varepsilon/\varphi+i)\Gamma(\varepsilon/\varphi+j)}{\Gamma(\varepsilon/\varphi+i')\Gamma(\varepsilon/\varphi+j')}
R_{i,j}^{i',j'}
\]
Note that the prefactor is simply a diagonal conjugation, which can be blamed on a different normalization of the
basis vectors. This relation can presumably be proven using appropriate hypergeometric identities.

\appendix
\section{$\ZZ_2$ invariance of $\M$ and $\M_1$}\label{app:isom}
\noindent
In this section we denote the dual weight by $k^\vee=n\ell-k$.
\begin{lem}
There are two isomorphisms: \\$\M(k, n, \ell)\simeq \M(k^\vee, n, \ell)$ and $\M_1(k, n, \ell)\simeq \M_1(k^\vee, n, \ell)$.
\end{lem}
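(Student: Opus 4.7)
The plan is to construct the isomorphism as an analogue of Nakajima's reflection functor~\cite{Naka-Weyl}, by packaging the triple $(Q,\Qt,\Phi)$ into a short complex of modules over $R:=\CC[\Phi]/\Phi^\ell$ and then dualizing via the Frobenius structure of $R$. View $V$ as an $R$-module via the action of $\Phi$, and $W$ as a trivial $R$-module; then the equations \eqref{2} and \eqref{2.5} (together with \eqref{nilpotent}, which holds on the stable locus) are exactly the conditions for the two maps
\[
\At: V\to W\otimes_\CC R,\ \At(v)=\sum_{i=0}^{\ell-1}\Qt\,\Phi^{\ell-1-i}v\otimes\Phi^i,\qquad
\Bt: W\otimes R\to V,\ \Bt(w\otimes\Phi^i)=\Phi^iQ(w),
\]
to be $R$-linear, while the moment map equation \eqref{3} is precisely $\Bt\circ\At=0$. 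Stability \eqref{eq:stab} is equivalent to $\At$ being injective.

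Define the dual gauge space $V^\vee:=\mathrm{Coker}\,\At$, an $R$-module of $\CC$-dimension $n\ell-k=k^\vee$, endowed with the induced endomorphism $\Phi^\vee$ satisfying $(\Phi^\vee)^\ell=0$. The key input is that $R$ is a Frobenius algebra, with trace $\tau(\Phi^i)=\delta_{i,\ell-1}$, hence self-injective: the functor $\Hom_R(-,R)$ is exact on finite $R$-modules, preserves $\CC$-dimension, and is involutive (every cyclic $R/\Phi^j$ is self-dual). Applying it to the short exact sequence $0\to V\xrightarrow{\At}W\otimes R\to V^\vee\to 0$ yields
\[
0\to\Hom_R(V^\vee,R)\xrightarrow{\At^\vee}W^\ast\otimes R\xrightarrow{\Bt^\vee}\Hom_R(V,R)\to 0,
\]
in which, after identifying $W\simeq W^\ast$ via the chosen basis, $\At^\vee$ and $\Bt^\vee$ play the roles of $\At$ and $\Bt$ on the dual side. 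The matrices $\Qt^\vee$ and $Q^\vee$ are then extracted from $\At^\vee$ and $\Bt^\vee$ by reading off the coefficients of $\Phi^{\ell-1}$ and $\Phi^0$ via the Frobenius pairing $\tau$. That the resulting triple $(Q^\vee,\Qt^\vee,\Phi^\vee)$ satisfies the defining equations and stability of $M(k^\vee,n,\ell)$ follows from the $R$-linearity of $\At^\vee,\Bt^\vee$ together with $\Bt^\vee\circ\At^\vee=0$, which is dual to $\Bt\circ\At=0$. The isomorphism property (rather than just a morphism) is automatic: applying the construction twice returns $(Q,\Qt,\Phi)$ up to the natural $GL(V)$-equivalence, because double Frobenius duality is the identity and the whole picture is symmetric in $V$ and $V^\vee$.

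For the restriction to $\M_1$, I would verify the following Jordan-type calculation: if $\Phi\in\mathcal O_1$ has the generic type $(\ell^q,r)$ with $k=q\ell+r$, $0\le r<\ell$, then $V\simeq R^q\oplus R/\Phi^r$, which embeds in $W\otimes R\simeq R^n$ (placing $R/\Phi^r$ as the ideal $\Phi^{\ell-r}R$ in the $(q+1)$-th summand) with quotient $V^\vee\simeq R^{n-q-1}\oplus R/\Phi^{\ell-r}$ (reducing to $R^{n-q}$ when $r=0$); this is precisely the generic Jordan type for dimension $k^\vee$. Hence the open dense subset $\M_1^o(k,n,\ell)$ of the reduction is mapped bijectively onto $\M_1^o(k^\vee,n,\ell)$, and the isomorphism extends to their closures by continuity. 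The main technical difficulty, and the step I expect to consume the most effort, is to verify carefully in coordinates that the moment map equation \eqref{3} is genuinely self-dual under this construction, i.e., that $\Bt^\vee\At^\vee=0$ encodes the correct $(k^\vee,n,\ell)$-analogue of \eqref{3}; although this is morally forced by the symmetry of the complex $V\xrightarrow{\At}W\otimes R\xrightarrow{\Bt}V$, the actual matrix computation using $\tau$ is the heart of the proof. A complementary approach, alluded to in the Remark following \eqref{eq:defM1o}, would be to use the embedding $\M(k,n,\ell)\hookrightarrow T^\ast Gr(k,n\ell)=\M(k,n\ell,1)$ and to restrict the classical Grassmannian duality $T^\ast Gr(k,n\ell)\simeq T^\ast Gr(n\ell-k,n\ell)$ to the image, but this reduces to essentially the same compatibility check.
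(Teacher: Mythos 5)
Your overall strategy --- reading $\At$ and $\Bt$ as $R$-linear maps for the Frobenius algebra $R=\CC[\Phi]/\Phi^\ell$ and dualizing the exact sequence $0\to V\to W\otimes R\to V^\vee\to 0$ --- is a faithful repackaging of what the paper does with explicit matrices (the shift relation $\At\Phi=S_n\At$ is exactly your $R$-linearity, and the paper's $\Bt$ spans $\Ker\At^T$, i.e.\ realizes your $\Hom_R(V^\vee,R)$). The $\Qt^\vee,\Phi^\vee$ half of the construction is therefore fine. The genuine gap is in the construction of $Q^\vee$. Your $\Bt^\vee:=\Hom_R(\At,R):W^*\otimes R\to\Hom_R(V,R)$ does not involve $Q$ at all and has the wrong target: the dual datum $Q^\vee$ must be a map into the \emph{new} gauge space $\Hom_R(V^\vee,R)$ of dimension $k^\vee$, not into $\Hom_R(V,R)$, which has dimension $k$. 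As a consequence, the relation $\Bt^\vee\At^\vee=0$ that you single out as ``the heart of the proof'' is a tautology --- it is the composite of two consecutive maps in an exact sequence --- and encodes nothing about the moment map. The missing idea (which is how the paper proceeds, cf.\ \eqref{AtBtduality}) is to define the new $Q$-map by demanding that the composite $\At^{\mathrm{new}}\circ\Bt^{\mathrm{new}}$ equal the Frobenius/transpose dual of $\At\circ\Bt$ (the endomorphism of $W\otimes R$ recording the invariants $u_i=\Qt\Phi^iQ$). This factorization through $\Im(\At^{\mathrm{new}})=\Ker(\At^*)$ exists \emph{precisely because} of the original equation \eqref{3} ($\Bt\At=0$ in your notation), and the dual moment map equation then follows by composing once more with $\At^{\mathrm{new}}$. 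Without this step you have not produced $Q^\vee$, and the duality is only established for the pairs $(\Qt,\Phi)$, i.e.\ for $\P$ rather than $\M$.

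A secondary weakness concerns $\M_1$: exhibiting one model embedding $R^q\oplus R/\Phi^r\hookrightarrow R^n$ with cokernel of generic type does not show that \emph{every} stable configuration with $\Phi\in\mathcal O_1$ has $\Phi^\vee\in\mathcal O_1(k^\vee,\ell)$ --- the Jordan type of a quotient $R^n/V$ depends on the embedding, not just on the abstract type of $V$, and ``extends by continuity'' only gives semicontinuity in the wrong direction. The paper closes this by a pointwise rank estimate on the truncated matrices $\At_j$, showing the number of maximal Jordan towers of $\Phi^\vee$ already saturates its upper bound; some such argument is needed here too. Finally, the verification that $Q^\vee\Qt^\vee$ lands in $T_{\Phi^\vee}\mathcal O_1$ (the trace conditions $\tr((\Phi^\vee)^iQ^\vee\Qt^\vee)=0$) also has to be extracted from the identity $\At^{\mathrm{new}}\Bt^{\mathrm{new}}=(\At\Bt)^*$, which again presupposes the corrected definition of $Q^\vee$.
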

\begin{proof}
Remembering that $\Qt: \CC^k\to \CC^n$, we form the matrix $\At: \CC^k\to\CC^{\ell n}$ as follows:
\bea\label{Astruct}
\At=\begin{pmatrix}\Qt\\\Qt\Phi\\\vdots\\\Qt\Phi^{\ell-1}
\end{pmatrix}
\eea
(We recall that $\Phi^\ell=0$.) Earlier we introduced the stability condition $\mathrm{rank}\,\At=k$. The dimension of the kernel of $\At$ is therefore $k^\vee$. We define a matrix $\Bt:  \CC^{k^\vee}\to \CC^{\ell n}$ by the equation
\bea\label{ABduality}
\Bt^T\At=0
\eea
and the non-degeneracy requirement $\mathrm{rank}\,\Bt=k^\vee$. The matrix $\At$ was defined up to right multiplication by an element of $GL(k, \CC)$, and the matrix $\Bt$ is accordingly defined up to right multiplication by the element of $GL(k^\vee, \CC)$.

Now, due to the structure~(\ref{Astruct}) of the matrix $\At$, multiplication by $\Phi$ from the right essentially amounts to shifting the entries of $A$ upwards by $n$ rows, and filling the lower rows with zeros. Denoting such a shift matrix by $S_n$, we write
\bea\label{shiftm}
 \At \Phi=S_n \At \,.
\eea
Now, multiplying (\ref{ABduality}) by $\Phi$ from the right, we obtain
\bea
 \Bt^T\At\Phi=\Bt^T S_n \At=0\,
\eea
hence $ S_n^T\Bt\in \mathrm{Ker}(\At^T)$ (note that $S_n^T$ acts on $\Bt$ by shifting by $n$ rows downwards). Since $\mathrm{Ker}(\At^T)$ is spanned by the columns of $\Bt$, we may write
\bea\label{SnB}
 S_n^T\Bt=\Bt \Phi^{\vee} ,\quad\quad\quad \Phi^{\vee}\in\mathrm{End}(\CC^{k^\vee})\,.
\eea
Since $(S_n)^\ell=0$, we find that $(\Phi^{\vee})^\ell=0$. Moreover, we can rephrase~(\ref{SnB}) as the requirement that $\Bt$ has the form
\bea\label{Bstruct}
\Bt=\begin{pmatrix}\Qt^\vee (\Phi^\vee)^{\ell-1} \\\Qt^\vee (\Phi^\vee)^{\ell-2}\\\vdots\\\Qt^\vee
\end{pmatrix}
\eea
Therefore we have defined a map
\bea\label{Z2map1}
\{\Qt, \Phi\}^s/GL(k, \CC) \leftrightarrow \{\Qt^\vee, \Phi^\vee\}^s/GL(k^\vee, \CC)
\eea
We have therefore proved the duality of projective retracts $\mathfrak{P}(k, n, \ell)\simeq \mathfrak{P}(k^\vee, n, \ell)$.

We now wish to incorporate the variable $Q$ in the above discussion. To this end, we form the $\Phi$-tower of $Q$ that we will call $A: \CC^{\ell n}\to\CC^k$, analogously to the way it was done in~(\ref{Astruct}) for the $\Qt$-variable:
\bea\label{ATstruct}
A=\{\Phi^{\ell-1} Q , \ldots,  \Phi Q ,   Q\}\,.
\eea
In this case the NS-equation $\sum\limits_{i+j=\ell-1}\,\Phi^{i}\,Q\,\widetilde{Q}\,\Phi^{j}=0$ may be put in the form
\bea\label{NSmomap}
A\widetilde{A}=0\,,
\eea
which is the NS-generalization of the equation $Q\widetilde{Q}=0$ in the hyper-K\"ahler case. To complete the duality, we impose the following condition:
\bea\label{AtBtduality}
(\At A)^T=\Bt B\,.
\eea
Note that in both sides we have matrices of size $\ell n\times \ell n$. A simple consistency check of the above equations is obtained by multiplying from the left by $\At^T$. The l.h.s. is then $(\At A \At)^T=0$ by (\ref{NSmomap}), and the r.h.s. is $\At^T\Bt B=0$ by (\ref{ABduality}).

Since $\Bt$ has already been defined as (\ref{Bstruct}) from (\ref{ABduality}), one can view the equation~(\ref{AtBtduality}) as an equation for $B$. Multiplying (\ref{AtBtduality}) by $\Bt$ from the right and using~(\ref{ABduality}), we get $\Bt B \Bt=0$. Since $B\Bt\in \mathrm{End}(\CC^{k^\vee})$ and $\mathrm{rank}(\Bt)=k^\vee$, it follows that
\bea\label{NSdual}
B \Bt=0\,,
\eea
which is an analogue of (\ref{NSmomap}).

It remains to show that $B$, just like $A$, may be written as a $\Phi$-tower of a basic element $Q^\vee$. To this end we multiply $(\At A)^T$ by $S_n^T$ from the right and use~(\ref{shiftm}): $(\At A)^T S_n^T=(S_n\At A)^T=(\At \Phi A)^T=(\At A S_n)^T=S_n^T(\At A )^T$. Here we have used $\Phi A=A S_n$. By (\ref{AtBtduality}) this implies $\Bt B S_n^T=S_n^T \Bt B$. Using~(\ref{SnB}), $\Bt (B S_n^T-\Phi^{\vee} B)=0$. Since $\Bt: \CC^{k^\vee}\to \CC^{\ell n}$ and $\mathrm{rank}(\Bt)=k^\vee$, we get $B S_n^T-\Phi^{\vee} B=0$, hence
\bea\label{BTstruct}
B=\{ Q^\vee , \ldots,  (\Phi^\vee)^{\ell-2} Q^\vee ,  (\Phi^\vee)^{\ell-1} Q^\vee\}\,.
\eea
Thus $\M(k, n, \ell)\simeq \M(k^\vee, n, \ell)$.

In order to prove the isomorphism $\M_1(k, n, \ell)\simeq \M_1(k^\vee, n, \ell)$, we will show that, under the map~(\ref{Z2map1}), we have $\M^o_1(k, n, \ell)\leftrightarrow \M^o_1(k^\vee, n, \ell)$.

First we set, as earlier, $k=q\ell+r$, where $0\leq r <\ell$. If $\Phi \in \mathcal{O}_1(k, \ell)$, then $\Phi$ has the Jordan form $(\underbracket[0.6pt][0.6ex]{\ell, \ldots, \ell,}_{q\,\textrm{times}} r)$. This may be reformulated as follows. Consider the truncated matrix $\At_j: \CC^k\to \CC^{(\ell-j+1)n}$, defined as
\bea
\At_j=\begin{pmatrix}\Qt\Phi^{j-1}\\\vdots\\\Qt\Phi^{\ell-1}
\end{pmatrix}
\eea
Since $\Qt$ contains the generators of all Jordan towers, $\mathrm{Span}(\textrm{rows of }\At_j)\simeq  \mathrm{Im}(\Phi^{j-1})$. From the Jordan form of $\Phi$ it follows that
\begin{eqnarray}
&&\mathrm{rank}\,\At_j=\mathrm{dim}\;\mathrm{Im}(\Phi^{j-1})=(\ell-j+1)q \quad \textrm{for}\quad j>r\,, \\
&&\mathrm{rank}\,\At_j=\mathrm{dim}\;\mathrm{Im}(\Phi^{j-1})=(\ell-j+1)q+(r-j+1) \quad  \textrm{for}\quad j\leq r\,.
\end{eqnarray}
For the dimension of the null-space we therefore get $\mathrm{dim}\;\mathrm{Ker}\,\At_j^T=(\ell-j+1)n-\mathrm{rank}\,\At_j$. From the fact that the matrices $\At_j$ are nested inside each other it follows that $\mathrm{Ker}\,\At_\ell^T \subset \mathrm{Ker}\,\At_{\ell-1}^T\subset \cdots \subset \mathrm{Ker}\,\At^T$.

Next we consider the truncated $\Bt$-matrices $\Bt_j: \CC^{k^\vee}\to \CC^{n j}$
\bea
\Bt_j=\begin{pmatrix}\Qt^\vee (\Phi^\vee)^{j-1} \\ \vdots\\\Qt^\vee
\end{pmatrix}
\eea
The vectors spanning $\mathrm{Ker}\,\At_j^T$ enter in some of the columns of $\Bt_{\ell-j+1}$, and accordingly the vectors spanning $\mathrm{Ker}\,\At_{j-1}^T$ enter in some of the columns of the matrix $\Bt_{\ell-(j-1)+1}$, which is obtained from $\Bt_{\ell-j+1}$ by adding the additional rows $\Qt^\vee (\Phi^\vee)^{\ell-j+1}$ on top. Moreover we have the embedding $\mathrm{Ker}\,\At_j^T\subset \mathrm{Ker}\,\At_{j-1}^T$. The vectors spanning $\mathrm{Ker}\,\At_{j-1}^T \setminus \mathrm{Ker}\,\At_j^T$ remain linearly independent even upon truncation to the `upper block' $\Qt^\vee (\Phi^\vee)^{\ell-j+1}$. 
Therefore
\bea\label{rankeq1}
\mathrm{rank}(\Qt^\vee (\Phi^\vee)^{\ell-j+1})\geq \mathrm{dim}(\mathrm{Ker}\,\At_{j-1}^T \setminus \mathrm{Ker}\,\At_j^T)\,.
\eea
Using
\bea
\mathrm{dim}(\mathrm{Ker}\,\At_{1}^T)=\ell(n-q)-r,\quad\quad \mathrm{dim}(\mathrm{Ker}\,\At_{2}^T)=(\ell-1)(n-q)-\mathrm{max}(r-1,0)\,,
\eea
we find from~(\ref{rankeq1}) for $j=2$:
\bea\label{rankeq2}
\mathrm{rank}(\Qt^\vee (\Phi^\vee)^{\ell-1})\geq n-q+\mathrm{max}(r-1,0)-r=\begin{cases} 
   n-q\quad\textrm{if}\quad r=0 \\
   n-q-1   \quad\textrm{if}\quad r>0
  \end{cases}
\eea
Since $k^\vee=n\ell-k=(n-q-1)\ell+\ell-r$, there can be at most $n-q-1$ or $n-q$ Jordan towers of height $\ell$ for $r>0$ and $r=0$, respectively. As the number of such maximal towers is measured by $\mathrm{rank}(\Qt^\vee (\Phi^\vee)^{\ell-1})$, the inequality~(\ref{rankeq2}) shows that this bound is saturated. This implies that $\Phi^\vee$ has the Jordan structure $(\underbracket[0.6pt][0.6ex]{\ell, \ldots, \ell,}_{n-q-1\,\textrm{times}} \ell-r)$, i.e. $\Phi^\vee\in \mathcal{O}_1(k^\vee, \ell)$.

Now we need to show that $Q^\vee \Qt^\vee \in T_{\Phi^\vee} \mathcal{O}_1(k^\vee, \ell)$. Since the defining equations of $\mathcal{O}_1(k^\vee, \ell)$ are $(\Phi^\vee)^\ell=0$ and $\tr((\Phi^\vee)^i)=0$ for all $i$, we need to show that $
{d\over d\epsilon}(\Phi+\epsilon Q^\vee \Qt^\vee)^\ell\big|_{\epsilon=0}=0$ and $\tr((\Phi^\vee)^{i-1}Q^\vee \Qt^\vee)=0$ for all $i$. The first part follows from~(\ref{NSdual}), if one recalls the definitions~(\ref{Bstruct}) and~(\ref{BTstruct}). To prove the second part, we look more closely at the relation~(\ref{AtBtduality}). The $\ell n\times \ell n$-matrix $\At A$ may be viewed as an $\ell\times \ell$-matrix with the entries $(\At A)^T_{ij}=\Qt \Phi^{\ell-1+j-i} Q$, analogously $(\Bt B)_{ij}=\Qt^\vee (\Phi^\vee)^{\ell-1+j-i} Q^\vee$. Therefore~(\ref{AtBtduality}) is equivalent to $\Qt \Phi^{\ell-1+j-i} Q=\Qt^\vee (\Phi^\vee)^{\ell-1+j-i} Q^\vee$ for all $i, j$. Since $(Q, \Qt, \Phi)\in \M^o_1(k, n, \ell)$, $\tr(Q\Qt \Phi^{i})=0$ is satisfied for all $i$. It then follows that $\tr(Q^\vee\Qt^\vee (\Phi^\vee)^{i})=0$ for all $i$, as required. Therefore $Q^\vee \Qt^\vee \in T_\Phi \mathcal{O}_1(k^\vee, \ell)$ and $(Q^\vee, \Qt^\vee, \Phi^\vee)\in \M^o_1(k^\vee, n, \ell)$.


\end{proof}

\section{The inverse matrix $S^{-1}$}\label{app:Sinv}

To prove the formula~(\ref{Sinv1}) for $S^{-1}$, we observe that $S^{-1}S$ is a rational function of $z$. We will now compute the residues of $(S^{-1}S)_{ij'}$ at the potential poles and show that they vanish.

First, changing $r\to-r$ in the third factor in the denominator of~(\ref{S1}), we simplify the expression for $S_{jj'}$:
\begin{align*}
S_{jj'}=
\frac{(-1)^{j'-j}{j'\choose j} \prod_{r=j}^{j'-1}(r\varphi+\varepsilon)}{
\prod_{r=0}^{k-j-1} (r\varphi+\varepsilon+z)
\prod_{r=k-j-j',\;r\neq k-2j}^{k-j}(r\varphi+z)
}
\end{align*}
Using this, as well as (\ref{Sinv1}), we find
\bea\label{SinvS1}
\sum\limits_{j=i}^{j=j'}\,S^{-1}_{ij} S_{jj'}=\prod\limits_{r=i}^{j'-1}(r \varphi+\epsilon)\,\sum\limits_{j\in\mathbb{Z}}\,{j' \choose j} {j\choose i}\frac{(-1)^{j'-j}((k-2j)\varphi+z)}{\prod_{r=k-j-j'}^{k-j-i} (r\varphi+z)}
\eea
We have extended the sum over $j$ to the integers, as the binomial coefficients suppress all terms outside of $i\leq j\leq j'$. Clearly, the potential poles of the r.h.s. are at $z=-n \varphi, n\in\mathbb{Z}$. Note that there are exactly $j'-i+1$ factors in the denominator of each term: setting $r=k-j-j'+r'$, these factors correspond to $r'=0\ldots j'-i$. The $\tilde{r}$-th factor vanishes at $z=-n \varphi$ when $k-j-j'+\tilde{r}=n$, i.e. this happens in the $j$-th term in the sum, where $j=k-j'+\tilde{r}-n$. The residue at this point is the sum of $j'-i+1$ terms:
\begin{align*}
\underset{z=-n \varphi}{\mathrm{res}}\left(S^{-1}S\right)_{ij'}=\prod\limits_{r=i}^{j'-1}(r \varphi+\epsilon)\,\sum\limits_{\tilde{r}=0}^{j'-i}{j' \choose j} {j\choose i}\frac{(-1)^{j'-j}(k-2j-n)\varphi}{\prod_{r'=0, r'\neq \tilde{r}}^{j'-i} ((r'-\tilde{r})\varphi)}\big|_{j=k-j'+\tilde{r}-n}=\\
=\prod\limits_{r=i}^{j'-1}(r \varphi+\epsilon)\,\frac{(-1)^{k-n}}{\varphi^{j'-i-1}}\sum\limits_{\tilde{r}\in\mathbb{Z}}{j' \choose j} {j\choose i}\frac{(k-2j-n)}{\tilde{r}! (j'-i-\tilde{r})!}\big|_{j=k-j'+\tilde{r}-n}
\end{align*}
We have again extended the summation to the integers, due to the presence of the factorials in the denominator. The factor $(k-2j-n)\big|_{j=k-j'+\tilde{r}-n}$ in the numerator is skew-symmetric under the change $\tilde{r}\to 2j'-(k-n)-\tilde{r}$. At the same time, one can check that the product of factorials and binomial coefficients in the sum is symmetric under this change. Therefore $\underset{z=-n \varphi}{\mathrm{res}}\left(S^{-1}S\right)_{ij'}=0$.

It is obvious from~(\ref{SinvS1}) that $(S^{-1}S)_{ij'}$ vanishes for $z\to\infty$ if $j'-i>0$ and approaches~$1$ if $j'=i$. We conclude that $(S^{-1}S)_{ij'}=\delta_{ij'}$.

\gdef\MRshorten#1 #2MRend{#1}%
\gdef\MRfirsttwo#1#2{\if#1M%
MR\else MR#1#2\fi}
\def\MRfix#1{\MRshorten\MRfirsttwo#1 MRend}
\renewcommand\MR[1]{\relax\ifhmode\unskip\spacefactor3000 \space\fi
\MRhref{\MRfix{#1}}{{\scriptsize \MRfix{#1}}}}
\renewcommand{\MRhref}[2]{%
\href{http://www.ams.org/mathscinet-getitem?mr=#1}{#2}}
\bibliographystyle{amsplainhyper}
\bibliography{biblio}
\end{document}